\documentclass[11pt]{article}
\usepackage{fullpage}
\usepackage[onehalfspacing]{setspace}
\usepackage[dvipsnames]{xcolor}
\usepackage{amsmath}
\usepackage{amsthm}
\usepackage{amssymb}
\usepackage{newpxtext}
\usepackage{newpxmath}
\usepackage{algorithm}
\usepackage[noend]{algpseudocode}
\usepackage[most]{tcolorbox}
\usepackage{multicol}
\usepackage{caption}
\usepackage{parskip}
\usepackage{tikz}
\usetikzlibrary{arrows.meta,decorations.pathmorphing,shapes.geometric}
\usepackage{booktabs,tabularx,threeparttable,multirow}
\usepackage{subcaption}
\usepackage{xurl}
\PassOptionsToPackage{hyphens}{url}\usepackage{hyperref}
\usepackage[nameinlink,capitalize]{cleveref}
\usepackage{makecell}

\title{Dense Language Generation Made Simple:\\ Deterministic, Randomized, and Multi-Order Algorithms}

\date{}

\newtheorem{theorem}{Theorem}[section]
\newtheorem{lemma}[theorem]{Lemma}

\theoremstyle{definition}
\newtheorem{definition}[theorem]{Definition}
\newtheorem{example}[theorem]{Example}
\newtheorem{remark}[theorem]{Remark}

\newtheorem{fact}[theorem]{Fact}

\AddToHook{env/lemma/begin}{\crefalias{theorem}{lemma}}
\AddToHook{env/conjecture/begin}{\crefalias{theorem}{conjecture}}
\AddToHook{env/corollary/begin}{\crefalias{theorem}{corollary}}
\AddToHook{env/proposition/begin}{\crefalias{theorem}{proposition}}
\AddToHook{env/definition/begin}{\crefalias{theorem}{definition}}
\AddToHook{env/example/begin}{\crefalias{theorem}{example}}
\AddToHook{env/remark/begin}{\crefalias{theorem}{remark}}
\AddToHook{env/question/begin}{\crefalias{theorem}{question}}
\AddToHook{env/condition/begin}{\crefalias{theorem}{condition}}
\AddToHook{env/fact/begin}{\crefalias{theorem}{fact}}
\AddToHook{env/claim/begin}{\crefalias{theorem}{claim}}

\crefname{theorem}{Theorem}{Theorems}
\crefname{lemma}{Lemma}{Lemmas}
\crefname{conjecture}{Conjecture}{Conjectures}
\crefname{corollary}{Corollary}{Corollaries}
\crefname{proposition}{Proposition}{Propositions}
\crefname{definition}{Definition}{Definitions}
\crefname{example}{Example}{Examples}
\crefname{remark}{Remark}{Remarks}
\crefname{question}{Question}{Questions}
\crefname{condition}{Condition}{Conditions}
\crefname{fact}{Fact}{Facts}
\crefname{claim}{Claim}{Claims}

\crefname{Program}{Program}{Programs}
\creflabelformat{Program}{(#2\textup{#1})#3}

\crefname{appendix}{Appendix}{Appendices}
\Crefname{appendix}{Appendix}{Appendices}

\DeclareMathOperator*{\Ex}{\mathbb{E}}
\newcommand{\E}[2]{\ensuremath{\if\relax\detokenize{#1}\relax
      \Ex\left[#2\right]\else
      \Ex\limits_{#1}\left[#2\right]\fi
  }}

\newcommand{\Z}{\mathbb{Z}}

\newcommand{\defeq}{\coloneq}
\newcommand{\set}[1]{\ensuremath{\left\{#1\right\}}}
\newcommand{\abs}[1]{\ensuremath{\left\vert#1\right\vert}}

\definecolor{boxc}{rgb}{0.2, 0.7, 0.6}
\definecolor{linkc}{rgb}{0.1, 0.5, 0.4}
\definecolor{citec}{rgb}{0.2, 0.4, 0.7}
\definecolor{urlc}{rgb}{0.2, 0.6, 0.3}
\hypersetup{
    colorlinks=true,
    linkcolor=linkc,
    citecolor=citec,
    urlcolor=urlc
}

\newtcolorbox[auto counter,number within=section]{mybox}[2][]{colback=boxc!8!white,colframe=boxc!60!black,title={#2},#1}

\begin{document}
\begin{titlepage}

\author{\begin{tabular}{ccc}
\begin{minipage}[t]{0.3\textwidth}\centering
Ziyi Cai\\
\small Rutgers University\\
\small\href{mailto:zc417@cs.rutgers.edu}{zc417@cs.rutgers.edu}
\end{minipage}
&
\begin{minipage}[t]{0.3\textwidth}\centering
Shuangping Li\\
\small Yale University\\
\small\href{mailto:shuangping.li@yale.edu}{shuangping.li@yale.edu}
\end{minipage}
&
\begin{minipage}[t]{0.3\textwidth}\centering
Yiheng Shen\\
\small Meta\\
\small\href{mailto:yhshen@meta.com}{yhshen@meta.com}
\end{minipage}
\\[8ex]
\multicolumn{3}{c}{\begin{tabular}{cc}
\begin{minipage}[t]{0.3\textwidth}\centering
Kangning Wang\\
\small Rutgers University\\
\small\href{mailto:kn.w@rutgers.edu}{kn.w@rutgers.edu}
\end{minipage}
&
\begin{minipage}[t]{0.3\textwidth}\centering
Peng Zhang\\
\small Rutgers University\\
\small\href{mailto:pz149@rutgers.edu}{pz149@rutgers.edu}
\end{minipage}
\end{tabular}}
\end{tabular}}

\maketitle
\thispagestyle{empty}
\setcounter{page}{0}

\begin{abstract}
\emph{Language generation in the limit} is a theoretical framework for studying how a generator can learn to produce new valid strings from a stream of positive examples. In this model, an adversary chooses an unknown language from a countable family and enumerates its elements in an arbitrary order, while the generator must eventually output only elements of the language that have not yet appeared in the enumeration. Reliable generation is thus formalized through two eventual guarantees: validity and novelty relative to the observed data. To further quantify the breadth of the generator's outputs, Kleinberg and Wei (FOCS 2025, STOC 2026) introduced lower density as a measure of output coverage. Given an order representing the importance or relevance of possible outputs, lower density is the asymptotic lower bound, as $n$ grows, on the fraction of the first $n$ elements of the target language that the generator outputs before they appear in the data. Kleinberg and Wei showed that $1/2$ is the optimal lower-density guarantee for deterministic algorithms.

We develop a simple and unified framework for obtaining optimal lower-density guarantees. We first give a deterministic algorithm that recovers the optimal guarantee of $1/2$ with a significantly simpler analysis than prior work. We then demonstrate the flexibility of our framework through two extensions. First, against an oblivious adversary, randomization raises the optimal guarantee to $1-1/e$. Second, for any finite collection of orders, the optimal deterministic and randomized guarantees can be achieved simultaneously with respect to every order, so accommodating multiple notions of importance or relevance entails no loss in the optimal guarantee.
\end{abstract}
 
\newpage

\tableofcontents
\thispagestyle{empty}
\setcounter{page}{0}

\end{titlepage}

\section{Introduction}

Language generation has become one of the central computational tasks in modern AI. Contemporary generative models are expected not merely to recognize or classify text, but also to produce new, meaningful strings. At an abstract level, this task has a simple form: after observing examples from an unknown language, a generator should produce new elements of that language.

Jon Kleinberg and Sendhil Mullainathan \cite{DBLP:conf/nips/KleinbergM24} recently proposed a theoretical model for this task, called \emph{language generation in the limit}, inspired both by modern generative applications and by classical work on language identification \cite{gold1967language,angluin1980inductive}. In this model, an adversary chooses an unknown language $K$ from a countable collection \(\mathcal X\) of languages and enumerates its elements in an arbitrary order. After observing every data point in the enumeration, the algorithm aims to output a string that belongs to \(K\) and has not appeared in the observed data. An algorithm generates in the limit if, after some finite time, all of its outputs satisfy this requirement. Kleinberg and Mullainathan proved a striking positive result: every countable collection of languages admits an algorithm that generates in the limit. Subsequent work has developed learning-theoretic characterizations and stronger notions of generation \cite{DBLP:conf/colt/RamanLT25,DBLP:conf/colt/CharikarP25}, studied the validity--breadth trade-off and notions of representative generation \cite{DBLP:conf/stoc/KalavasisMV25,DBLP:conf/icml/PealeRR25}, and examined robustness and resource constraints, including noise, contamination, feedback, privacy, and bounded memory \cite{DBLP:conf/icml/RamanR25,DBLP:conf/soda/BaiPZ26,DBLP:journals/corr/abs-2511-07417,DBLP:journals/corr/abs-2604-08504,DBLP:journals/corr/abs-2605-30324}. We refer the reader to the actively maintained website \cite{languagegeneration} for a comprehensive and up-to-date compilation of work on language generation in the limit.

The possibility of successful generation in the limit is conceptually important, but it has a fundamental limitation: the definition guarantees eventual validity but not broad coverage of the target language. For example, if the true language is \(\mathbb{Z}^+\), then a generator that outputs only previously unseen powers of \(10\) can be valid forever, but it produces an extremely sparse subset of the language. Similarly, if the true language is English, a generator that only produces sentences of the form ``this generator is good,'' ``this generator is very good,'' ``this generator is very very good,'' and so on, can avoid invalid outputs while representing only a narrow sliver of English. In this sense, generation in the limit rules out eventual hallucination, but it does not by itself rule out mode collapse.

To quantify this issue, Jon Kleinberg and Fan Wei \cite{DBLP:conf/focs/KleinbergW25} introduced density measures for language generation. Fix an ordering of the ground set of strings, interpreted as an order of importance, priority, or relevance. For a language \(K\), let \(K[n]\) denote the first \(n\) elements of \(K\) under this order. If \(D\subseteq K\) is the set of valid strings that are output by a generator before appearing in the observed data, its lower density in \(K\) is
\[
\liminf_{n\to\infty}
\frac{\bigl\vert D\cap K[n]\bigr\vert}{n}.
\]
Thus, lower density asks whether the generator covers a nonvanishing fraction of the important initial portions of the target language. This mirrors the standard role of density in additive combinatorics and number theory, where it measures how large a subset is within an ordered universe; prominent examples include Szemer\'edi's theorem and the Green--Tao theorem \cite{szemeredi1975sets,green2008primes}. In the language-generation setting, the order specifies which strings should count as early or important.

Kleinberg and Wei showed that positive density is achievable. Their first density result gave an algorithm that generates in the limit and guarantees lower density at least \(1/8\) \cite{DBLP:conf/focs/KleinbergW25}. They also observed a simple upper bound of \(1/2\) for deterministic algorithms: even if the algorithm knows the true language \(K\), an adversary can enumerate half of every prefix before the algorithm has a chance to output those elements. In later work, they matched this upper bound and proved that \(1/2\) is the optimal deterministic lower-density guarantee \cite{DBLP:conf/stoc/KleinbergW26}. These results establish a sharp quantitative answer to the validity--breadth tradeoff in the deterministic single-order setting. Their proofs, however, are technically involved.

\subsection{Our Results}

\begin{table}[t]
  \centering
  \caption{Overview of lower-density guarantees.}
  \label{tab:main-results}

  \begingroup
  \setlength{\tabcolsep}{5pt}
  \renewcommand{\arraystretch}{1.6}

  \begin{tabular}{|c|c|c|c|}
    \cline{3-4}
      \multicolumn{2}{c|}{}
        & \textbf{Single Order}
        & \textbf{Multiple Orders} \\
    \hline

    \multirow{4}{*}[1ex]{\textbf{Deterministic}}
      & \multirow{2}{*}[0.5ex]{\textsc{Lower Bound}}
      & $1/2$
      & $1/2$ \\[-1.5ex]
      & &
        {\footnotesize
          (\cite{DBLP:conf/stoc/KleinbergW26},
          \cref{thm:ps_algo_lower_density,thm:ps_algo_partial_lower_density})}
      &
        {\footnotesize
          (\cref{thm:paired_ps_algo_lower_density})} \\
    \cline{2-4}

      & \multirow{2}{*}[0.5ex]{\textsc{Upper Bound}}
      & $1/2$
      & $1/2$ \\[-1.5ex]
      & &
        {\footnotesize (Trivial)}
      &
        {\footnotesize (Trivial)} \\
    \hline

    \multirow{4}{*}[1ex]{\textbf{Randomized}}
      & \multirow{2}{*}[0.5ex]{\textsc{Lower Bound}}
      & $1-1/e$
      & $1-1/e$ \\[-1.5ex]
      & &
        {\footnotesize
          (\cref{thm:rand_ps_algo_lower_density})}
      &
        {\footnotesize
          (\cref{thm:rand_grouped_ps_algo_lower_density})} \\
    \cline{2-4}

      & \multirow{2}{*}[0.5ex]{\textsc{Upper Bound}}
      & $1-1/e$
      & $1-1/e$ \\[-1.5ex]
      & &
        {\footnotesize
          (\cref{thm:rand_ps_algo_lower_density_upper})}
      &
        {\footnotesize
          (\cref{thm:rand_ps_algo_lower_density_upper})} \\
    \hline
  \end{tabular}

  \endgroup

  \vspace{0.8ex}
  \begin{minipage}{0.92\linewidth}
    \footnotesize
    The upper bounds for the multiple-order settings follow from the corresponding single-order cases.\\
    \cite{DBLP:conf/stoc/KleinbergW26} and
    \cref{thm:ps_algo_partial_lower_density} also establish the $1/2$ lower bound
    in the more general model of \emph{partial enumeration}.
  \end{minipage}
\end{table}

In this paper, we give a simple route to dense language generation. Our first main result is a simple deterministic algorithm with a short proof achieving the optimal \(1/2\) lower-density guarantee. We then use the same algorithmic framework for two extensions: a randomized algorithm achieving lower density \(1-1/e\), and algorithms whose density guarantees hold simultaneously with respect to multiple orders. Our results are summarized in \cref{tab:main-results}.

\paragraph{Dense language generation made simple (\cref{sec:ps_algo}).} We propose \nameref{box:ps_algo}, which provides the same $1/2$ lower-density guarantee (\cref{thm:ps_algo_lower_density}) as the algorithm of \cite{DBLP:conf/stoc/KleinbergW26}, but with a significantly simpler structure. As a quick demonstration of our new approach, we subsequently show that adding a preprocessing step before \nameref{box:ps_algo} suffices to recover another result of \cite{DBLP:conf/stoc/KleinbergW26}: there is an algorithm achieving $1/2$ lower density under partial enumeration (\cref{thm:ps_algo_partial_lower_density}). The proof for our deterministic algorithm achieving $1/2$ lower density has been formalized in Lean and is available in the repository \href{https://github.com/pengzhang91/generation-in-the-limit-lib}{pengzhang91/generation-in-the-limit-lib}.

\paragraph{The power of randomization (\cref{sec:rand_ps_algo} and \cref{apx:rand_ps_algo_lower_density}).} If the adversary who chooses the order of enumerating the data points can observe and adapt to the algorithm's past outputs, then clearly no algorithm can achieve a lower density better than $1/2$. When the adversary is oblivious (non-adaptive), however, we show that randomization is helpful: a randomized algorithm can achieve lower density $1 - 1/e$ (\cref{thm:rand_ps_algo_lower_density}); this ratio is tight for all algorithms (\cref{thm:rand_ps_algo_lower_density_upper}). To streamline the presentation, the main body (\cref{sec:rand_ps_algo}) presents only a family of algorithms, \nameref{box:rand_ps_algo}, that achieve lower density arbitrarily close to $1 - 1/e$ (\cref{thm:rand_ps_algo_lower_density_weaker}). The complete version with density exactly equal to $1 - 1/e$, \nameref{box:rand_ps_algo_var}, is deferred to \cref{apx:rand_ps_algo_lower_density}.

\paragraph{Having multiple orders does not hurt (\cref{sec:paired_ps_algo} and \cref{apx:paired_ps_algo_lower_density,apx:rand_grouped_ps_algo_lower_density}).} The order in the definition of lower density can represent the relative importance, priority, or relevance of different possible outputs. In many applications, however, there is no single universally accepted order: different users, evaluators, or downstream tasks may rank the same outputs differently. This motivates the requirement for an algorithm to perform well simultaneously with respect to multiple orders. We show that this additional requirement entails no loss in the optimal guarantee. When lower density is computed with respect to one of finitely many orders, we show that a deterministic algorithm achieves lower density $1/2$ against an adaptive adversary (\cref{thm:paired_ps_algo_lower_density}), and that a randomized algorithm achieves lower density $1 - 1/e$ against an oblivious adversary (\cref{thm:rand_grouped_ps_algo_lower_density}), simultaneously for each of these orders. Thus, the optimal ratios remain the same as in the corresponding single-order settings. For the deterministic result, the main body (\cref{sec:paired_ps_algo}) presents only a family of algorithms, \nameref{box:paired_ps_algo}, that achieve lower density arbitrarily close to $1/2$ (\cref{thm:paired_ps_algo_lower_density_weaker}); the full version, \nameref{box:paired_ps_algo_var}, is deferred to \cref{apx:paired_ps_algo_lower_density}. The randomized result is obtained by \nameref{box:rand_grouped_ps_algo} and proved in \cref{apx:rand_grouped_ps_algo_lower_density}.

In independent and concurrent work, Kleinberg and Wei \cite{DBLP:journals/corr/abs-2604-02385} study more general density notions motivated by spatial embeddings of strings. To put it loosely, their notion requires the generated set to be dense not only in initial prefixes but also throughout the embedding. To the best of our knowledge, neither their results nor ours subsume the other: their work strengthens and generalizes the density requirement, whereas ours establishes simultaneous density guarantees for multiple orders.

\subsection{Technical Overview}
The technical overview below is organized around two ideas. We first explain the core argument behind our simpler deterministic algorithm, and then describe how this argument extends to the randomized and multiple-order settings.

\paragraph{Bounding switch losses suffices.} Our starting point is the original \nameref{box:KM_algo} (\cref{subsec:recap_KM_algo}). Their algorithm maintains a descending chain (under inclusion) of languages that are consistent with all previous announcements, called the \emph{critical chain}. It truncates this chain at a position that grows with the time step and treats the infimum of the resulting chain as its guess for the true language. We identify the obstruction to a $1/2$ density guarantee: some announcements by the adversary are not contained in the current true language guess and therefore falsify it. We call these events, and also the announced integers themselves, \emph{switch losses}. Each previously unannounced switch loss reduces the algorithm's density. Roughly speaking, under the \nameref{box:KM_algo}, the number of such switch losses in a prefix can be proportional to the prefix length, making the lower density asymptotically zero.

Our algorithm, the \nameref{box:ps_algo}, improves the \nameref{box:KM_algo} through a more careful truncation policy for the critical chain. The truncation point must still be allowed to grow, but our algorithm lets it grow slowly enough that, whenever a previously unannounced switch loss occurs, the algorithm must have kept the same true language guess for an exponentially long time and, during that period, announced many new integers smaller than the switch loss. As a result, in any prefix, the number of previously unannounced switch losses is logarithmic in the prefix length and therefore negligible.

\paragraph{Block-based algorithms as a flexible meta-algorithm.} A natural starting point for designing dense language-generation algorithms is a toy regime: a single finite language known to both parties. In this toy regime, one asks for algorithms whose density approaches the desired guarantee as the language size tends to infinity. This raises the following question: can such an asymptotically dense toy-regime algorithm be lifted to the original regime with countably many candidate languages, each countably infinite, and an unknown true language?

We answer this question affirmatively through a block-based meta-algorithm. The algorithm adopts the same slow-growing truncation policy for the critical chain as the \nameref{box:ps_algo}, but its key additional feature is that it reserves disjoint blocks of integers on the fly. When it receives an announcement from the adversary, it first checks whether the announced integer lies in a previously reserved block. If so, it follows the corresponding dense strategy for the toy regime. Otherwise, it reserves a new block consisting of the smallest available integers in the current true language guess. Locally, the algorithm is dense inside each block. Globally, however, a prefix may cut through several blocks, leaving them incomplete; for these incomplete blocks, the within-block density guarantee need not apply. The key step in the analysis is therefore to bound the number of incomplete blocks. We do this by relating incomplete blocks to switch losses, whose number can be bounded using an argument analogous to the analysis of the \nameref{box:ps_algo}.

We do not state the meta-algorithm explicitly as a standalone result. Instead, we demonstrate its flexibility by instantiating it directly in the randomized (\cref{sec:rand_ps_algo}) and multiple-order (\cref{sec:paired_ps_algo}) settings. In the toy regime above, both settings connect to well-studied problems. The Ranking algorithm \cite{DBLP:conf/stoc/KarpVV90} for online bipartite matching yields a randomized algorithm that is asymptotically optimal, as the language size tends to infinity, against an oblivious adversary; low-crossing partitions \cite{DBLP:conf/compgeom/Welzl89,DBLP:journals/dcg/ChazelleW89,DBLP:journals/dcg/Matousek92} yield an algorithm that is asymptotically optimal, again as the language size tends to infinity, simultaneously under multiple orders. These observations guide the corresponding block-based algorithms in the full regime.

\paragraph{Comparison with Kleinberg--Wei.}
Our work is most closely related to the density results of Kleinberg and Wei. Their first density paper shows that the zero-density behavior of the original \nameref{box:KM_algo} can be avoided by giving an algorithm that achieves positive lower density for every true language and, in particular, a $1/8$ lower-density guarantee \cite[Theorem~6.12]{DBLP:conf/focs/KleinbergW25}. Their later paper proves the tight deterministic bound: lower density $1/2$ under full enumeration, and more generally lower density at least $\alpha/2$ when the adversary enumerates an infinite subset $K^\prime\subseteq K$ whose lower density in $K$ is at least $\alpha$ \cite[Theorems~1.3 and~1.6]{DBLP:conf/stoc/KleinbergW26}. Our deterministic results recover these tight bounds. In the full-enumeration setting, our patient-scope algorithm has a much simpler structure and admits a short, direct proof; the partial-enumeration result then follows by applying the same algorithm to the finite-intersection closure of the language family.

Their first proof is based on a global analysis of the language family $\mathcal X$. It starts with an index-based algorithm $A_{\mathrm{acc}}$ that is valid in the limit and whose current indexed language is the true language $K$ at infinitely many time steps \cite[Theorem~3.1]{DBLP:conf/focs/KleinbergW25}. To turn this into a positive-density guarantee, they introduce a topology on the language family, relate limit points to infinite perfect towers, construct dynamic forests of candidate languages, and maintain fallback string lists with token budgets \cite[Definition~4.3 and Claim~6.1]{DBLP:conf/focs/KleinbergW25}. The density proof then classifies missed integers as good or bad, decomposes the bad integers into maximal intervals, and, after discarding a finite prefix and at most one integer from each maximal bad interval of length at least two, injectively maps the remaining integers to earlier algorithm outputs \cite[Lemma~6.16]{DBLP:conf/focs/KleinbergW25}. This machinery yields the $1/8$ lower-density bound.

The later Kleinberg--Wei paper treats the more general partial-enumeration model through a semi-index construction. In the partial-enumeration model, one cannot in general rely on eventually selecting a single candidate language \(L_i\subseteq K\). Kleinberg and Wei therefore replace index-based guesses by \emph{conjunction-based}, or \emph{semi-index-based}, hypotheses: finite intersections of candidate languages, and they show that this representation is equivalent to element-based generation for worst-case density purposes \cite[Lemma~2.3]{DBLP:conf/stoc/KleinbergW26}. The semi-index and identified-intersection machinery addresses the greater generality of partial enumeration and is no longer needed when their argument is specialized to full enumeration. The proof first constructs a time-varying descending chain of such intersections and selects an ``identified'' intersection from this chain. This identified intersection is eventually valid and is full, in the sense of containing the adversary's enumerated set \(K^\prime\), at infinitely many time steps \cite[Theorem~2.4 and Lemma~2.5]{DBLP:conf/stoc/KleinbergW26}. However, the gaps between these full times may be arbitrarily large, so this structural result alone does not yield a lower-density guarantee. To obtain density, they add a separate accounting layer. A warm-up argument uses aggressive guesses, a priority string list, and tokens; the tight proof then introduces \emph{pods}, conceptual batches of reserved strings from which the algorithm still outputs only one string at a time. The main pod lemma maps each bad adversary string, after discarding a finite prefix, injectively to a previously created pod whose elements all precede it. This overcomes the double-counting bottleneck in the warm-up argument and yields the tight \(\alpha/2\) bound \cite[Lemma~3.6 and Theorem~3.5]{DBLP:conf/stoc/KleinbergW26}.

As explained above, our proof takes a more direct and streamlined route. Rather than trying to recover density from a global structural analysis of the language family, or introducing priority lists, tokens, and pods to account for bad integers, we work directly within the original KM critical-chain framework. The key observation is that density loss has a local source: a \emph{switch loss}. Our patient-scope algorithm changes only how the scope moves, waiting through long stable stretches before advancing and backtracking when the focus is falsified. The optimal $1/2$ lower-density bound follows from a direct charging argument inside the original KM framework. 

At a broad conceptual level, the two proofs follow a similar high-level pattern: both build up a collection of strings while tracking a current hypothesis and then use those strings to account for losses that arise later. More specifically, we work directly with the original KM algorithm, whereas Kleinberg--Wei work with a variant $A_{\mathrm{acc}}$; the scope window in our proof plays a role analogous to the priority list in their warm-up argument and to the pods in their tight argument; the length of our scope window plays a role similar to their token parameter in the warm-up argument; and the injective assignment in their pod lemma is closely related to our switch-loss accounting. The algorithms implementing these ideas, however, are quite different. Kleinberg--Wei explicitly create pods of auxiliary unused strings and select outputs from the union of these pods, whereas our algorithm changes only the scope-update rule of the original KM algorithm. Thus, although the two proofs share a high-level charging philosophy and admit these natural point-by-point analogies, our treatment gives a much simpler algorithm and a more streamlined analysis in the full-enumeration setting.

\section{Preliminaries and Notation}
\paragraph{Notation.}
For any non-negative integer $n$, let $[n]$ denote the set $\set{1, \dots, n}$. For any set $S$ of integers and any non-negative integer $n$, let $S[n]$ denote the set of the smallest $n$ elements of $S$.

These notations extend naturally to multi-order settings. Let $\sigma_1, \ldots, \sigma_k$ be $k$ permutations of the universe, which may be countably infinite. For each $i \in [k]$ and non-negative integer $n$, define
\[
[n]_i \defeq \sigma_i^{-1}([n]),
\]
the set of the first $n$ integers under the order $\sigma_i$. For any set $S$ of integers, non-negative integer $n$, and index $i \in [k]$, let $S[n]_i$ denote the set of the smallest $n$ elements of $S$ under the order $\sigma_i$.

\paragraph{Language generation in the limit.}
Throughout this paper, a \emph{language} is an infinite subset of $\Z^+$ (the positive integers). The KM model interprets language generation as a game between an \emph{adversary} and an \emph{algorithm} over $\mathcal{L} = \set{L_1, L_2, L_3, \dots}$, a countably infinite collection of languages. The collection $\mathcal{L}$ is fully accessible to both parties.

The adversary chooses a true language $K = L_{i^*}\in \mathcal{L}$ and keeps it secret. The game then starts and is played in rounds. At each round $t \in \Z^+$, the adversary announces an integer in $K$, and the algorithm then announces a positive integer. The adversary must enumerate the language $K$, i.e.\@, every integer in $K$ is announced by the adversary at some finite time step.

When there is no further clarification, we assume that the algorithm is deterministic (based on both $\mathcal{L}$ and the history of the game), and the adversary can be adaptive.

\begin{definition}[Validity]
We say that an algorithm can \emph{generate in the limit} the true language $K$ if there exists an integer $t_0$ such that, in every round $t \geq t_0$, the integer announced by the algorithm belongs to $K$ and has not previously been announced by either party. The integer $t_0$ may depend on the adversary's strategy.
\end{definition}

In the same work \cite{DBLP:conf/nips/KleinbergM24} that introduced the KM model, Kleinberg and Mullainathan propose an algorithm that can always achieve generation in the limit. We will discuss their algorithm further in \cref{subsec:recap_KM_algo}.

\paragraph{Density of language generation.} Let $A$ be the set of positive integers first announced by the adversary ($A$ for ``attacker''), and let $D$ be the set of positive integers first announced by the algorithm ($D$ for ``defender''). For any set $S \subseteq \Z^+$ and any integer $n \in \Z^+$, define $\mu_n(S) = \bigl\vert[n] \cap S\bigr\vert$. Then $\mu_n(A) + \mu_n(D \cap K) = \mu_n(K)$.

\begin{definition}[Lower Density] The lower density of an algorithm is defined by
    \[
    \liminf_{n \to \infty} \frac{\mu_n(D \cap K)}{\mu_n(K)}.
    \]
\end{definition}

We aim to design an algorithm that, regardless of the adversary's strategy, (1) generates in the limit the true language $K$, and (2) achieves a large lower density. A $1/2$ upper bound for lower density is immediate, as the adversary can always announce the minimum integer in $K$ that has not been announced by either party. Therefore, one may ask the following question:
\begin{quote}
    Is there an algorithm achieving the optimal lower density $1/2$?
\end{quote}

This question was posed by \cite{DBLP:conf/focs/KleinbergW25}, in which they show a $1/8$ lower bound, and was first answered in the affirmative by \cite{DBLP:conf/stoc/KleinbergW26}. In this paper, we present a significantly simpler proof of this fact.

\subsection{Background for Auxiliary Tools}
\paragraph{Online bipartite matching.}
The input to an online bipartite matching problem is a bipartite graph $G = (U \sqcup V, E)$, where the vertices in $U$ arrive one by one. For every vertex $u \in U$, let $N(u)$ denote the set of neighbors of $u$ in $V$. When a vertex $u \in U$ arrives, its incident edges are revealed, and the algorithm must either match $u$ to one of its currently unmatched neighbors or leave $u$ unmatched. Each decision is irrevocable, and the goal is to maximize the size of the resulting matching. More specifically, we seek to maximize the competitive ratio: the worst-case ratio between the size of the matching produced by the algorithm and the size of the offline maximum matching in $G$.

\begin{algorithm}[H]
\caption{Ranking algorithm for online bipartite matching}
\begin{algorithmic}[1]\label{alg:ranking_algo}
\State Choose a uniformly random permutation $\pi$ of the offline vertices $V$.
\State Initialize $M \gets \emptyset$.
\For{each arriving vertex $u \in U$}
    \State Let $N_u$ be the set of unmatched neighbors of $u$ in $V$.
    \If{$N_u \neq \emptyset$}
        \State Let $v$ be the vertex in $N_u$ with minimum rank under $\pi$.
        \State $M \gets M \cup \{(u,v)\}$.
    \EndIf
\EndFor
\State \Return $M$.
\end{algorithmic}
\end{algorithm}

The online bipartite matching problem was introduced by Karp, Vazirani, and Vazirani \cite{DBLP:conf/stoc/KarpVV90}. In the same work, they proposed the Ranking algorithm (\cref{alg:ranking_algo}) and proved that it achieves the optimal competitive ratio. The algorithm is further analyzed in the work of \cite{MISC:KrohnV07} and \cite{DBLP:conf/soda/GoelM08}.

\begin{theorem}[\cite{DBLP:conf/stoc/KarpVV90,MISC:KrohnV07,DBLP:conf/soda/GoelM08,DBLP:journals/sigact/BirnbaumM08,DBLP:conf/soda/DevanurJK13}]\label{thm:ranking_algo}
Let $m = \abs{U}$ and
\[
f(m) = 1-\left(1-\frac1{m+1}\right)^m.
\]

The Ranking algorithm achieves competitive ratio $f(m)$, which converges to $1 - 1/e$ as $m \to \infty$.
\end{theorem}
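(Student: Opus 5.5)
We would follow the classical combinatorial analysis of Karp, Vazirani, and Vazirani, in the corrected form of Goel--Mehta and Birnbaum--Mathieu. The plan has three steps: reduce to a graph with a perfect matching, prove a per-rank inequality by a switching argument, and solve the resulting recurrence.

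\textbf{Step 1 (reduction).} Fix a maximum matching $M^*$ of size $n\le m$. Delete every vertex of $U\sqcup V$ that is not covered by $M^*$; the permutation $\pi$ restricts to a uniformly random permutation of the remaining offline vertices. We need a \emph{monotonicity lemma}: deleting a single vertex from either side decreases the size of the matching produced by Ranking by at most $0$. That is, the output on $G$ is at least as large as the output on $G$ minus that vertex, for every fixed $\pi$ and arrival order. We prove it with the standard alternating-path argument. Run Ranking on both graphs simultaneously. The symmetric difference of the two outputs is a single alternating path starting at the deleted vertex, so the size changes by $0$ or $1$, and the graph with more vertices never has fewer matched edges. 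Iterating reduces the problem to a bipartite graph with a perfect matching of size $n$, in which Ranking's matching is no larger.

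\textbf{Step 2 (key inequality).} Let $x_t$ be the probability that the offline vertex of rank $t$ under $\pi$ is matched, for $t=1,\dots,n$. We aim to show
\[
1-x_t\le \frac1n\sum_{s\le t}x_s .
\]
Two facts give this.
\begin{itemize}
\item \emph{Structural lemma.} Let $u$ be the partner of $v$ in the perfect matching. If $v$ is unmatched under $\pi$, then $u$ is matched under $\pi$ to a vertex of rank at most $\pi(v)$. Otherwise $v$ would have been available, and of lower rank, when $u$ arrived.
\item \emph{Robustness under moving $v$.} Let $\pi'$ be obtained from $\pi$ by removing $v$ and reinserting it at rank $t$. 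We show that if $v$ is unmatched under $\pi'$, then $u$ is matched under $\pi$ to a vertex of rank at most $t$. This again uses an alternating-path comparison between the runs on $\pi$ and $\pi'$.
\end{itemize}
Now take $\pi$ uniform and $v$ uniform, and set $\pi'$ as above. The pair $(\pi',v)$ is distributed as a uniform permutation together with its rank-$t$ vertex. The map $(\pi,v)\mapsto$ ``$u$ matched with rank $\le t$'' therefore gives the bound: the left side $1-x_t$ is at most the probability that the partner of a uniformly random $v$ is matched at rank $\le t$. That probability equals $\frac1n\sum_{s\le t}x_s$, since the matched vertices of rank $\le t$ correspond bijectively to their matched online partners.

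\textbf{Step 3 (recurrence).} Set $S_t=\sum_{s\le t}x_s$. The inequality becomes $S_t(1+\frac1n)\ge 1+S_{t-1}$. By induction,
\[
S_n\ge n\Bigl(1-\bigl(1-\tfrac1{n+1}\bigr)^n\Bigr),
\]
with the extremal case attained when every inequality is tight. The expected matching size is $S_n$, and the optimum is $n$, so the ratio is at least $f(n)$. Since $f$ is decreasing in $n$ and $n\le m$, the ratio is at least $f(m)$. Finally, $(1-\frac1{m+1})^m\to e^{-1}$, which gives the limit $1-1/e$.

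\textbf{Main obstacle.} We expect the hardest step to be the robustness claim in Step 2: that repositioning $v$ changes the matched rank of $u$ only in the controlled direction. This is where the original KVV argument had a gap. We would first try to prove it by tracking the single alternating path created when $v$ is moved, showing inductively over arrivals that every online vertex is matched under $\pi'$ no later in rank than under $\pi$. If that induction fails, we would switch to the randomized primal-dual proof of Devanur--Jain--Kleinberg, which yields the same finite-$m$ bound through a discrete gain-sharing function.
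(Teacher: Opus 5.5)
Your Steps 1--3 follow the Birnbaum--Mathieu argument and correctly yield $\mathbb{E}\abs{M}\ge n\,f(n)$, where $n$ is the size of a maximum matching. The gap is the final step: $f$ is \emph{increasing}, not decreasing. Indeed $(1-\frac{1}{n+1})^n=(1+\frac1n)^{-n}$, and $(1+\frac1n)^n$ increases to $e$, so $f(n)=1-(1+\frac1n)^{-n}$ increases strictly to $1-1/e$ (e.g.\ $f(1)=\frac12$, $f(2)=\frac59$, $f(3)=\frac{37}{64}$). So $n\le m$ gives $f(n)\le f(m)$, the wrong direction. Your argument therefore establishes only the ratio $f(\mathrm{OPT})$, which is strictly weaker than $f(\abs{U})$ whenever $\mathrm{OPT}<\abs{U}$.

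This is exactly the case the paper needs. In \cref{lem:block_vs_per} the theorem is applied with $\abs{U}=m$ and maximum matching $m-1$, where your bound gives only $(m-1)f(m-1)$. A short repair within your framework is amplification: run Steps 2--3 on $\ell$ vertex-disjoint copies of the reduced graph $G_0$. Ranking on the disjoint union behaves on each copy exactly like Ranking on $G_0$ with the induced, still uniform, permutation. Hence $\ell\,\mathbb{E}\abs{M(G_0)}\ge \ell n\,f(\ell n)$ for every $\ell$. Letting $\ell\to\infty$ gives $\mathbb{E}\abs{M(G_0)}\ge(1-1/e)\,n$, and $1-1/e>f(m)$ for all $m$. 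Alternatively, the continuous-rank primal--dual proof of Devanur--Jain--Kleinberg gives $1-1/e$ on every finite instance directly.

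A secondary remark concerns the step you flagged as the main obstacle. The robustness claim is the correct lemma, but the induction you propose for it is false. Take $U=\{u,w\}$ arriving in that order, $V=\{a,v\}$, edges $ua,uv,wa$, and perfect matching $\{uv,wa\}$. Let $\pi'$ rank $a$ first (so $t=2$) and $\pi$ rank $v$ first. Under $\pi'$, both $v$ and $w$ end unmatched, while under $\pi$ the vertex $w$ is matched to $a$. So online vertices are not uniformly better off under $\pi'$.

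The standard route compares each run with the run on $G-v$ instead. Since $v$ is unmatched under $\pi'$, the $\pi'$-run on $G$ coincides with the run on $G-v$, whose induced order is the same for $\pi$ and $\pi'$. There $u$ is matched to some $z$ with $\pi'(z)<t$, hence $\pi(z)\le t$. Now use the strengthened form of your Step 1 lemma: every online vertex weakly prefers its partner in $G$ to its partner in $G-v$, with being unmatched counted as worst. This shows that $u$'s partner under $\pi$ has rank at most $\pi(z)\le t$.
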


This ratio is optimal for the online bipartite matching problem.

\begin{theorem}[\cite{DBLP:conf/stoc/KarpVV90}]\label{thm:online_matching_upper} Let $m = \abs{U}$. No randomized algorithm can achieve an asymptotic competitive ratio better than
\[
1 - \frac1e + o_m(1).
\]
Moreover, for every randomized algorithm on an $m$-by-$m$ bipartite graph, there is a hard instance whose incidence matrix is upper triangular up to a permutation of the columns; equivalently, $N(u_m) = \emptyset$, $N(u_{i + 1}) \subsetneq N(u_i)$, and $\abs{N(u_i) \setminus N(u_{i + 1})} = 1$.
\end{theorem}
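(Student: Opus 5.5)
The plan is to apply Yao's minimax principle to a random instance of exactly the stated upper-triangular form. It then suffices to show that every \emph{deterministic} online algorithm has expected matching size at most $(1-1/e+o_m(1))\,m$ on this instance, while the offline optimum is $m-1$.

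\textbf{The instance.} Let $V=\set{v_1,\dots,v_m}$ and draw a uniformly random ordering $w_1,\dots,w_m$ of $V$. Set
\[
N(u_i)\defeq\set{w_{i+1},\dots,w_m}.
\]
Then $N(u_m)=\emptyset$, each $N(u_{i+1})\subsetneq N(u_i)$, and $N(u_i)\setminus N(u_{i+1})=\set{w_{i+1}}$. So the incidence matrix is upper triangular up to a column permutation, as the statement requires. The offline optimum matches $u_i$ to $w_{i+1}$ for $i<m$, giving size $m-1=m(1-o(1))$.

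\textbf{Symmetry and greedy.} When $u_i$ arrives, the algorithm has seen $N(u_1),\dots,N(u_i)$, and hence the removed vertices $w_1,\dots,w_i$. It has no information about the order of the remaining vertices in $N(u_i)$. Conditioned on the history, the next vertex to disappear, $w_{i+1}$, is therefore uniform over $N(u_i)$, whichever vertex the algorithm matched.

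Consequently, which unmatched neighbor is chosen is irrelevant in distribution. Skipping a match is never better than matching: a coupling or exchange argument shows that a set of available vertices that is larger by one yields a stochastically larger future matching. So it suffices to analyze the greedy algorithm.

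\textbf{Fluid analysis of greedy.} Let $X_i$ be the number of unmatched vertices in $N(u_i)$, where $\abs{N(u_i)}=m-i$ and $X_0=m$. At step $i$, while $X_i>0$, greedy matches $u_i$, which lowers the count by one. Then $w_{i+1}$ is removed; it is an unmatched vertex with conditional probability about $X_i/(m-i)$. The expected drift is thus
\[
-1-\frac{X_i}{m-i}.
\]
In normalized variables $s=i/m$ and $y=X/m$, this drift corresponds to the ODE
\[
y'=-1-\frac{y}{1-s},\qquad y(0)=1,
\]
whose solution is
\[
y(s)=(1-s)\bigl(1+\ln(1-s)\bigr).
\]
This solution vanishes at $s=1-1/e$. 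After that point there are no unmatched neighbors left, so greedy makes at most $(1-1/e+o(1))m$ matches. To make this rigorous, I would use Wormald's differential-equation method, or Azuma's inequality applied to the Doob martingale of $X_i$, to show that $X$ tracks $m\,y(i/m)$ within $o(m)$ with high probability. Taking expectations then gives the $1-1/e+o_m(1)$ bound.

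\textbf{Main obstacle.} I expect the hardest part to be the rigorous concentration step near the hitting time, since the drift term $X/(m-i)$ blows up as $i\to m$. I would handle it by stopping the process at $s=1-1/e+\delta$, where $(1-s)$ is bounded away from $0$, and bounding the remaining matches by $\delta m$.
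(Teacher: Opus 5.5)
The paper gives no proof of this statement; it simply cites Karp--Vazirani--Vazirani. Your outline follows the standard argument for this bound. Several parts are right: the random triangular instance, the fact that $w_{i+1}$ is uniform on $N(u_i)$ given the history, and the fluid solution $y(s)=(1-s)(1+\ln(1-s))$ with root $1-1/e$. Two steps, however, are not justified as written.

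\textbf{Reduction to greedy.} You justify ``skipping is never better'' by saying that a larger available set yields a stochastically \emph{larger} future matching. That inequality points toward skipping, not matching. What you need is the reverse Lipschitz bound $V_i(x)\le 1+V_i(x-1)$, where $V_i(x)$ is the optimal expected number of future matches when $x$ unmatched vertices remain in $N(u_i)$. In words, one extra available vertex is worth at most one extra match. To prove it, couple both runs on the same removal order. Run the optimal strategy for state $x$ on state $x-1$, and skip at the first moment it would use the extra vertex. From then on the two states coincide, so at most one match is lost.

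\textbf{Endgame.} The drift never blows up, since $X_i\le m-i$ and $1-s\ge 1/e$ on the relevant range. The real obstruction is the barrier $X=0$, where the drift formula changes, so the ODE cannot be tracked past $1-1/e$. Also, ``remaining matches at most $\delta m$'' is wrong there, because about $m/e$ steps remain. Instead, track to $s=1-1/e-\delta$, where w.h.p.\ $X=O(\delta m)+o(m)$. Greedy lowers $X$ by at least one per step while $X>0$, so $X$ is exhausted within $O(\delta m)+o(m)$ further steps, and no matches occur afterwards.

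\textbf{A route that avoids both issues.} Let $a_i$ be the number of unmatched vertices in $N(u_i)$ after $u_i$'s decision, with $a_0=m$, and let $\xi_i\in\{0,1\}$ indicate a match at step $i$. Uniformity of $w_{i+1}$ gives, for \emph{every} deterministic algorithm and $0\le i\le m-2$,
\[
\E{}{\frac{a_{i+1}+\xi_{i+1}}{m-i-1}\mid \text{history}}=\frac{a_i}{m-i}.
\]
Telescoping and using $a_{m-1}\ge 0$ yields $\sum_{j=1}^{m-1}\Pr[\xi_j=1]/(m-j)\le 1$. Each probability is at most $1$ and the weights increase in $j$, so the optimum fills the earliest steps first. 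Since $\sum_{j\le T}\frac{1}{m-j}\ge\ln\frac{m}{m-T}$, this gives $\E{}{\sum_j\xi_j}\le(1-1/e)m+1$ exactly. This needs neither the greedy reduction nor any concentration. Averaging over the column order, the easy direction of Yao, then yields a fixed triangular instance, as the ``moreover'' clause requires.
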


\paragraph{Low-crossing partitions.}
Let $m$ be a positive integer, and let $\sigma_1, \dots, \sigma_k$ be $k$ permutations of $[m]$.

We say that a set $G$ is crossed by the cut $(S, [m] \setminus S)$ if
\[
G \cap S \neq \emptyset \qquad \text{and} \qquad G \setminus S \neq \emptyset.
\]
Given a partition of $[m]$ into groups $G_1, G_2, \dots, G_{m/s}$ of equal size $s$, let $c_i(n)$ denote the number of groups crossed by the cut $([n]_i, [m]\setminus [n]_i)$. We call $c_i(n)$ the \emph{crossing number}.

\begin{theorem}[Low-Crossing Partition]\label{thm:low_crossing}
    For every pair of positive integers $s,m$ with $s \mid m$, and every collection of permutations $\sigma_1, \dots, \sigma_k$ of $[m]$, there is a balanced partition $G_1, G_2, \dots, G_{m/s}$ of $[m]$ such that
    \[
    c_i(n) \leq 4 \cdot \left(\frac{m}{s}\right)^{1 - 1/k}
    \]
    for every $i \in [k]$ and every $n \in [m]$.
\end{theorem}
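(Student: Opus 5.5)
The plan is a $k$-level ``$k$-d-tree'' construction. We split by $\sigma_1$, then refine by $\sigma_2$, and so on, so that a cut by $\sigma_i$ can only cross groups at the level where we split according to $\sigma_i$. Write $g = m/s$ for the number of groups. Every cell of the construction will consist of a whole number of groups, i.e.\ have size a multiple of $s$.

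\textbf{Construction.} Let $r = \lfloor g^{1/k}\rfloor \ge 1$, so $r \ge g^{1/k}/2$. Start with the single level-$0$ cell $[m]$, containing $g$ groups' worth of elements. For $i = 1, \dots, k-1$, take each level-$(i-1)$ cell $C$ holding $q$ groups. Sort $C$ by $\sigma_i$ and cut it into $\min(r,q)$ consecutive intervals (in $\sigma_i$-order) whose group counts are $\lceil q/r\rceil$ or $\lfloor q/r\rfloor$. These intervals are the level-$i$ children of $C$. At the last level $i = k$, sort each level-$(k-1)$ cell by $\sigma_k$ and cut it into consecutive blocks of exactly $s$ elements; these blocks are the groups $G_j$. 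The partition is balanced by construction.

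\textbf{Counting.} Let $N_i$ be the number of level-$i$ cells and $M_i$ the maximum number of groups in a level-$i$ cell. Each cell has at most $r$ children for $i<k$, so $N_i \le r^i$ for $i \le k-1$. By induction, using $\lceil\lceil x\rceil / r\rceil = \lceil x/r\rceil$, we get $M_i \le \lceil g/r^i\rceil$ for $i \le k-1$, and $M_k = 1$.

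\textbf{Key locality observation.} Fix an order $i$ and a cut $([n]_i, [m]\setminus[n]_i)$. Consider a group $G$ and its level-$i$ ancestor cell $C'$, with parent $C$. The children of $C$ are intervals of $C$ in $\sigma_i$-order. A prefix cut in $\sigma_i$ therefore splits at most one child of $C$; every other child lies entirely on one side, and so does every group inside it. Hence every crossed group lies in one of at most $N_{i-1}$ crossed level-$i$ cells, one per level-$(i-1)$ cell. This gives
\[
c_i(n) \le N_{i-1} \cdot M_i .
\]
For $i<k$, since $r^i \le g$, we have
\[
c_i(n) \le r^{i-1}\left\lceil \frac{g}{r^i}\right\rceil \le r^{i-1}\cdot \frac{2g}{r^i} = \frac{2g}{r} \le 4\, g^{1-1/k}.
\]
For $i=k$, we have $c_k(n) \le N_{k-1} \le r^{k-1} \le g^{1-1/k}$.

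\textbf{Main obstacle.} The delicate step is the rounding, since $g^{1/k}$ need not be an integer and cells must remain unions of whole groups. Naive ceilings at every level would compound into a factor like $2^k$. The construction above avoids this in two ways. Choosing $r=\lfloor g^{1/k}\rfloor$ keeps the cell counts $N_{i-1}$ exact, with no rounding. The nested-ceiling identity keeps $M_i$ within a single factor $2$ of $g/r^i$. Together these lose only the constant $2\cdot 2 = 4$, which matches the statement. I would double-check the edge cases $q<r$, where a cell is split into fewer than $r$ children, and small $g$; in both, the bounds $N_i\le r^i$ and $M_i\le\lceil g/r^i\rceil$ still hold.
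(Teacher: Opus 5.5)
Your proof is correct, and it takes a different route from the paper's.

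\textbf{The paper's construction.} The paper builds the standard binary $k$-d tree. Starting from $[m]$, each node holding $p>1$ groups is halved into $\lfloor p/2\rfloor$ and $\lceil p/2\rceil$ groups using the order $\sigma_j$ with $j\equiv h+1 \pmod k$, down to depth $d=\lceil \log_2(m/s)\rceil$. The analysis is a doubling argument. The number of crossed nodes at most doubles per level, and it does not grow at the levels split by $\sigma_i$. This gives $c_i(n)\le 2^{d-\lfloor d/k\rfloor}\le 4(m/s)^{1-1/k}$.

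\textbf{Your construction.} You use exactly one $r$-ary split per order, with fan-out $r=\lfloor g^{1/k}\rfloor$, so the tree has exactly $k$ levels. The crossing count becomes the direct product $N_{i-1}\cdot M_i$: the number of parent cells times the largest child. You therefore never need to count how many levels use $\sigma_i$, or to carry forward crossed leaves that stop at uneven depths.

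\textbf{Trade-offs.} The price is that you must take a $k$-th root and absorb two rounding losses. One comes from the floor in $r$, via $r\ge g^{1/k}/2$. The other comes from the ceiling in $M_i\le\lceil g/r^i\rceil\le 2g/r^i$, which uses $r^i\le g$. These are exactly your two factors of $2$. Your bounds are also asymmetric across orders: $\sigma_k$ gets the sharper $g^{1-1/k}$, while the others get $4g^{1-1/k}$. The paper's halving recursion is uniform and needs no root extraction; its factor $4$ comes instead from rounding $d\le \log_2(m/s)+1$ and $\lfloor d/k\rfloor$.

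Both arguments rest on the same key fact. A $\sigma_i$-prefix cut crosses at most one child of any node that has been split by $\sigma_i$ into consecutive intervals. Your edge cases also go through: when $q<r$ the children are single groups, so $M_i\le\lceil M_{i-1}/r\rceil$ still holds. When $r=1$ the bound $2g\le 4g^{1-1/k}$ holds because $g^{1/k}<2$.
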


The theorem follows from the standard balanced $k$-d-tree construction introduced by \cite{DBLP:journals/cacm/Bentley75}, after representing each element by its ranks in the $k$ permutations. We include a proof in \cref{apx:low_crossing} for completeness. More generally, low-crossing partitions are a central ingredient in the theory of partition trees for geometric range searching; see, for example, \cite{DBLP:conf/compgeom/Welzl89,DBLP:journals/dcg/ChazelleW89,DBLP:journals/dcg/Matousek92}.
 
\section{The Patient-Scope Algorithm}\label{sec:ps_algo}
\subsection{Recap of the KM Algorithm}\label{subsec:recap_KM_algo}
Our algorithm builds on the work of Kleinberg and Mullainathan \cite{DBLP:conf/nips/KleinbergM24}. We first introduce the terminology needed to present their algorithm.

As the adversary reveals positive samples, we can discard any language that omits one of them.

\begin{definition}[Consistency] We call a language $L_i$ \emph{consistent} if $L_i$ is a superset of all numbers that have been announced by the adversary so far.
\end{definition}

Intuitively, though not rigorously, ``in the limit'' every consistent language must be either the true language $K$ or a proper superset of $K$. Moreover, the language $K$ must be the minimum consistent language under inclusion. Thus, the main idea of the KM algorithm is to maintain a descending chain of consistent languages. We formally define this chain as follows.

\begin{definition}[Criticality]
  A language $L_i$ is \emph{critical} if the following two conditions hold:
  \begin{itemize}
    \item $L_i$ is consistent;
    \item $L_i \subseteq L_j$ for every consistent language $L_j$ with a lower index $j < i$.
  \end{itemize}
\end{definition}

\begin{remark} At every time step, the critical languages form a descending chain under inclusion. Moreover, after the adversary makes an announcement, the languages that cease to be critical form a suffix of the previous critical chain, while every newly critical language has a larger index than every language that remains critical.
\end{remark}

\begin{lemma}\label{lem:criticality} The true language $K$ is critical at all sufficiently large time steps.
\end{lemma}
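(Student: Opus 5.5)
We prove the two conditions of criticality separately. Let $K = L_{i^*}$. Throughout, we use that the adversary only announces elements of $K$ and eventually enumerates all of $K$.

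\textbf{Consistency.} $K$ is consistent at every time step, because everything announced so far lies in $K$.

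\textbf{Inclusion in lower-indexed consistent languages.} This is the substantive part. There are only finitely many indices $j < i^*$. For each such $j$, we split into two cases.

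\emph{Case $K \subseteq L_j$.} Then whenever $L_j$ is consistent, the required inclusion $K \subseteq L_j$ already holds, so $L_j$ imposes no obstruction at any time.

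\emph{Case $K \not\subseteq L_j$.} Fix a witness $x_j \in K \setminus L_j$. Since the adversary enumerates $K$, the element $x_j$ is announced at some finite time $t_j$. From time $t_j$ on, $L_j$ omits an announced integer, so it is inconsistent. Consistency is monotone: once a language fails to contain an announced integer, it stays inconsistent forever. Hence $L_j$ never again needs to be checked.

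\textbf{Conclusion.} Let $t_0$ be the maximum of the $t_j$ over the finitely many indices $j < i^*$ falling into the second case (take $t_0 = 1$ if there are none). For every $t \ge t_0$, each consistent $L_j$ with $j < i^*$ belongs to the first case and therefore satisfies $K \subseteq L_j$. Together with consistency, this shows $K$ is critical at every $t \ge t_0$.

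The only mild subtlety is finiteness. The set of indices below $i^*$ is finite, so the maximum $t_0$ exists. No step poses a real obstacle.
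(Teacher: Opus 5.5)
Your proof is correct and follows essentially the same argument as the paper. Both note that only finitely many languages have index below $K$, and that each one not containing $K$ is permanently falsified once a witness in $K\setminus L_j$ is announced. You also spell out details the paper leaves implicit: that $K$ is always consistent, that inconsistency is permanent, and how to take the maximum time.
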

\begin{proof}[Proof of \cref{lem:criticality}]
There are only finitely many languages with indices lower than that of the true language $K$. Any such language $L$ that is not a superset of $K$ becomes inconsistent once the adversary announces an integer in $K \setminus L$. Therefore, by definition, the true language $K$ is critical after finitely many time steps.
\end{proof}

Since the critical chain can contain infinitely many languages, the algorithm truncates it at a sufficiently long position and uses the running minimum in the truncated chain as its current hypothesis.

\begin{definition}[Scope] The \emph{scope size} is a positive integer $s$ (used by the algorithms). The \emph{scope} is $\set{L_1, L_2, \dots, L_s}$. We denote $s_t$ to be the scope size at time step $t$.
\end{definition}

\begin{definition}[Focus] The \emph{focus} is the highest-indexed critical language $F$ within the scope. The \emph{focus index} is the index (in $\mathcal{L}$) of the focus.
\end{definition}

\begin{mybox}[label={box:KM_algo},nameref={KM algorithm}]{The Kleinberg--Mullainathan (KM) algorithm}
  \begin{itemize}
    \item In each time step $t$, perform the following operations.
    \begin{itemize}
        \item Set the scope size $s_t \gets t$.
        \item Receive a new integer from the adversary, and update the sets of consistent and critical languages as well as the focus accordingly.
        \item Output the smallest integer in the focus $F$ that has not been announced by either party. (If no focus exists, output any number.)
    \end{itemize}
  \end{itemize}
\end{mybox}

\begin{figure}[t!]\centering
\begin{tikzpicture}[font=\small]
    \node[font=\bfseries,align=center] at (5.6,4.15)
      {The Kleinberg--Mullainathan algorithm};

    \draw[->,thick] (1,-0.7) -- (10,-0.7) node[right] {index};

    \foreach \x in {1.5,3.5} {
      \begin{scope}
        \clip ({\x+0.5},0.9) circle[radius=0.4];
        \fill[red!35] ({\x+0.1},0.5) -- ({\x+0.9},1.3) -- ({\x+0.1},1.3) -- cycle;
        \fill[ForestGreen!35] ({\x+0.1},0.5) -- ({\x+0.9},1.3) -- ({\x+0.9},0.5) -- cycle;
      \end{scope}
      \draw[red!80!black] ({\x+0.5},0.9) circle[radius=0.4];
    }

    \foreach \x in {5.5,7.5} {
      \filldraw[fill=ForestGreen!35,draw=ForestGreen!80!black] ({\x+0.5},0.9) circle[radius=0.4];
    }

    \foreach \x in {1.5} {
        \draw[->,thick,red!80] ({\x+0.9},{0.9+0.1}) -- ({\x+2.1},{0.9+0.1});
    }

    \foreach \x in {1.5} {
        \draw[->,thick,ForestGreen!80] ({\x+0.9},{0.9-0.1}) -- ({\x+2.1},{0.9-0.1});
    }

    \foreach \x in {3.5,5.5} {
        \draw[->,thick,ForestGreen!80] ({\x+0.9},0.9) -- ({\x+2.1},0.9);
    }

    \foreach \x in {4.5,6.5} {
        \draw[->,thick,red!80] ({\x+0.9},2.2) -- ({\x+2.1},2.2);
    }

    \draw[->,thick,red!80] ({3.5+0.5+0.24},{0.9+0.32}) -- ({4.5+0.5-0.24},{2.2-0.32});

    \foreach \x in {4.5,6.5,8.5} {
      \filldraw[fill=red!35,draw=red!80!black] ({\x+0.5},2.2) circle[radius=0.4];
    }

    \draw[->,very thick,dotted]
      ({8.5+0.9-0.4},{2.2-0.4}) to[in=0,out=-90]
      node[midway,right=3,font=\small] {focus change} ({7.5+0.9},0.9);

    \node[left,red!80!black,align=center] at (8.5,3.2) {critical languages before time step $t$};

    \node[left,ForestGreen!80!black,align=center] at (8.5,-0.1) {critical languages after time step $t$};

\end{tikzpicture}
\caption{After receiving the $t$-th integer from the adversary, some previously critical languages may become inconsistent. The algorithm then updates the chain of critical languages and changes the focus accordingly.}
\label{fig:KM_algo}
\end{figure}
 
By \cref{lem:criticality}, after finitely many steps the true language $K$ will be in the critical chain. This implies that the focus $F$ must be a subset of $K$. The validity of the KM algorithm follows immediately.

\begin{fact}[\cite{DBLP:conf/nips/KleinbergM24}] The \nameref{box:KM_algo} can generate in the limit the true language $K$.
\end{fact}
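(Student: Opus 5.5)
The plan is to prove the Fact directly from \cref{lem:criticality}. Let $K = L_{i^*}$. By \cref{lem:criticality}, there is a time $t_1$ such that $K$ is critical at every step $t \ge t_1$. Set $t_0 = \max(t_1, i^*)$. For every $t \ge t_0$, the scope size is $s_t = t \ge i^*$, so $K$ lies in the scope and is critical. The focus $F$ is therefore well defined. Since $F$ is the highest-indexed critical language in the scope, its index is at least $i^*$.

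The key step is to show $F \subseteq K$. If the focus index equals $i^*$, then $F = K$. Otherwise $F$ has index larger than $i^*$. Since $F$ is critical and $K$ is a consistent language with a lower index, the definition of criticality gives $F \subseteq K$.

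It remains to check that the output is well defined and new. The focus $F$ is a language, hence infinite, while only finitely many integers ($2t-1$ of them) have been announced before the algorithm's move at step $t$. So $F$ contains an unannounced integer, and the smallest one exists. The algorithm outputs it; it lies in $F \subseteq K$ and by construction has not previously been announced by either party. Hence every output from step $t_0$ on is valid and novel, which is exactly the definition of generating in the limit.

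The only point needing care is the ordering within a round. The focus is recomputed after the adversary's $t$-th announcement, so criticality at step $t$ refers to the updated consistency status. \cref{lem:criticality} is stated for this updated status, so it applies directly and no real obstacle arises.
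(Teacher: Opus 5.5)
Your proof is correct and follows the same route as the paper. \cref{lem:criticality} makes $K$ permanently critical, so once $s_t = t \ge i^*$ the focus satisfies $F \subseteq K$ by the definition of criticality, and the fresh outputs are valid; you just make explicit the details the paper leaves implicit, such as $K$ lying in the scope and the output being well defined (minor nit: at most, not exactly, $2t-1$ integers have been announced, since the adversary may repeat).
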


Although the \nameref{box:KM_algo} achieves language generation in the limit, it need not achieve any positive lower density. The following example gives an adversarial enumeration under which the algorithm's lower density is $0$.

\begin{example}\label{ex:KM_algo}
Let $L_i = [i] \cup \set{x \in \Z^+ : x \geq 2^i}$. The adversary chooses the true language $K = L_1 = \Z^+$.

\begin{figure}[t!]\centering
\begin{tikzpicture}[font=\small,x=0.6cm,y=0.6cm]
    \tikzset{
        langcell/.style={draw=black!45,line width=0.25pt,minimum width=0.6cm,minimum height=0.6cm,inner sep=0pt},
        missing/.style={langcell,fill=black!12},
        present/.style={langcell,fill=white},
        adv/.style={langcell,fill=red!35},
        alg/.style={langcell,fill=ForestGreen!35}
    }

    \node[font=\bfseries] at (8.5,5.7)
      {The KM algorithm can have zero lower density};

    \foreach \x in {1,...,16} {
        \node at (\x,4.55) {$\x$};
    }

    \foreach \y/\label in {3/$K=L_1$,2/$L_2$,1/$L_3$,0/$L_4$} {
        \node[left] at (0.45,\y) {\label};
        \foreach \x in {1,...,16} {
            \node[missing] at (\x,\y) {};
        }
    }

    \foreach \x in {0,...,16} {
    \node at (\x, -1) {$\vdots$};
    }

    \foreach \x in {1,...,16} {
        \node[present] at (\x,3) {};
    }
    \foreach \x in {1,2,4,5,6,7,8,9,10,11,12,13,14,15,16} {
        \node[present] at (\x,2) {};
    }
    \foreach \x in {1,2,3,8,9,10,11,12,13,14,15,16} {
        \node[present] at (\x,1) {};
    }
    \foreach \x in {1,2,3,4,16} {
        \node[present] at (\x,0) {};
    }

    \foreach \y in {1, 2, 3, 4} {
        \node[adv] at (\y,{4-\y}) {$\y$};
    }

    \foreach \y in {1, 2, 3, 4} {
        \pgfmathtruncatemacro{\py}{2^\y}
        \node[alg] at ({2^\y},{4-\y}) {$\py$};
    }

    \node[adv] at (0,-2.35) {};
    \node[anchor=west] at (0.55,-2.35) {the adversary's announcement};
    \node[missing] at (11,-2.35) {};
    \node[anchor=west] at (11.55,-2.35) {integers not in $L_i$};
    \node[alg] at (0,-3.75) {};
    \node[anchor=west] at (0.55,-3.75) {the algorithm's announcement};
    \node[present] at (11,-3.75) {};
    \node[anchor=west] at (11.55,-3.75) {other integers in $L_i$};
\end{tikzpicture}
\caption{The example $L_i=[i]\cup\set{x\in\Z^+ : x\geq 2^i}$ and $K = L_1 = \Z^+$. The $t$-th row represents the focus $L_t$ at time step $t$; highlighted cells indicate the integers announced by the adversary and by the algorithm. Gray cells are integers not contained in the corresponding language.}
\label{fig:KM_zero_density_example}
\end{figure}
 
At time step $t$, the adversary announces $t$. After the first $t$ rounds, $L_t$ is consistent; moreover, every $L_j$ with $2 \leq j < t$ is inconsistent because $j+1 \notin L_j$, while $L_1$ contains $L_t$. Hence, $L_t$ is the focus of the KM algorithm. The algorithm therefore outputs $2^t$, the smallest integer in $L_t$ that has not yet been announced by either party. Thus, the integers first announced by the algorithm are precisely the powers of two with positive exponents, and the lower density is
\[
\liminf_{n \to\infty}\frac{\mu_n(D\cap K)}{\mu_n(K)} = \liminf_{n \to\infty}\frac{\mu_n(D)}{\mu_n(\Z^+)} = \liminf_{n \to\infty}\frac{\lfloor\log_2 n\rfloor}{n} = 0.
\]
\end{example}

Before introducing our algorithm, we first examine why the \nameref{box:KM_algo} can have zero lower density, as illustrated by \cref{ex:KM_algo}. We use the following terminology in the analysis.

\begin{definition}[Switch Loss]
A \emph{switch loss} is an integer announced by the adversary that is not in the focus from the end of the previous time step. Equivalently, its announcement makes that focus inconsistent.
\end{definition}

\begin{fact}\label{fac:KM_one_to_one}
Apart from switch losses, any integer first announced by the adversary lies in the focus from the end of the previous time step. Moreover, it is at least as large as the \nameref{box:KM_algo}'s output in the previous time step.
\end{fact}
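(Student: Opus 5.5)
The plan is to argue directly from the definitions of the \nameref{box:KM_algo}, the focus, and switch losses. Fix a time step $t$, let $F$ denote the focus at the end of step $t-1$, and let $x$ be the integer announced by the adversary at step $t$. Assume $x$ is announced for the first time at step $t$ and is not a switch loss. By the definition of switch loss, $x$ is not a switch loss exactly when $x \in F$. The first claim is then immediate.

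For the second claim, let $y$ be the algorithm's output at step $t-1$. By the description of the \nameref{box:KM_algo}, $y$ is the smallest integer in $F$ that had not been announced by either party by the time the algorithm spoke at step $t-1$. This includes the adversary's announcement at step $t-1$. The argument then uses two facts:
\begin{itemize}
\item $x \in F$, from the first part;
\item $x$ had not been announced by either party before step $t$. The adversary had not announced it, since $x$ is first announced by the adversary at step $t$. The algorithm had not announced it either, since otherwise $x$ would not belong to $A$, the set of integers first announced by the adversary.
\end{itemize}
So at the moment the algorithm chose $y$, $x$ was an unannounced element of $F$. Minimality of $y$ then gives $x \ge y$. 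Equality is impossible, because $y$ has now been announced by the algorithm. This is consistent with the ``at least as large'' statement and in fact shows strict inequality.

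The only subtle step is pinning down the reference point. The focus ``from the end of the previous time step'' must be the same set from which the algorithm drew its output $y$. This holds because the focus at the end of step $t-1$ is computed after the adversary's step-$(t-1)$ announcement and before the output, and the scope size does not change between those two moments. One caveat: if no focus exists at the end of step $t-1$, this convention has to be set aside. That happens only at finitely many early steps, since by \cref{lem:criticality} $K$ is eventually critical and lies in the scope. There the statement can be read as vacuous, or such steps can be absorbed into a finite prefix, and I would note this explicitly.
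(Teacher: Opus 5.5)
Your proof is correct and follows the paper's argument: an announcement that is not a switch loss leaves the previous focus consistent, so it lies in that focus, and the algorithm's previous output was the minimum unannounced element of that same focus. Your extra points — that this is the focus the output was drawn from, that the inequality is strict, and that the finitely many early steps with no focus can be set aside — are correct details the paper leaves implicit.
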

\begin{proof}[Proof of \cref{fac:KM_one_to_one}] If the adversary's announcement is not a switch loss, then it does not make the focus from the previous time step inconsistent, even though the focus may change because $s_t$ increases. Thus, the announced integer lies in the previous focus. The claim then follows because the \nameref{box:KM_algo} always outputs the smallest integer in the focus $F$ that has not yet been announced by either party.
\end{proof}

Suppose the algorithm has run for sufficiently many steps so that all subsequent outputs lie in the true language $K$. By \cref{fac:KM_one_to_one}, within any prefix $[n]$, and excluding switch losses, the number of integers first announced by the algorithm is at least the number of integers first announced by the adversary. Therefore, if a prefix $[n]$ contains too many switch losses, say a number proportional to $n$, then the algorithm's lower density can be small.

In \cref{ex:KM_algo}, every integer that is not a power of two is a switch loss, and hence the lower density is zero. In fact, even if we replace $s_t = t$ by any monotone and unbounded function of $t$, such as $\lceil\log t\rceil$, one can still construct a similar example with zero lower density.

\subsection{Our Algorithm}
The main idea of our algorithm is to let the scope grow very slowly, and even shrink when necessary, while still ensuring that it eventually contains the true language $K$.

\begin{mybox}[label={box:ps_algo},nameref={patient-scope algorithm}]{The patient-scope algorithm}
  \begin{itemize}
    \item Initially, at time $t = 0$, set the scope size $s_0 = 1$, and set the focus-change count $\tau = 1$.
    \item In each time step $t$, perform the following operations.
    \begin{itemize}
        \item Set the scope size $s_t = s_{t - 1}$.
        \item Receive a new integer from the adversary, and update the sets of consistent and critical languages accordingly.
        \item If the focus becomes inconsistent, run the \nameref{box:bt_algo}.
        \item Otherwise, (if the focus is still consistent):
        \begin{itemize}
            \item If the focus has not changed during the previous $2^\tau$ time steps, increase the scope size $s_t$ by one and update the focus accordingly. If the focus changes, increase the focus change count $\tau$ by one.
        \end{itemize}
        \item Announce the smallest integer in the focus $F$ that has not been announced by either party.
    \end{itemize}
  \end{itemize}
\end{mybox}

\begin{mybox}[label={box:bt_algo},nameref={backtracking algorithm}]{The backtracking algorithm}
\begin{itemize}
            \item Increase the focus change count $\tau$ by one.
            \item If there is no consistent language within the scope $\{L_1, L_2, \ldots, L_{s_t}\}$, increase the scope size $s_t$ until there is one.\footnote{Note that the true language $K$ is always consistent, and so (by the proof of \cref{lem:ps_algo_valid}) this ``if'' is only true for a finite number of times.}
            \item Otherwise, find the highest-indexed language $L_i$ within the scope that was critical before the announcement in this time step and is still critical after the announcement. If there is no such language $L_i$, choose $L_i$ to be the lowest-indexed consistent language in the scope instead.\footnote{Again, note that because of the true language $K$, this ``if'' is only true for a finite number of times.} Reduce the scope size $s_t$ to be equal to $i$, the index of $L_i$.
        \end{itemize}
\end{mybox}

\begin{figure}[t!]\centering
\begin{tikzpicture}[font=\small]
    \node[font=\bfseries,align=center] at (5.6,4.15)
      {The patient-scope algorithm};

    \draw[->,thick] (1,-0.7) -- (10,-0.7) node[right] {index};

    \foreach \x in {1.5,3.5} {
      \begin{scope}
        \clip ({\x+0.5},0.9) circle[radius=0.4];
        \fill[red!35] ({\x+0.1},0.5) -- ({\x+0.9},1.3) -- ({\x+0.1},1.3) -- cycle;
        \fill[ForestGreen!35] ({\x+0.1},0.5) -- ({\x+0.9},1.3) -- ({\x+0.9},0.5) -- cycle;
      \end{scope}
      \draw[red!80!black] ({\x+0.5},0.9) circle[radius=0.4];
    }

    \foreach \x in {5.5,7.5} {
      \begin{scope}
        \clip ({\x+0.5},0.9) circle[radius=0.4];
        \fill[ForestGreen!15] ({\x+0.1},0.5) rectangle ({\x+0.9},1.3);
        \foreach \offset in {-0.6,-0.4,...,0.6} {
          \draw[ForestGreen!80!black,dashed] ({\x+0.1+\offset},0.5) -- ({\x+0.9+\offset},1.3);
        }
      \end{scope}
      \draw[ForestGreen!80!black,dashed] ({\x+0.5},0.9) circle[radius=0.4];
    }

    \foreach \x in {1.5} {
        \draw[->,thick,red!80] ({\x+0.9},{0.9+0.1}) -- ({\x+2.1},{0.9+0.1});
    }

    \foreach \x in {1.5} {
        \draw[->,thick,ForestGreen!80] ({\x+0.9},{0.9-0.1}) -- ({\x+2.1},{0.9-0.1});
    }

    \foreach \x in {3.5,5.5} {
        \draw[->,dashed,thick,ForestGreen!80] ({\x+0.9},0.9) -- ({\x+2.1},0.9);
    }

    \foreach \x in {4.5,6.5} {
        \draw[->,thick,red!80] ({\x+0.9},2.2) -- ({\x+2.1},2.2);
    }

    \draw[->,thick,red!80] ({3.5+0.5+0.24},{0.9+0.32}) -- ({4.5+0.5-0.24},{2.2-0.32});

    \foreach \x in {4.5,6.5,8.5} {
      \filldraw[fill=red!35,draw=red!80!black] ({\x+0.5},2.2) circle[radius=0.4];
    }

    \draw[->,very thick,dotted]
      ({8.5+0.9-0.4},{2.2-0.4}) to[in=120,out=-90]
      node[midway,above=3,right=3,font=\small] {focus change} ({3.5+0.9-0.4},{0.9+0.4});

    \node[left,red!80!black,align=center] at (8.5,3.2) {critical languages before time step $t$};

    \node[left,ForestGreen!80!black,align=center] at (8.5,-0.1) {critical languages after time step $t$};

\end{tikzpicture}
\caption{After receiving the $t$-th integer from the adversary, some previously critical languages may become inconsistent. Starting from the old focus, the algorithm backtracks along the old critical chain to find the highest-indexed consistent language. It then sets the scope to that index, thereby truncating the critical chain at that position. The dashed vertices indicate the part of the new critical chain that would remain if the scope were unchanged from the previous time step.}
\label{fig:ps_algo}
\end{figure}
 
The \nameref{box:ps_algo} differs from the \nameref{box:KM_algo} in two main respects:
\begin{enumerate}
  \item Instead of advancing the scope by one index at the beginning of each time step, the algorithm keeps the scope fixed until it has announced a sufficiently large number of integers. These announcements help reduce the density of switch losses.
  \item When a switch loss occurs, the algorithm invokes the \nameref{box:bt_algo}, starting from the old focus and backtracking along the old critical chain to find the highest-indexed consistent language. It then truncates the critical chain at that position by setting the scope to that index, as illustrated in \cref{fig:ps_algo}. This is analogous to the Knuth--Morris--Pratt (KMP) algorithm \cite{DBLP:journals/siamcomp/KnuthMP77} for pattern matching: when a mismatch occurs, the KMP algorithm backtracks using the failure function until it finds the first compatible partial match.
\end{enumerate}

We first show that, although the scope of the \nameref{box:ps_algo} can shrink, this does not affect the algorithm's validity.

\begin{lemma}[Validity]
\label{lem:ps_algo_valid}
The \nameref{box:ps_algo} can generate in the limit the true language $K$.
\end{lemma}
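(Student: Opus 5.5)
Let $K = L_{i^*}$. The goal is to show that after finitely many steps the scope size satisfies $s_t \ge i^*$ forever and the focus is always critical. Then the focus is a subset of $K$ (as in the KM argument), so every output lies in $K$ and is new by construction. By \cref{lem:criticality}, there is a time $t_1$ after which $K$ is critical at every step; in particular every language $L_j$ with $j<i^*$ that is not a superset of $K$ is inconsistent from $t_1$ on. So after $t_1$, whenever $s_t \ge i^*$, the critical languages in the scope form a chain containing $K$, and the focus (the highest-indexed one) is contained in $K$.

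The proof then has two steps. \emph{Step 1: once $s_t\ge i^*$ after time $t_1$, the scope never drops below $i^*$.} The scope only decreases inside the \nameref{box:bt_algo}. At such a step, $K$ was critical before the announcement and is still critical after it, because $K$ is always consistent and criticality of $K$ holds throughout after $t_1$. Hence the highest-indexed language in the scope that was critical before and after has index at least $i^*$, and the new scope is at least $i^*$. The two fallback branches (no consistent language in scope, or no language critical both before and after) cannot fire, since $K$ is in scope and qualifies. \emph{Step 2: the scope eventually reaches $i^*$ after $t_1$.} The fallback branches fire only finitely often; each firing sets or raises the scope to a consistent language's index, and I will check that the footnoted claims of finiteness follow from the same observation as Step 1. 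Suppose instead $s_t<i^*$ for all $t\ge t_1$. Then the scope, and hence the focus index, takes values in the finite set $[i^*-1]$. Each focus change increments $\tau$. I would argue that the focus changes only finitely often: a backtrack happens only when the current focus becomes inconsistent, and the languages with index below $i^*$ can become inconsistent at most once each, which gives at most $i^*$ backtracks. Focus changes caused by scope increases are also bounded while $s_t<i^*$, since $s_t$ then increases only between backtracks and by at most $i^*$ each time. So $\tau$ is eventually constant, and the focus eventually stays consistent and unchanged. After $2^\tau$ further steps the scope must increase by one, and repeating this at most $i^*$ times contradicts $s_t<i^*$.

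The main obstacle is the bookkeeping in Step 2. I need to make precise that, once the scope is stuck below $i^*$, backtracking can occur only finitely often. The approach is to observe that a backtrack requires the old focus to become inconsistent, and inconsistency is permanent. Hence each of the finitely many indices below $i^*$ can trigger at most one backtrack. Once that holds, the patience rule forces the scope to grow, and it is then monotone up to $i^*$, which Step 1 makes permanent.
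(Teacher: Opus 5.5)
Your proof is correct and follows the same route as the paper's. Once $K$ is permanently critical it always qualifies as the backtracking target, so the scope cannot fall below $i^*$; scope decreases require a language of index below $i^*$ to become permanently inconsistent, so the patience rule eventually pushes the scope up to $i^*$, after which the focus lies in $K$. You simply spell out the patience step that the paper leaves implicit.

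One small bookkeeping point: the focus can also change while the scope stays fixed. This happens when a lower-indexed language becomes inconsistent and a higher-indexed language in the scope becomes newly critical. Your finite count of focus changes should include this case, and the same permanence-of-inconsistency observation bounds it by $i^*-1$.
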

\begin{proof}
It is clear from its description that the \nameref{box:ps_algo} only announces numbers that have not been previously announced by either party.

Recall (from \cref{lem:criticality}) that the true language $K = L_{i^*}$ is critical at all sufficiently large time steps. After that point, the scope size cannot go from being at least $i^*$ to being strictly less than $i^*$. Moreover, the scope size eventually becomes at least $i^*$, since the scope size can only decrease when a language within the scope becomes inconsistent.

When $K$ is critical and within the scope, every number announced by the \nameref{box:ps_algo} is in $K$.
\end{proof}

Now we have established that the true language $K$ is critical at all sufficiently large time steps, and that the \nameref{box:ps_algo} always announces a new integer in the true language $K$ once $K$ is critical and within the scope. To simplify our exposition, we map $K$ to $\Z^+$ (with the ordering in $K$ preserved) and assume $K = \Z^+$. Formally, we apply the partial map that sends each integer $x \in K$ to its rank in $K$ and is undefined outside $K$.

We will use the following observation, analogous to \cref{fac:KM_one_to_one}.

\begin{fact}\label{fac:ps_one_to_one}
Apart from switch losses, any integer first announced by the adversary lies in the focus from the end of the previous time step. Moreover, it is at least as large as the output of the \nameref{box:ps_algo} in the previous time step.
\end{fact}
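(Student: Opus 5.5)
The proof follows that of \cref{fac:KM_one_to_one}. The one new point is that the focus can change inside a time step, both through the scope increment and through the \nameref{box:bt_algo}. So the first thing to fix is which focus the statement refers to: the focus at the \emph{end} of time step $t-1$. This is exactly the focus used to produce the algorithm's output in step $t-1$. Call it $F$, and let $x_t$ be the adversary's announcement in step $t$.

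For the first claim, suppose $x_t$ is first announced by the adversary and is not a switch loss. By the definition of switch loss, $x_t$ does not make $F$ inconsistent. Since $F$ was consistent at the end of step $t-1$, it contained every earlier adversary announcement. Remaining consistent after step $t$ therefore means exactly that $x_t \in F$. Nothing else is needed here. Any later change of focus during step $t$, such as a scope increment, happens after the announcement and does not affect membership in $F$.

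For the second claim, let $y$ be the algorithm's output in step $t-1$. By the rule of the \nameref{box:ps_algo}, $y$ is the smallest integer in $F$ not announced by either party through the adversary's move and the algorithm's move of step $t-1$, excluding $y$ itself. Now $x_t \in F$ and $x_t$ is first announced at step $t$. So before step $t-1$'s output was chosen, $x_t$ had been announced by neither party, and $x_t$ was a candidate at that moment. Minimality of $y$ then gives $y \le x_t$. In fact $y < x_t$, since $x_t \neq y$ by the assumption that $x_t$ is first announced by the adversary.

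The only delicate step is the bookkeeping of which focus is meant. I would state explicitly that the focus used for the output in step $t-1$ is the final focus of that step, after any backtracking or scope change. Every focus modification in the algorithm occurs before the announcement line, so this focus coincides with "the focus from the end of the previous time step" in the definition of switch loss. The edge case where no focus exists at the end of step $t-1$ also needs a remark. I would treat such a step as producing a switch loss, which is vacuous for the statement. Alternatively, I would note that this happens only finitely often, by the footnotes of the \nameref{box:bt_algo} and \cref{lem:ps_algo_valid}.
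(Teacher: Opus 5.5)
Your proof is correct and follows the paper's route (the paper omits the proof as analogous to the KM fact): a non-switch-loss announcement $x_t$ lies in the previous focus, which is consistent, and minimality of the previous output $y$ then gives $y \le x_t$, strictly since $x_t$ was first announced by the adversary. The no-focus edge case you flag cannot actually occur here: the only scope decrease is in the \nameref{box:bt_algo}, which always leaves a consistent language in the scope, and the lowest-indexed consistent language is automatically critical, so a focus exists at the end of every step.
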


We omit the proof of \cref{fac:ps_one_to_one}, since it is analogous to the proof of \cref{fac:KM_one_to_one}.

Next, we bound the number of previously unannounced switch losses (i.e.\@ those first announced by the adversary) in every prefix.

\begin{lemma}[Charging Lemma]\label{lem:ps_algo_switch_loss}
For any prefix $[n]$, the number of previously unannounced switch losses in $[n]$ that occur after the algorithm generates $K$ is at most $\log_2 n$.
\end{lemma}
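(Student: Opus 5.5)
Let $x_1 < x_2 < \dots < x_r$ be the previously unannounced switch losses in $[n]$ that occur after the algorithm generates $K$. Under the identification $K=\Z^+$, we have $x_r \le n$. The goal is $r \le \log_2 n$. The plan is to show that just before each such switch loss the focus had been stable for an exponentially long stretch. During that stretch the algorithm announced many distinct integers of $K$ that lie below the switch loss, and stretches belonging to different switch losses are disjoint. Counting then gives $x_j \ge 2^{j}$, or at least $\sum_j 2^{\tau_j}$ is bounded by $n$.

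\textbf{Step 1: the focus was stable for a long time.} Each switch loss triggers the backtracking algorithm, which increments $\tau$. Every other focus change also increments $\tau$, whether caused by a scope increase that moves the focus or by a consistency update. Hence $\tau$ strictly increases along the sequence of focus changes, and the $j$-th relevant switch loss occurs with $\tau \ge j$. I would then argue that at the moment of a switch loss with counter value $\tau$, the focus $F$ that is falsified has been the focus for a while. The scope only grows after $2^\tau$ quiet steps, so the focus cannot change "for free" quickly. The main obstacle is a focus change caused by a newly critical language inside the scope, without any scope growth or inconsistency. After $K$ is critical, the scope is fixed between growth events, and languages of index $\le s_t$ that become critical only do so through inconsistency of others, which is itself a switch loss or a $\tau$-increment. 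So I would refine the claim: between consecutive $\tau$-increments, either $2^{\tau}$ steps with an unchanged focus elapse, or the increment is a switch loss. I expect to need a slightly delicate case analysis here, possibly charging instead to the last scope increase, which required $2^{\tau}$ quiet steps with a fixed focus.

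\textbf{Step 2: the algorithm's outputs during the stretch lie below the switch loss.} During a stretch of $2^{\tau}$ steps with fixed focus $F \subseteq K$ (valid, since we are past the generation time), the algorithm announces the smallest unannounced elements of $F$ in increasing order. A switch loss $x$ is not in $F$. That alone does not force the outputs to be below $x$, so I would use that $x$ is \emph{previously unannounced} together with \cref{fac:ps_one_to_one} applied to the last quiet stretch. I expect to have to refine "below" to "distinct integers of $K$ that were announced in a stretch ending before $x$ was announced". Then I would count differently: the number of distinct elements of $K$ announced by the algorithm before time $T$ is at least $T - t_0$, so the time of the $j$-th switch loss is at least roughly $2^{j}$.

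\textbf{Step 3: convert times to values.} This is the crux. I need $x_j \ge 2^{j}$, i.e.\ the integers output during the long stretch are smaller than $x_j$. I would argue by contradiction. If the algorithm's outputs during the stretch exceeded $x_j$, then $x_j$, being unannounced, would have to lie outside $F$ while $F$ skips it. The algorithm outputs the smallest unannounced element of $F$, so every output during the stretch is at most the smallest element of $F$ above the previous outputs. Combining this with the KMP-style backtracking, which returns to the highest still-critical language (a superset chain), I would show that the $2^{\tau}$ outputs preceding $x_j$ are distinct integers of $K$ that are all less than $x_j$, which yields $x_j > 2^{j}$ and hence $r \le \log_2 n$. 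If this direct bound fails, I would try to establish the weaker $O(\log n)$ bound, which still suffices for density $1/2$.
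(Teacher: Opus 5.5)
Your Steps 2--3 never reach the inequality that carries the lemma, and the stretch you try to charge to is the wrong one. You look at the outputs made while the falsified focus $F=L_j$ was in force. As you note, $\ell\notin L_j$ gives no comparison with $\ell$. Your Step~3 claim that the $2^{\tau}$ outputs preceding $x_j$ lie below $x_j$ is false in general: $L_j$ may have been the focus for a single step, and it may skip $\ell$ and output larger integers.

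The paper's observation concerns the language $L_i$ that backtracking returns to. $L_i$ is still consistent after $\ell$ is announced, so $\ell\in L_i$. Before the focus could advance from $L_i$ toward $L_j$, it sat at $L_i$ for $2^{\tau'}$ consecutive steps, each time outputting the smallest unannounced element of $L_i$. Since $\ell\in L_i$ was unannounced throughout, all $2^{\tau'}$ of these outputs are smaller than $\ell\le n$. One then needs two further facts. First, distinct switch losses are charged to distinct stretches: a second return to $L_i$ requires a fresh advance from $L_i$, hence a fresh stretch. Second, distinct stretches carry distinct values of $\tau$. Together these give $n\ge\sum_i 2^{\tau_i'}\ge 2^w$.

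Your substitutes do not do this job. The fact that $\tau\ge j$ at the $j$-th switch loss is irrelevant, because the charged stretch can be old with small $\tau'$. For example, after advancing $L_1\to L_2\to\cdots\to L_w$, the switch losses that peel the chain back are charged to the stretches with $\tau'=w-1,\dots,1$. Likewise, a lower bound on the time at which a switch loss occurs says nothing about its value.

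The obstacle you flag in Step~1 is real, and your dismissal of it is wrong. Suppose a consistent but non-critical language whose index lies between the focus and the scope becomes inconsistent. That is not a switch loss, yet several in-scope languages above it can become critical at once, and the focus jumps past critical languages it never held. The paper's proof does not treat this case either; it simply asserts the $2^{\tau'}$-step stretch at $L_i$. For the algorithm as written, that assertion can fail, and with it the $\log_2 n$ bound.

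Take $K=L_1=\Z^+\setminus\set{1}$, $L_2=\Z^+\setminus\set{M}$, $L_{2+a}=L_1\setminus\set{M+1,\dots,M+a}$ for $1\le a\le T$, and let all later languages contain $1$.
\begin{enumerate}
\item While $M$ is withheld, $L_3,\dots,L_{2+T}$ are not critical, since they contain $M\notin L_2$. The scope therefore grows past them in $O(T)$ steps with focus $L_1$ and $\tau=1$, while the adversary announces small elements of $K$.
\item Announcing $M$ makes all of them critical, and the focus jumps to $L_{2+T}$.
\item Announcing $M+T,M+T-1,\dots,M+1$ then gives $T$ previously unannounced switch losses; with $M=10T$, every earlier output is below $M$. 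For $a\ge 2$, the switch loss $M+a$ sends the focus from $L_{2+a}$ to $L_{1+a}$, a language that was never the focus.
\end{enumerate}
These are $T$ switch losses inside a prefix of length about $11T$, which exceeds $\log_2 n$ for large $T$. Charging to the last scope increase, as you suggest, would put all of them on the single stretch at $L_1$, so disjointness fails. A complete argument therefore needs an extra invariant that the algorithm as stated does not supply, for instance that every critical language below the focus was once the focus and was left through a scope increase.
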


\begin{proof}[Proof of \cref{lem:ps_algo_switch_loss}]
  
\begin{figure}[t!]\centering
\begin{tikzpicture}[font=\small]
    \node[font=\bfseries,align=center] at (6.1,4.35)
      {Charging switch losses};

    \draw[->,thick] (1,-0.9) -- (11.2,-0.9) node[right] {index};

    \foreach \x/\name in {2.3/$L_i$,5.15/{},8/$L_j$} {
      \filldraw[fill=ForestGreen!35,draw=ForestGreen!80!black] (\x,0.5) circle[radius=0.4];
      \node[below=7] at (\x,0.1) {\name};
    }
    \draw[->,thick,ForestGreen!80] (2.7,0.5) -- (4.75,0.5);
    \draw[->,thick,ForestGreen!80] (5.55,0.5) -- (7.6,0.5);

    \node[align=center,font=\footnotesize,ForestGreen!80!black] at (3.6,-0.25)
      {spent $2^{\tau_1}$ steps\\with focus $L_i$};

    \draw[->,very thick,dotted,red!80!black]
      (7.65,0.9) to[out=160,in=20]
      (2.65,0.9);
    \node[font=\footnotesize,red!80!black] at (5.2,1.08) {switch loss $\ell_1$};

    \foreach \x/\name in {4.1/$L_{i'}$,6.95/{},9.8/$L_{j'}$} {
      \filldraw[fill=ForestGreen!35,draw=ForestGreen!80!black] (\x,{2.75-0.3}) circle[radius=0.4];
      \node[above=7] at (\x,{3.15-0.3}) {\name};
    }
    \draw[->,thick,ForestGreen!80] (4.5,{2.75-0.3}) -- (6.55,{2.75-0.3});
    \draw[->,thick,ForestGreen!80] (7.35,{2.75-0.3}) -- (9.4,{2.75-0.3});

    \node[align=center,font=\footnotesize,ForestGreen!80!black] at (5.5,{3.35-0.3})
      {spent $2^{\tau_2}$ steps\\with focus $L_{i'}$};

    \draw[->,very thick,dotted,red!80!black]
      (9.45,{2.38-0.3}) to[out=200,in=-20]
      (4.45,{2.38-0.3});
    \node[font=\footnotesize,red!80!black] at (6.8,{2.15-0.3}) {switch loss $\ell_2$};

    \draw[->,thick,ForestGreen!80]
      (2.5,0.85) -- (3.84,{2.44-0.3});
\end{tikzpicture}
\caption{The charging argument for switch losses. If a switch loss $\ell$ moves the focus from $L_j$ back to $L_i$, then before the focus could advance from $L_i$ toward $L_j$, the algorithm must have spent $2^{\tau^\prime}$ consecutive steps with focus $L_i$, where $\tau^\prime$ is the value of $\tau$ during that period. The outputs from those steps are charged to $\ell$.}
\label{fig:ps_switch_loss}
\end{figure}
 
  Consider a previously unannounced switch loss $\ell$ caused by a focus change from $L_j$ to $L_i$, as shown in \cref{fig:ps_switch_loss}. Then $i < j$ and $L_j \subseteq L_i \subseteq K$. During the progression of the focus from $L_i$ to $L_j$, the \nameref{box:ps_algo} must have spent $2^{\tau^\prime}$ consecutive time steps with focus $L_i$, where $\tau^\prime$ is the value of $\tau$ during that period. Since integer $\ell$ is previously unannounced and $\ell \in L_i$, the integers announced by the algorithm during these $2^{\tau^\prime}$ steps are smaller than $\ell$, and hence smaller than $n$. We \emph{charge} the switch loss $\ell$ to these $2^{\tau^\prime}$ integers.
  
	The focus-change counts $\tau$ are distinct across different switch losses, even though different switch losses may correspond to the same $L_i$. Moreover, the sets of integers charged to different switch losses are mutually disjoint.

	Therefore, if there are $w$ switch losses $\ell_1, \ell_2, \dots, \ell_w$ in $[n]$, with corresponding focus-change counts $\tau_1, \tau_2, \dots, \tau_w$, then disjointness of the charged sets implies
\[
n \geq 2^{\tau_1} + 2^{\tau_2} + \cdots + 2^{\tau_w}.
\]
Since the $\tau_i$ are distinct positive integers, we have $n \geq 2^1 + \cdots + 2^w \geq 2^w$. Taking logarithms of both sides completes the proof.
\end{proof}

We can now prove the lower-density guarantee for the algorithm.

\begin{theorem}[Lower Density]\label{thm:ps_algo_lower_density}
  The \nameref{box:ps_algo} achieves the optimal lower density of $1/2$.
\end{theorem}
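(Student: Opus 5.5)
Optimality (the upper bound $1/2$) is the trivial adversary argument given in the preliminaries: the adversary always announces the smallest integer of $K$ not yet announced by either party. So the content is the lower bound. I would work in the normalized setting $K=\Z^+$ set up after \cref{lem:ps_algo_valid}, and fix a time $t_0$ after which the algorithm generates $K$, in the sense that $K$ is critical and within the scope and every output is a new element of $K$. Only finitely many integers are announced before $t_0$. Let $M$ be the largest of them. These contribute an additive $O(1)$ error to $\mu_n$ and disappear in the $\liminf$.

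Fix $n$. Split the integers of $[n]$ first announced by the adversary after $t_0$ into switch losses and non-switch-losses. By \cref{lem:ps_algo_switch_loss}, the previously unannounced switch losses in $[n]$ number at most $\log_2 n$. For the remaining adversary integers I would build an injection into $D\cap[n]$, using \cref{fac:ps_one_to_one}. Suppose $a\le n$ is first announced by the adversary at step $t>t_0$ and is not a switch loss. Then $a$ lies in the previous focus and is at least the algorithm's output $d_{t-1}$ at step $t-1$. Since $d_{t-1}$ is a new integer, $d_{t-1}\ne a$, so in fact $d_{t-1}<a\le n$. Map $a\mapsto d_{t-1}$. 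Each time step contains exactly one adversary announcement, so the map is injective. Also $d_{t-1}\in D$: it was first announced by the algorithm at step $t-1$, and $t-1\ge t_0$ ensures it is in $K$.

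Combining the two parts,
\[
\mu_n(A)\le \mu_n(D\cap K)+\log_2 n+M .
\]
Together with $\mu_n(A)+\mu_n(D\cap K)=n$, this gives $\mu_n(D\cap K)\ge (n-\log_2 n-M)/2$. Dividing by $n$ and letting $n\to\infty$ yields lower density at least $1/2$.

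The main obstacle is the boundary bookkeeping at $t_0$. Outputs at steps before $t_0$ may lie outside $K$, and the algorithm may have announced integers of $K$ before $t_0$. I would handle this by starting the injection only at steps $t\ge t_0+1$. All adversary integers announced at or before $t_0$, and all switch losses announced before $t_0$, are absorbed into the constant $M$. The charging lemma is already stated for switch losses occurring after the algorithm generates $K$, so the constant is finite and independent of $n$.
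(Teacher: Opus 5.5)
Your proposal is correct and follows the paper's proof. You split the adversary's first announcements in $[n]$ after the generation time into switch losses, bounded by $\log_2 n$ via \cref{lem:ps_algo_switch_loss}, and non-switch-losses, injected into smaller outputs in $D\cap K$ via \cref{fac:ps_one_to_one}; you absorb the earlier announcements into a constant and conclude from $\mu_n(A)+\mu_n(D\cap K)=n$. Your explicit map $a\mapsto d_{t-1}$ just spells out the injection the paper asserts, and the only nit is that $M$ should bound the integers announced at or before step $t_0$ (not strictly before), to match your own boundary convention.
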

\begin{proof}[Proof of \cref{thm:ps_algo_lower_density}] Fix an arbitrary prefix $[n]$. Consider any integer
  in $[n]$ first announced by the adversary after the algorithm starts generating $K$ (\cref{lem:ps_algo_valid} tells us this will happen after finitely many time steps).
  \begin{itemize}
    \item If such an integer is a switch loss, then by \cref{lem:ps_algo_switch_loss}, the number of such integers is at most $\log_2 n$. 
    \item Otherwise, by \cref{fac:ps_one_to_one}, each such integer can be mapped to a unique smaller integer in $K$ that was first announced by the algorithm.
  \end{itemize}

  Let $r$ be the number of integers first announced by the adversary before the algorithm starts generating $K$. These two cases  imply that, for every integer $n$,
  \[
  \mu_n(D \cap K) \geq \mu_n(A) - r - \log_2n.
  \]
  It follows that
  \begin{align*}
\liminf_{n \to \infty}\frac{\mu_n(D\cap K)}{\mu_n(K)}
&=\liminf_{n \to \infty}\frac{2\mu_n(D\cap K)}{2n} \\
&\geq \liminf_{n \to \infty}\frac{\mu_n(D \cap K) + \mu_n(A) - r - \log_2n}{2n} \\
&=\liminf_{n \to \infty}\frac{\mu_n(K) - r - \log_2n}{2n} \\
&=\liminf_{n \to \infty}\frac{n - r - \log_2n}{2n}\\
&= \frac12.\qedhere
  \end{align*}
\end{proof}

\begin{remark}
The same argument extends to the setting in which the algorithm may announce $t$ integers at each time step: it simply announces the $t$ smallest unannounced integers in its focus. Apart from a vanishing fraction of switch losses, each integer first announced by the adversary can then be matched with $t$ distinct smaller integers first announced by the algorithm, and these matches are disjoint. Thus, the algorithm achieves lower density $t/(t+1)$.
\end{remark}

\begin{remark}
Kleinberg and Wei \cite{DBLP:conf/focs/KleinbergW25} call an index-based generation algorithm \emph{accurate} at a time step if its current language hypothesis equals the true language $K$, and \emph{accurate infinitely often} if this occurs at infinitely many time steps. Viewing the focus as its current hypothesis, the \nameref{box:ps_algo} also has this property. Once $K$ is permanently critical and within the scope, consider any time the focus moves from $K$ to a proper subset, and let $L$ be the first language after $K$ on the resulting critical chain. Since the adversary enumerates all of $K$, it eventually announces an integer in $K\setminus L$, causing $L$ and every later language on the chain to become inconsistent; the \nameref{box:bt_algo} then returns the focus to $K$. Thus, the focus either eventually remains equal to $K$ or returns to $K$ after every excursion, and in either case the algorithm is accurate infinitely often.
\end{remark}

\subsection{Application: Language Generation from Partial Enumeration}
In this section, we recover the optimal lower density under the \emph{partial enumeration setting}, first proved in the work of \cite{DBLP:conf/stoc/KleinbergW26}. In this setting, the adversary chooses the true language $K$ from the collection $\mathcal{L}$, together with an infinite subset $K^\prime \subseteq K$ that need not belong to $\mathcal{L}$. The adversary keeps both sets secret and enumerates only the partial language $K^\prime$. The algorithm proceeds as before: at each time step, it receives an announcement from the adversary and responds with an integer that has not yet been announced by either party. We still seek to maximize the lower density of the algorithm, defined by
\[
\liminf_{n \to \infty}\frac{\mu_n(D\cap K)}{\mu_n(K)}.
\]

Directly applying the \nameref{box:ps_algo} cannot achieve any positive lower density in the worst case. This is because the validity of the algorithm relies on the fact that the adversary will enumerate $K$, which need not hold in the partial enumeration setting; \cref{ex:ps_algo_partial} gives a counterexample.

\begin{example}\label{ex:ps_algo_partial} Let $L_1$ be $\set{2k - 1 \mid k \in \Z^+} \cup \set{4k \mid k \in \Z^+}$, and let $L_i$ be $\set{2k \mid k \geq i - 1}$ for every integer $i > 1$. The adversary chooses $L_2$, the set of all positive even integers, as the true language $K$, but enumerates only the multiples of $4$.

Since $L_1$ remains consistent and $L_i \not\subseteq L_1$ for every integer $i > 1$, $L_1$ is the only critical language at every time step. Consequently, the \nameref{box:ps_algo} outputs only positive odd integers, which have zero density in the true language $K$. In fact, the algorithm does not even generate in the limit the true language $K$.
\end{example}

We can fix this issue without significantly changing the \nameref{box:ps_algo}. Given $\mathcal{L} = \set{L_1, L_2, L_3, \dots}$, construct a new collection $\mathcal{L}^\prime = \set{L_{\set{1}}^\prime, L_{\set{2}}^\prime, L_{\set{1, 2}}^\prime, \dots}$. Each index $I$ for the new language is a subset of indices for the old language. We define $L_I^\prime$ using finite intersection:
\[
L_I^\prime \defeq \bigcap_{i \in I}L_i.
\] We write the set $I$ as a $1$-based indicator binary. For example, the language $L_5^\prime$, whose index is ${101}_{(2)}$ in binary, is the intersection of $L_1$ and $L_3$, namely $L_{\set{1, 3}}$.

Each constructed language is countable, since it is a finite intersection of countable sets. If some $L_i^\prime$ is finite, we can simply discard it; to simplify the exposition, assume that every $L_i^\prime$ is countably infinite. The constructed collection is countable because it is indexed by finite subsets of a countable set.

The algorithm for the partial enumeration setting works as follows: given $\mathcal{L}$, it constructs the new collection $\mathcal{L}^\prime$ as stated above, and then run the \nameref{box:ps_algo} on this new collection $\mathcal{L}^\prime$.

We first show that the adapted algorithm generates the true language $K$ in the limit.

\begin{lemma}[Validity]\label{lem:ps_algo_partial_valid} The \nameref{box:ps_algo} over $\mathcal{L}^\prime$ generates the true language $K$ in the limit.
\end{lemma}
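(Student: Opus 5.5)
The plan is to mimic the proof of \cref{lem:ps_algo_valid}, with the true language replaced by a suitable finite intersection. Let $K = L_{i^*}$ and let $K^\prime \subseteq K$ be the enumerated set. Define the finite index set
\[
I^* \defeq \set{i \le i^* : K^\prime \subseteq L_i},
\]
which contains $i^*$, and set $M \defeq L_{I^*}^\prime = \bigcap_{i \in I^*} L_i$. Then $K^\prime \subseteq M \subseteq K$, so $M$ is always consistent and $M$ is infinite. The target claim is that $M$ is eventually critical in $\mathcal{L}^\prime$. Once $M$ is critical and within the scope, the focus is a critical language of index at least that of $M$, hence a subset of $M \subseteq K$. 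Every announced integer then lies in $K$, and novelty holds by construction, exactly as in \cref{lem:ps_algo_valid}.

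Criticality of $M$ needs two things. First, $M$ must be consistent, which always holds. Second, we need $M \subseteq L_J^\prime$ for every consistent $L_J^\prime$ whose index precedes that of $M$. Only finitely many indices precede $M$'s index, so it suffices to treat each such $J$ separately.

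\begin{itemize}
\item If $K^\prime \not\subseteq L_J^\prime$, then some element of $K^\prime \setminus L_J^\prime$ is eventually announced. After that point $L_J^\prime$ is permanently inconsistent.
\item If $K^\prime \subseteq L_J^\prime$, then $K^\prime \subseteq L_j$ for every $j \in J$. Here we must show $M \subseteq L_J^\prime$, which is the crux.
\end{itemize}

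The second case is the main obstacle. The binary encoding gives a property we can use: an index $J$ precedes $I^*$ numerically only if the highest bit where they differ is set in $I^*$. That alone does not put $J \subseteq I^*$, because $J$ may contain elements $j > i^*$ with $K^\prime \subseteq L_j$, yet $L_j$ need not contain $M$.

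I would first try to repair this by re-choosing $M$ more cleverly. Take $M$ to be the intersection of all $L_i$ with $K^\prime \subseteq L_i$ and $i$ up to a bound $N$, where $N$ is chosen so that $M \subseteq K$. Then check whether the predecessors of this index are all covered. In binary, an index smaller than $I_N \defeq \set{i \le N : K^\prime \subseteq L_i}$ uses only elements $\le N$; otherwise its highest bit exceeds that of $I_N$ and the index would be larger. Any such predecessor $J$ that is consistent in the limit satisfies $J \subseteq I_N$, and therefore $L_J^\prime \supseteq M$. This works for any $N \ge i^*$, and taking $N = i^*$ already suffices, since $i^*$ is the highest element of $I^*$.

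So I expect the above concern to dissolve. Any $J$ numerically smaller than $I^*$ has all elements $\le i^*$. If $K^\prime \subseteq L_J^\prime$, every $j \in J$ lies in $I^*$, so $J \subseteq I^*$ and $M \subseteq L_J^\prime$.

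The last step is the scope argument, verbatim from \cref{lem:ps_algo_valid}. After $M$ becomes permanently critical, the scope cannot drop below $M$'s index, because backtracking stops at critical languages that remain critical. Moreover, the scope eventually reaches that index, since it only shrinks upon inconsistency and grows otherwise.
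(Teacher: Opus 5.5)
Your proof is correct, but it is organized differently from the paper's.

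\textbf{The paper's route.} The paper never names the limiting language. It takes $J$ to be the highest index among $L_1^\prime,\dots,L_{2^{i^*-1}}^\prime$ that is eventually stably critical. This window always contains a critical language, because its top element $L_{2^{i^*-1}}^\prime = L_{i^*}$ is always consistent. If $J \neq 2^{i^*-1}$, the paper reruns the argument on the window $L_{J+1}^\prime,\dots,L_{J+2^{i^*-1}}^\prime$. The top element of that window, $L_J^\prime\cap L_{i^*}$, is consistent and contained in $L_J^\prime$, which yields some stably critical language whose index has top bit $i^*$.

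\textbf{Your route.} You exhibit the language explicitly: $M = L_{I^*}^\prime$ with $I^* = \set{i\le i^* : K^\prime\subseteq L_i}$. You then check its eventual criticality in one step, using the fact that every index numerically below $I^*$ encodes a subset of $[i^*]$. So any predecessor $J$ with $K^\prime\subseteq L_J^\prime$ satisfies $J\subseteq I^*$, hence $L_J^\prime\supseteq M$. The finitely many remaining predecessors eventually become permanently inconsistent.

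\textbf{What each buys.} Your route is more transparent. It pins down $K^\prime\subseteq M\subseteq K$ and shows that $M$ is the highest-indexed language below $2^{i^*}$ containing $K^\prime$. It also replaces the paper's ``apply the same argument again'' step with a direct check of the definition of criticality. The paper's route avoids characterizing the limit and uses only the local fact that the top of each window is consistent. Your final scope argument is the same as in \cref{lem:ps_algo_valid}, at the same level of detail as the paper.

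\textbf{Exposition.} The worry in your middle paragraph, that a predecessor $J$ might contain some $j>i^*$, was never a real obstacle. Since $J<I^*<2^{i^*}$, all bits of $J$ already lie in $[i^*]$. You can state this directly and drop the detour through $I_N$.
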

\begin{proof}[Proof of \cref{lem:ps_algo_partial_valid}] Let $L_{i^*}$ be the true language $K$. It suffices to show that, after sufficiently many time steps, there is an index $I$ with $2^{i^*-1}\leq I < 2^{i^*}$ such that $L_I^\prime$ is always critical. Such indices are precisely those whose binary representation has $i^*$ as its highest nonzero bit, and hence $L_I^\prime \subseteq L_{i^*}=K$.

Consider the languages $L_1^\prime, L_2^\prime, \dots, L_{2^{i^*-1}}^\prime$. Since $L_{2^{i^*-1}}^\prime = L_{i^*}$ is always consistent, this finite block contains at least one critical language. Let $L_J^\prime$ be a critical language with the highest index $J$ among them. After finitely many steps, $J$ will stop changing.
Indeed, among $L_1^\prime, L_2^\prime, \dots, L_{2^{i^*-1}}^\prime$, any language that is not a superset of $K^\prime$ will become inconsistent after sufficiently many time steps, while the consistency of all other languages will remain unchanged thereafter.

Once $J$ no longer changes, if $J = 2^{i^*-1}$, the lemma follows immediately. Otherwise, consider the languages
$L_{J + 1}^\prime, L_{J + 2}^\prime, \dots, L_{J + 2^{i^*-1}}^\prime$.
At least one of them must be critical: the language
$L_{J + 2^{i^*-1}}^\prime = L_J^\prime \cap L_{i^*}$ is consistent and contained in $L_J^\prime$. Applying the same argument, we obtain a language in this range that remains critical after sufficiently many time steps. Its index lies between $2^{i^*-1}$ and $2^{i^*} - 1$.
\end{proof}

\begin{theorem}[Lower Density from Partial Enumeration]\label{thm:ps_algo_partial_lower_density} If the adversary's enumerated set $K^\prime$ has lower density $\alpha$ in the true language $K$, then applying the \nameref{box:ps_algo} to $\mathcal{L}^\prime$ achieves lower density $\alpha / 2$.
\end{theorem}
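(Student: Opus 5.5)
We mirror the proof of \cref{thm:ps_algo_lower_density}, now run over the collection $\mathcal{L}^\prime$, with $K^\prime$ playing the role that $K$ played before. By \cref{lem:ps_algo_partial_valid}, after finitely many steps some $L_I^\prime \subseteq K$ is permanently critical, and from then on every output of the algorithm is a fresh integer in $K$. The structural statements used before were \cref{lem:criticality}, \cref{fac:ps_one_to_one}, and \cref{lem:ps_algo_switch_loss}. They depend only on the mechanics of the patient-scope algorithm and on the fact that, from some point on, a fixed critical language $L_I^\prime$ stays within the scope and is contained in $K$. So we first check that these hold verbatim over $\mathcal{L}^\prime$, in the form: every focus after that point is a subset of $L_I^\prime \subseteq K$. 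For \cref{lem:ps_algo_switch_loss}, the charging argument still gives the following. Each previously unannounced switch loss $\ell \le n$ that moves the focus from $L_j^\prime$ back to $L_i^\prime$ is charged to $2^{\tau^\prime}$ disjoint algorithm outputs. These outputs lie in $L_i^\prime$ and are smaller than $\ell$, and the values $\tau^\prime$ are distinct. Hence there are at most $\log_2 n$ such switch losses whose value lies below $n$.

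Next we redo the counting, working in the ambient ordering restricted to $K$. Following the paper's convention, we rename $K$ to $\Z^+$ by rank, so the algorithm's late outputs all lie in $K$. Fix $n$ and count integers in $K[n]$. Let $A$ be the set of integers first announced by the adversary; here $A = K^\prime$ minus the integers the algorithm announced first. A non-switch-loss adversary announcement $a\in[n]$ lies in the previous focus. Since the algorithm's previous output was the smallest unannounced element of that focus, it is at most $a$. By \cref{fac:ps_one_to_one} this gives an injection into $D\cap K\cap[n]$. Therefore
\[
\mu_n(D\cap K) \ge \mu_n(A) - r - \log_2 n,
\]
where $r$ counts adversary announcements made before the stabilization time.

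Finally, we combine the two counts. Since $A\cup (D\cap K)\supseteq K^\prime$ and the union is disjoint on first announcements, we have $\mu_n(A)+\mu_n(D\cap K)\ge \mu_n(K^\prime)$. Adding this to the previous inequality gives
\[
2\mu_n(D\cap K)\ge \mu_n(K^\prime)-r-\log_2 n.
\]
Dividing by $\mu_n(K)=n$ and taking $\liminf$ yields $\liminf \mu_n(D\cap K)/n \ge \tfrac12\liminf \mu_n(K^\prime)/n=\alpha/2$.

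The step I expect to need the most care is confirming that \cref{fac:ps_one_to_one} and the charging lemma transfer to $\mathcal{L}^\prime$. The issue is that the focus may be a proper subset of $K$ that need not contain $K^\prime$, so there can be infinitely many switch losses; the setting is less clean than full enumeration. The key observations are these. The charging only needs the switched-back language $L_i^\prime$ to contain $\ell$ (true, since it is consistent after the announcement), and needs the charged outputs to be fresh and smaller than $\ell$. Disjointness and distinctness of $\tau$ are purely algorithmic. Thus the infinitude of switch losses is harmless, because their count below $n$ is still logarithmic. One must also handle the backtracking fallback cases, which occur only finitely often once $L_I^\prime$ is permanently critical; these are absorbed into $r$.
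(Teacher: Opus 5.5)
Your proposal is correct and follows the paper's proof. Like the paper, you use \cref{lem:ps_algo_partial_valid} for eventual validity, reuse \cref{fac:ps_one_to_one} and \cref{lem:ps_algo_switch_loss} to get $\mu_n(D\cap K)\ge\mu_n(A)-r-\log_2 n$, then add $\mu_n(A)+\mu_n(D\cap K)\ge\mu_n(K')$ (from $K'\subseteq (D\cap K)\sqcup A$) and divide by $n$.
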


As in the full-enumeration setting, we map $K$ to $\Z^+$ (with the ordering in $K$ preserved) and assume $K = \Z^+$.

\begin{proof}[Proof of \cref{thm:ps_algo_partial_lower_density}] Let $r$ be the number of integers first announced by the adversary before the algorithm starts generating $K$. As in the proof of \cref{thm:ps_algo_lower_density}, \cref{fac:ps_one_to_one,lem:ps_algo_switch_loss,lem:ps_algo_partial_valid} imply that, for every integer $n$,
  \[
  \mu_n(D \cap K) \geq \mu_n(A) - r - \log_2n.
  \]
  It follows that
  \begin{align*}
\liminf_{n \to \infty}\frac{\mu_n(D\cap K)}{\mu_n(K)}
&=\liminf_{n \to \infty}\frac{2\mu_n(D\cap K)}{2n} \\
&\geq \liminf_{n \to \infty}\frac{\mu_n(D \cap K) + \mu_n(A) - r - \log_2n}{2n} \\
&\geq\liminf_{n \to \infty}\frac{\mu_n(K^\prime) - r - \log_2n}{2n} &\tag{$K^\prime \subseteq (D \cap K) \sqcup A$}\\
&= \frac12\liminf_{n \to \infty}\left(\frac{\mu_n(K^\prime)}{n} - \frac{r + \log_2n}{n}\right)\\
&= \frac\alpha2.\qedhere
  \end{align*}
\end{proof}
 
\section{Oblivious Adversaries and Randomized Algorithms}\label{sec:rand_ps_algo}
In this section, we describe a randomized variant of \nameref{box:ps_algo}. Randomization is useful only if the adversary cannot react to the algorithm's random choices: against an adaptive adversary, once the algorithm's random choice at a step is realized, the adversary can choose an optimal response. We therefore assume that the adversary is \emph{oblivious}: it fixes in advance a deterministic announcement order for the true language $K$, which remains unknown to the algorithm.

\begin{definition}[Lower Density of a Randomized Algorithm]
    The lower density of a randomized algorithm is
    \[
    \liminf_{n \to \infty}\frac{\E{}{\mu_n(D \cap K)}}{\mu_n(K)}.
    \]
\end{definition}

We show that, against a non-adaptive adversary, randomization improves the best achievable lower density.

\begin{theorem}[Lower Density of a Randomized Algorithm (Lower Bound)]\label{thm:rand_ps_algo_lower_density}
There is a randomized algorithm that, against every non-adaptive adversary, achieves lower density $1 - 1/e \approx 0.63$.
\end{theorem}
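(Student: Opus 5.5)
We plan to instantiate the block-based meta-algorithm from the technical overview, using the Ranking algorithm (\cref{thm:ranking_algo}) inside each block. The algorithm keeps the patient-scope machinery unchanged: the scope size, the focus-change count $\tau$, the waiting rule of $2^\tau$ steps, and the \nameref{box:bt_algo}. The only change is how the output is chosen. It maintains a collection of disjoint reserved blocks. When the adversary announces an integer $x$ that lies in an existing block $B$ (of size $m$), the algorithm treats $x$ as an arriving online vertex. Its neighborhood consists of the still-unannounced elements of $B$ smaller than $x$, which are the elements the adversary ``could have used instead''. The algorithm then outputs the Ranking choice in $B$ under a uniformly random permutation of $B$ fixed at the block's creation. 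When $x$ lies in no block, or the Ranking choice is unavailable, the algorithm reserves a new block of the $m$ smallest unannounced integers in the current focus and outputs from it. To get density exactly $1-1/e$ rather than $1-1/e-\varepsilon$, the block size $m$ should grow slowly over time (e.g.\ with $\tau$). As a first step I would prove the weaker statement for each fixed $m$ and then let $m\to\infty$ along the run.

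First, validity follows exactly as in \cref{lem:ps_algo_valid}, since the scope dynamics are unchanged. After the algorithm generates $K$, we again identify $K$ with $\Z^+$. Second, we need a local density statement. Fix a block $B$ created while the focus was some $L\subseteq K$, and suppose the adversary announces every element of $B$ before the focus is falsified. Against an oblivious adversary, the announcement order restricted to $B$ is fixed. The offline matching that pairs each adversary announcement with a distinct smaller element of $B$ (the structure of \cref{fac:ps_one_to_one}) has size roughly $|B\cap A|$. Ranking therefore secures, in expectation, a $f(m)$ fraction of this. A short computation then turns ``each adversary element of $B$ is matched to a distinct smaller algorithm element of $B$ with ratio $f(m)$'' into an expected defender fraction of $f(m)/(1+f(m))$ \emph{or} directly $f(m)$ of $B$, depending on how the bipartite graph is set up. Here I would model the block as $m$ online vertices (adversary times) against $m$ offline positions, so that the upper-triangular structure of \cref{thm:online_matching_upper} is exactly the prefix geometry and the ratio is $f(m)$ of the block.

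Third, we bound the global loss. A prefix $[n]$ decomposes into complete blocks contained in $[n]$, at most boundary-crossing blocks, and switch losses. Since blocks are formed from the smallest unannounced integers of the focus, a block can be left incomplete inside $[n]$ only if either it straddles $n$ or the focus was falsified while the block was open. The second kind injects into switch losses. By a charging argument identical to \cref{lem:ps_algo_switch_loss}, these number $O(\log n)$. Each incomplete block costs at most $m$ elements, so the loss is $O(m\log n)+O(m)$ per prefix, which is $o(n)$ if $m$ grows slowly enough. Linearity of expectation then gives $\E{}{\mu_n(D\cap K)}\ge f(m)\,n-o(n)$.

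The main obstacle is the second step: setting up the within-block bipartite graph so that Ranking's guarantee translates into an exact prefix-density statement rather than just a block count. Blocks interleave in time, adversary announcements inside a block may be switch losses, and the neighborhoods must be legitimate (elements not yet announced) at the moment of arrival. I would first try defining neighborhoods as the unannounced elements of $B$ that precede the arriving element in the block's internal order. I would then verify that the offline optimum is the full block, so that the KVV bound applies verbatim.
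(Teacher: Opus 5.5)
Your within-block reduction is set up wrongly, and as written the algorithm fails. If an arriving announcement $x$ is adjacent only to unannounced block elements \emph{smaller than} (or preceding) $x$, consider an oblivious adversary that enumerates $1,2,3,\dots$ in increasing order. Every arrival then has an empty neighborhood, so every step spawns a fresh block, the algorithm wins about one element per block, and the density is about $1/m$.

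No such constraint is needed, because prefix geometry is handled entirely by asking whether the block lies inside $[n]$. The correct graph (\cref{lem:block_vs_per}) makes each arrival adjacent to \emph{every} block element the adversary has not yet announced. This is exactly the rule ``output the first not-yet-announced element of $\pi_r$''. The neighborhoods are then automatically nested, and the block together with its trigger has offline optimum $\abs{B}$ out of $\abs{B}+1$. The matching size \emph{is} the number of block elements first announced by the algorithm, giving $\abs{B}\,f(\abs{B}+1)$ with no $f/(1+f)$ ambiguity.

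For the same reason you do not need a block to be exhausted before its focus is falsified. After the algorithm generates $K$, every block lies in $K$, the adversary eventually announces all of it, and the Ranking process inside the block ignores focus changes. Your claim that blocks left open at a switch loss inject into switch losses is false: arbitrarily many blocks can be open at a single switch loss. The relevant notion of incompleteness is spatial ($B\not\subseteq[n]$). The $O(\log n)$ bound on incomplete blocks meeting $[n]$ comes from charging each switch loss that pulls the smallest available focus element back into $[n]$ to the $2^{\tau'}$ blocks created during the preceding wait; all of these lie below that element (\cref{lem:rand_ps_algo_incmp_blks}). This requires the waiting rule to count $2^\tau$ \emph{block creations} (unreserved announcements). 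With your unchanged rule of $2^\tau$ time steps, the outputs during the wait may come from old blocks lying above $n$. The argument of \cref{lem:ps_algo_switch_loss}, which used that each output is the smallest available element of the focus, therefore does not carry over.

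The step from $1-1/e-\varepsilon$ to exactly $1-1/e$ is also missing its key idea. Growing $m$ with $\tau$ fails because $\tau$ increases only on focus changes. When the focus is eventually constant (e.g.\ $\mathcal L=\{\Z^+\}$), the block size is eventually some constant $m_0$, and your analysis yields only $\left(1-\frac1{m_0+1}\right)f(m_0+1)<1-1/e$.

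Growing $m$ with time breaks your $O(m\log n)$ loss estimate, because the size of a block meeting $[n]$ is no longer controlled by $n$. An oblivious adversary can keep the focus on a proper subset of $K$ that omits many elements of $[n]$ for as long as it likes, then falsify it. The next block, whose size is set by the now huge time, swallows those released elements. For a suitable $n$ it straddles $n$ while containing a constant fraction of $[n]$, and the block's Ranking guarantee says nothing about that part.

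The paper instead ties block size to \emph{position}: a new block has size $\lceil\sqrt{m}\rceil$, where $m$ is the smallest available element of the current focus (\nameref{box:rand_ps_algo_var}). Then every block meeting $[n]$ has size at most $\lceil\sqrt n\rceil$, so incomplete blocks cost $O(\sqrt n\log n)=o(n)$. The values $m$ are distinct across blocks, so at most $2b-1$ blocks have size $b$. Small blocks therefore contribute only a constant $C_\varepsilon$, and all other complete blocks achieve a ratio within $\varepsilon$ of $1-1/e$ (\cref{lem:rand_ps_algo_var_cmp_blks}).
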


To streamline the presentation, we defer the proof of \cref{thm:rand_ps_algo_lower_density} to \cref{apx:rand_ps_algo_lower_density}. In this section, we instead prove a slightly weaker statement that preserves the main idea of the algorithm.

\begin{theorem}[Lower Density of a Randomized Algorithm (Weaker Lower Bound)]\label{thm:rand_ps_algo_lower_density_weaker}
For every positive integer $m$, there is a randomized algorithm that, against every non-adaptive adversary, achieves lower density
\[
\left(1 - \frac1{m + 1}\right) \cdot f(m + 1),
\]
where
\[
f(m) = 1-\left(1-\frac1{m+1}\right)^m.
\]
This lower density tends to $1 - 1/e$ as $m \to \infty$.
\end{theorem}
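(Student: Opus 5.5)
The plan is to instantiate the block-based meta-algorithm from the technical overview, with Ranking (\cref{alg:ranking_algo}) as the within-block strategy. Fix $m$. The algorithm keeps the patient-scope machinery unchanged: the focus, the scope size $s_t$, the focus-change count $\tau$, the waiting rule of $2^\tau$ steps, and the \nameref{box:bt_algo}. What changes is the output rule. The algorithm maintains a family of disjoint reserved blocks, each of $m+1$ integers. When the adversary announces an integer $x$, the algorithm checks whether $x$ lies in an existing block $B$. If it does, the algorithm answers from $B$ using Ranking: it fixes a uniformly random permutation $\pi_B$ of $B$ when $B$ is created, and outputs the unannounced element of $B$ that is still in the current focus and has minimum $\pi_B$-rank. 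If $x$ is in no block, the algorithm reserves a new block consisting of the $m+1$ smallest integers in the current focus that are unannounced and unreserved. It then outputs the $\pi_B$-minimal such element of this new block. If the relevant block is exhausted, it outputs any unannounced element of the focus. Validity follows exactly as in \cref{lem:ps_algo_valid}, since every output lies in the focus. After mapping $K$ to $\Z^+$, every block reserved after generation begins is contained in $K$.

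\textbf{Step 1: density inside a complete block.} For a block $B$ that is created while the focus is stable and never touched by a switch loss, I model the adversary's announcements in $B$ as online vertices $U$. The elements of $B$ are the offline vertices $V$. Each arrival's neighbors are the elements of $B$ that are still unannounced, and these neighborhoods form a nested chain. The obliviousness of the adversary is what makes this legitimate: the arrival sequence is fixed independently of $\pi_B$. An offline maximum matching covers all arrivals except possibly the last one, which collides with the final remaining element. So the adversary claims roughly $a$ elements, Ranking matches at least $f(\cdot)\cdot a$ of them, and the Ranking outputs plus the adversary's announcements together cover $B$. Counting $|B|=m+1$ against at most $m+1$ arrivals, I expect $\E{}{|D\cap B|}\ge \bigl(1-\tfrac1{m+1}\bigr) f(m+1)\,|B|$. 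The factor $1-1/(m+1)$ absorbs the one element lost to the creating arrival. I would check this arithmetic carefully against \cref{thm:ranking_algo}, and possibly reparametrize the block size, which would also explain the $f(m+1)$ in the statement.

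\textbf{Step 2: bounding incomplete blocks in a prefix $[n]$.} A block is incomplete relative to $[n]$ for one of three reasons:
\begin{itemize}
\item it straddles $n$;
\item it was created before generation began, of which there are finitely many;
\item it was disrupted by a switch loss, meaning the focus changed so that block elements left the focus or the block's context was invalidated.
\end{itemize}
Blocks are formed from the smallest available elements of the focus, so at any time only $O(1)$ blocks are partially announced per focus. When the focus is stable, a block straddling $n$ forces all smaller available focus elements to be reserved. This should bound the straddlers by $O(1)$, or at worst by the number of focus changes charged so far. For the disrupted blocks, I would reuse the charging argument of \cref{lem:ps_algo_switch_loss}. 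Each previously unannounced switch loss is preceded by $2^{\tau'}$ steps spent in the fallback focus, and those steps produced at least $2^{\tau'}$ integers below the loss, drawn from distinct $\tau$'s. This gives $O(\log n)$ such events, each spoiling at most $O(1)$ blocks, so at most $O(m\log n)$ elements of $[n]$ lie in bad blocks. Elements of $[n]$ outside any block are handled like non-switch-loss announcements in \cref{fac:ps_one_to_one}: they are smaller than the blocks being formed, so all but $o(n)$ of them are covered.

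\textbf{Step 3: combining.} Summing Step 1 over the complete blocks inside $[n]$ and subtracting the $O(m\log n + r)$ error from Step 2 gives $\E{}{\mu_n(D\cap K)}\ge(1-\tfrac1{m+1})f(m+1)\,n - o(n)$, which proves the claimed bound. The main obstacle I expect is Step 2. Making precise which blocks a focus change actually corrupts is delicate, because a backtrack to a superset language keeps block elements valid while a later forward move may drop some of them. My first attempt would be to charge each corrupted block to the switch loss or scope advance that changed the focus, and to bound scope advances in $[n]$ by $O(\log n)$ through the same doubling waits.
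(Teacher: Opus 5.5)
Your proposal has two genuine gaps. Both come from carrying the deterministic patient-scope machinery over unchanged.

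\textbf{First gap: the focus-restricted output rule.} Answering only with block elements \emph{still in the current focus} breaks the reduction to Ranking in Step 1. Block elements leave the focus through forward scope advances, not through switch losses. Your attempt to contain this relies on the claim that only $O(1)$ blocks are partially announced at a time, and that claim is false. Every unreserved announcement opens a new block, so the adversary can open $\Theta(n/m)$ blocks inside $[n]$ and leave all of them partially played. They can stay that way across arbitrarily many later scope advances, whose number is logarithmic in time, not in $n$.

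The restriction is also unnecessary. After generation begins, every new block is reserved inside the focus and hence inside $K$, and only finitely many blocks predate this. So answer purely from $\pi_B$, ignoring the focus. When a block is exhausted, open a fresh block; your fallback to an arbitrary focus element may land in another block and disturb its run. Focus changes then affect only \emph{where} new blocks are placed, never how an existing block is played. The play inside every block is then exactly Ranking on that block's own arrival subsequence, with no stability hypothesis, and obliviousness makes that subsequence independent of $\pi_B$.

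Triggers also cannot go through \cref{fac:ps_one_to_one}, because outputs are no longer the smallest available integers. Instead, treat each block together with its trigger as one instance of size $m+1$. By \cref{lem:block_vs_per}, the algorithm first announces at least $m\,f(m+1)$ of these $m+1$ integers in expectation, which is exactly where $(1-\frac{1}{m+1})f(m+1)$ comes from. A missing trigger, or one above $n$, only helps.

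\textbf{Second gap: the charging in Step 2 does not transfer.} \cref{lem:ps_algo_switch_loss} uses the fact that the deterministic algorithm outputs the smallest available integers of the focus. In your algorithm, outputs during the wait at $L_i$ can come from old blocks that were reserved under a smaller focus and lie above $\ell$. With the wait measured in time steps, the adversary can fill it entirely with such announcements, so nothing below $\ell$ is announced or reserved.

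The missing idea is to measure the wait in \emph{unreserved} adversary announcements, i.e., in newly created blocks. Each such block is $(W\cap L_i)[m]$ while $\ell$ remains in $W\cap L_i$, so all $2^{\tau'}$ of them lie entirely below $\ell$. This gives $n\geq m\sum_j 2^{\tau_j}$.

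Moreover, the quantity to bound is not switch losses but blocks that are not contained in $[n]$ yet meet $[n]$ or have their trigger in $[n]$. Once one such block is created from $W\cap F$, every remaining element of $W\cap F$ exceeds $n$. Between switch losses the focus only shrinks and $W$ only loses elements, so all later blocks and their triggers lie above $n$. This holds until a switch loss that is itself an unreserved integer at most $n$, or that brings $\min(W\cap F)$ back to at most $n$. Charging exactly these switch losses as above bounds the number of such blocks by $\log_2(n/m)+1$. Each contributes at most $m+1$ integers of $[n]$, which gives the $o(n)$ error term you need.
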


We also provide a matching $1-1/e$ upper bound in \cref{subsec:rand_ps_algo_lower_density_upper}.

\begin{theorem}[Lower Density of a Randomized Algorithm (Upper Bound)]\label{thm:rand_ps_algo_lower_density_upper}
No randomized algorithm can achieve lower density greater than $1 - 1/e$ against every non-adaptive adversary.
\end{theorem}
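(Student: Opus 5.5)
Our plan is to prove \cref{thm:rand_ps_algo_lower_density_upper} by embedding the hard instances of online bipartite matching from \cref{thm:online_matching_upper} into a language-generation instance, then using Yao's principle to pass from a distribution over oblivious adversaries to a bound on any randomized algorithm. Fix a large $m$ and a randomized algorithm $\mathcal{A}$. We want to show that for some oblivious adversary, $\liminf_n \E{}{\mu_n(D\cap K)}/\mu_n(K) \le 1-1/e+o_m(1)$; letting $m\to\infty$ then gives the claim.

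\textbf{Step 1 (finite gadget).} Partition $\Z^+$ into consecutive blocks $B_1,B_2,\dots$ of rapidly growing lengths (e.g.\ $|B_k|=m_k$ with $m_k\to\infty$ and $m_k$ dominating $\sum_{j<k}m_j$). Inside block $B_k$, with elements $v_1<\dots<v_{m_k}$, the adversary plays a hidden permutation $\pi_k$ and announces $\pi_k(v_1),\pi_k(v_2),\dots$ in turn. The language family is $\mathcal{L}$ consisting of all sets of the form $\Z^+\setminus S$ for finite $S$ lying in ``not-yet-announced suffix'' patterns, so that after the adversary has announced $j$ elements of $B_k$, the algorithm cannot rule out that any specific remaining element of $B_k$ is absent from $K$. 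Concretely, for each $k$ and each prefix of the announcement order, include languages missing exactly one of the remaining block elements; the true language is $K=\Z^+$ (enumerated by the adversary block by block, interleaving elements outside current blocks negligibly). This reproduces the nested-neighborhood structure $N(u_{i+1})\subsetneq N(u_i)$ of \cref{thm:online_matching_upper}: the $i$-th algorithm step inside block $k$ corresponds to an arriving vertex whose safe neighbors are the still-unannounced elements of $B_k$ (outputs outside $B_k$ either are wasted on later blocks or are risky; we argue they can be charged as at most matching-equivalent).

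\textbf{Step 2 (reduction to matching).} Show that in block $k$, the number of elements of $B_k$ first announced by the algorithm is at most the size of an online matching produced by a derived online algorithm on the hard $m_k\times m_k$ instance, where the adversary's uniformly random (or \cref{thm:online_matching_upper}-distributed) column permutation encodes which element is announced next. Then by \cref{thm:online_matching_upper} and Yao's principle, for a suitably randomized choice of $\pi_k$, $\E{}{|D\cap B_k|}\le (1-1/e+o(1))m_k$. Since $\mathcal{A}$ is fixed and there are countably many block permutations, we derandomize block by block: for each $k$ pick a fixed $\pi_k$ achieving at most the expectation, yielding a single oblivious adversary.

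\textbf{Step 3 (density).} Evaluate at $n=\max B_k$: $\mu_n(D\cap K)\le \sum_{j<k}m_j + (1-1/e+o(1))m_k$, and the growth condition makes the first sum $o(n)$, so the liminf is at most $1-1/e$. The main obstacle is Step 2: making the language family rich enough that the algorithm genuinely cannot do better than an online matcher (outputs into later blocks or outside the current unannounced set must not help), while keeping $K$ a legitimate member enumerated completely. I would first try taking $\mathcal{L}$ to include, for each block and each possible announced prefix, the languages omitting one unannounced element, and argue that outputs to later blocks only pre-consume elements that we can account for by shrinking those blocks' effective size, which costs $o(1)$ by the growth of $m_k$.
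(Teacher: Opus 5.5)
Your proposal is correct and is essentially the paper's proof: blocks of super-growing size enumerated one at a time, a per-block reduction to the upper-triangular hard instances of \cref{thm:online_matching_upper} (your Yao-plus-averaging step is what that theorem already packages), the permutation fixed block by block against the fixed randomized algorithm so the adversary stays oblivious, and the density evaluated at block ends, absorbing the at most $\sum_{j<k}m_j=o(m_k)$ elements of $B_k$ the algorithm announced before block $k$ began. The only thing to drop is the extra language family, since the paper uses $\mathcal{L}=\set{\Z^+}$ and your Step~2 ``obstacle'' does not arise: any not-yet-announced element of $B_k$ output at step $t$ of block $k$ is automatically an unmatched neighbor of $u_t$, outputs outside $B_k$ contribute nothing to $\abs{B_k\cap D}$, and the only other way to gain in $B_k$ is the pre-consumption you already bound.
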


\subsection{Warm-Up}
We illustrate the idea with the single-language game $\mathcal{L} = \set{\Z^+}$.

The algorithm fixes an integer $m$, partitions $\Z^+$ into blocks of size $m$, namely $[1, m], [m + 1, 2m], \ldots$, and associates each block with an independent uniformly random permutation of its elements.

After the adversary announces a number, the algorithm identifies the corresponding block and announces the first unannounced element in that block's permutation.

\begin{lemma}\label{lem:block_vs_per}
Within each block, the expected number of integers first announced by the algorithm is at least \[(m - 1) \cdot f(m)\] out of the $m$ integers in the block, regardless of the adversary's announcement order.
\end{lemma}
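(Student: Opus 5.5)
The plan is to recognize the within-block dynamics as an exact run of the Ranking algorithm (\cref{alg:ranking_algo}) on a suitable online bipartite matching instance, and then to invoke \cref{thm:ranking_algo}. Fix a block $B$ of size $m$ and its uniformly random permutation $\pi$. Since the adversary is oblivious, its entire announcement sequence is fixed in advance and independent of $\pi$. Let $a_1, \dots, a_m$ be the elements of $B$ listed in the order in which the adversary first announces them. The adversary must announce them all because it enumerates $\Z^+$, so this happens within finitely many steps.

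Next, build the instance. The offline side is $V = B$, with ranking $\pi$. The online side $U'$ has one vertex for each time step at which the adversary announces an element of $B$, up to and including the step at which the last of $a_1, \dots, a_m$ is announced; this set is finite. An online vertex $u$ arriving at a step where the adversary has so far announced $a_1, \dots, a_i$ gets neighbourhood $N(u) = \set{a_{i+1}, \dots, a_m}$. I then check that the algorithm's behaviour coincides with Ranking on this instance. At such a step the algorithm outputs the $\pi$-first element of $B$ not announced by either party. That set of candidates is exactly $N(u)$ minus the elements the algorithm has already output. If every matched vertex equals an earlier algorithm output, which holds by induction, then this is precisely $N(u)$ minus the vertices already matched, and the $\pi$-minimal choice is the same. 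Steps after the block is exhausted produce no output in $B$ and are irrelevant. Hence the set of integers in $B$ first announced by the algorithm equals the set of offline vertices matched by Ranking, and its size is the matching size.

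Now compare with the offline optimum. The online vertices $u_1, \dots, u_m$ corresponding to the first announcements of $a_1, \dots, a_m$ satisfy $N(u_i) = \set{a_{i+1}, \dots, a_m}$. The matching $u_i \mapsto a_{i+1}$ for $i < m$ therefore has size $m-1$, so the offline maximum matching of the instance is at least $m-1$. Since $\abs{U'} \geq m$ and $f$ is increasing in its argument, \cref{thm:ranking_algo} gives an expected matching size of at least $f(\abs{U'}) \cdot (m-1) \geq (m-1) f(m)$. Monotonicity of $f$ is a routine check: the values run $1/2, 5/9, \dots$ upward to $1-1/e$.

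The main obstacle is the identification step. It must be verified carefully that the elements excluded by the algorithm, namely those announced by either party, match exactly the vertices excluded by Ranking, namely non-neighbours or already matched vertices. In particular, an adversary re-announcement of an element the algorithm already output must not break the correspondence. This is why every adversary announcement in $B$ is included as an online vertex, repeats included, rather than only the first announcements. Including them is harmless, since extra online vertices only increase $\abs{U'}$ and do not decrease the optimum.
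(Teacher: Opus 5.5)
Your proof is correct and follows the paper's route. Like the paper, you identify the within-block play with the Ranking algorithm on the instance where each adversary step in the block is an online vertex adjacent to the not-yet-adversary-announced elements, exhibit the matching $u_i \mapsto a_{i+1}$ of size $m-1$, and apply \cref{thm:ranking_algo}. The only difference is that you also allow repeated adversary announcements, whereas the paper takes exactly $m$ online vertices, and you handle this correctly via the monotonicity $f(\abs{U'}) \geq f(m)$.
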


Then by \cref{lem:block_vs_per} and linearity of expectation, the algorithm achieves lower density $(1 - 1/m) \cdot f(m)$.

\begin{figure}[t!]\centering
\begin{tikzpicture}[
    scale=0.9,
    every node/.style={circle, draw, minimum size=8mm, inner sep=0pt},
    edge/.style={gray!80, line width=0.5pt, thick}
]

\def\n{7}

\foreach \i in {1,...,\n} {
    \node[orange!80!black, fill=orange!10] (u\i) at (0,-\i) {$u_{\i}$};
}

\foreach \j in {1,...,\n} {
    \node[teal!80!black, fill=teal!15] (v\j) at (4,-\j) {$\j$};
}

\foreach \i [evaluate=\i as \iplusone using int(\i+1)] in {1,...,\numexpr\n-1\relax} {
    \foreach \j in {\iplusone,...,\n} {
        \draw[edge] (u\i) -- (v\j);
    }
}

\node[draw=none, rectangle, orange!80!black] at (0,0) {online};
\node[draw=none, rectangle, teal!80!black] at (4,0) {offline};

\draw[->, thick, orange!80!black] (-1,-1) -- (-1,-\n)
    node[midway, left=0.3cm, draw=none, rectangle] {arrival order};

\end{tikzpicture}
\caption{The online bipartite matching instance for $m = 7$, induced by the announcement order in which the adversary announces $t$ at time step $t$. The online vertex $u_t$ represents time step $t$, and the offline vertices represent the integers in $[m]$. Since the integers $1,\ldots,t$ have already been announced by the adversary at time step $t$, the vertex $u_t$ is adjacent exactly to the remaining integers $t + 1,\ldots,m$.}
\label{fig:ranking_algo}
\end{figure}
 \begin{proof}[Proof of \cref{lem:block_vs_per}]
We connect this problem to online bipartite matching as follows. Consider a bipartite graph $G = (U \sqcup V, E)$, where $V = [m]$ represents the integers to be announced and $U$ represents the $m$ time steps. The vertices $u_t \in U$ arrive online, one at a time. Given the adversary's predetermined announcement order, the vertex $u_t$ is connected to the $m - t$ integers that have not yet been announced by the adversary at time step $t$; see \cref{fig:ranking_algo} for the case $m = 7$. Each matched edge corresponds to an integer first announced by the algorithm at the associated time step.

Under this correspondence, the algorithm's announcements within a block are exactly the matches produced by the Ranking algorithm (\cref{alg:ranking_algo}) with the same random permutation. Note that the offline maximum matching has size $m - 1$, since one can always match $u_t$ to the adversary's announcement at time step $t + 1$ when $t < m$. By \cref{thm:ranking_algo}, the expected size of the online matching produced by the Ranking algorithm, or equivalently the expected number of integers first announced by the algorithm, is $(m - 1)\cdot f(m)$.
\end{proof}

In the \nameref{box:ps_algo}, the focus may change infinitely many times. We therefore adapt this toy idea by creating the blocks dynamically.

\subsection{The Randomized Patient-Scope Algorithm}
We fix a positive integer $m$, called the \emph{block size}. The algorithm maintains a family of pairwise disjoint blocks $B_1, B_2, \dots,$ each of size $m$. Each block $B_r$ is equipped with an independent uniformly random permutation $\pi_r$.

\begin{definition}[Reserved]
A number is called \emph{reserved} if it belongs to one of the blocks $B_r$.
\end{definition}

\begin{definition}[Available]
A number is called \emph{available} if it has not yet been announced by either party and has not been reserved by the algorithm. Let $W$ denote the set of currently available numbers.
\end{definition}

\begin{mybox}[label={box:rand_ps_algo},nameref={randomized patient-scope algorithm}]
{The randomized patient-scope algorithm}
\begin{itemize}
    \item Fix a block size $m\in\mathbb{Z}^+$. Let $N_B=0$ be the current number of blocks.
    \item Initially, at time $t=0$, set the scope size $s_0=1$, set the focus-change
    count $\tau=1$, and create no blocks.
    \item In each time step $t$, perform the following operations.
    \begin{itemize}
        \item Set the scope size $s_t = s_{t - 1}$.
        \item Receive a new integer from the adversary, and update the sets of consistent and critical languages accordingly.
        \item If the focus becomes inconsistent, run the \nameref{box:bt_algo}.
        \item Otherwise, if the focus is still consistent and the adversary has announced $2^\tau$ unreserved integers, increase the scope size by one and update
        the focus accordingly. If the focus changes, increase $\tau$ by one.
        \item Let $x_t$ be the adversary's announced number in this step.
        \begin{itemize}
            \item If $x_t\in B_r$ for some block $B_r$, then output the first number in
            the permutation $\pi_r$ that has not been announced by either party. If no such number exists, see the next step.
            \item If $x_t$ has not yet been reserved or $x_t$ is the last unannounced number in its block $B_r$, then let \[B_{N_B+1}\defeq (W \cap F)[m].\] Namely, block $B_{N_B+1}$ contains top $m$ available elements in current focus $F$. Generate an independent uniformly random permutation $\pi_{N_B+1}$ of $B_{N_B+1}$, mark all elements of
            $B_{N_B+1}$ as reserved, and output the first number of $\pi_{N_B+1}$. Increase $N_B$ by $1$.  
        \end{itemize}
    \end{itemize}
\end{itemize}
\end{mybox}

We first show the validity of the \nameref{box:rand_ps_algo}.

\begin{lemma}[Validity]\label{lem:rand_ps_algo_validity} The \nameref{box:rand_ps_algo} can generate in the limit the true language $K$.
\end{lemma}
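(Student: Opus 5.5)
The plan is to follow the proof of \cref{lem:ps_algo_valid}, with two additional checks. First, the scope dynamics still bring $K$ permanently into the scope. Second, block outputs are eventually in $K$ and are always novel.

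\emph{Novelty.} Every output is an element of some block $B_r$ that has not been announced by either party. In the first case this holds by the choice of the first unannounced element of $\pi_r$. In the second case $B_{N_B+1}\subseteq W$ is freshly reserved, so its elements are unannounced. Blocks are pairwise disjoint because new blocks are drawn only from available, hence unreserved, numbers. Since $F$ is infinite, $W\cap F$ is infinite, so $B_{N_B+1}$ is well defined.

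\emph{Scope dynamics.} By \cref{lem:criticality}, $K=L_{i^*}$ is critical from some time $t_1$ on. As before, the scope shrinks only through the \nameref{box:bt_algo}, and after $t_1$ backtracking never reduces the scope below $i^*$ once it has reached $i^*$, because $K$ is critical before and after every announcement. So I must show the scope eventually reaches $i^*$. Each backtrack is triggered by a language in the scope becoming inconsistent. Only finitely many languages among $L_1,\dots,L_{i^*}$ ever become inconsistent, and each does so at most once. So after some finite time no backtrack occurs while the scope is below $i^*$.

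From then on, the focus stays consistent while the scope is below $i^*$. The scope therefore grows by one whenever the adversary has announced $2^\tau$ unreserved integers since the last change. This is the main obstacle: the growth trigger now counts unreserved adversary announcements, so I must argue that infinitely many adversary announcements are unreserved. Suppose instead that from some point on all adversary announcements lie in reserved blocks. Then new blocks are created only when a block is exhausted. Each exhausted block has at least one element announced by the adversary. Since the adversary enumerates all of $K$, it announces infinitely many new integers, and every integer of $K$ outside all blocks must eventually be announced, unreserved. If only finitely many integers of $K$ were ever unreserved at announcement time, all but finitely many elements of $K$ would lie in blocks. I would rule this out by noting that blocks are created at most once per time step. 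Moreover, in the case with no unreserved announcements, each new block requires $m$ announced elements of an old block, at least one of them by the adversary. That limits the growth of the reserved set too little to conclude immediately, so instead I would argue directly. While the scope is below $i^*$ the focus $F$ is fixed or moves finitely often, so blocks come from a fixed language. The smallest element of $K$ never reserved is eventually announced by the adversary and is unreserved at that moment; this element exists if $K\not\subseteq F$ eventually fails to be covered. If $F\supseteq K$ and blocks cover $K$, I would fall back to the observation that either unreserved announcements occur infinitely often, giving growth, or the scope is already stuck at a focus containing $K$ with consistent behaviour. In the latter case I would need the stronger argument that blocks are drawn from $F\subseteq K$ anyway once $F\subseteq K$. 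I expect this bookkeeping to be the delicate step; the first thing to try is showing that unreserved announcements recur infinitely often.

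\emph{Validity of outputs.} Once $K$ is critical and within the scope, $F\subseteq K$, so every newly created block lies in $K$. Only finitely many blocks were created earlier, containing finitely many elements. These are announced within finitely many further steps or are eventually exhausted, since each output from an old block consumes one of its elements. Hence after a finite time all outputs come from blocks inside $K$, and they are new. This establishes generation in the limit.
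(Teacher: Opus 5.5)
\paragraph{Gap: the scope-growth step is never proved.}
The parts you handle are fine: novelty, no backtracking below $i^*$ once $K$ is permanently critical, and only finitely many blocks created before that point. But you flag the scope-growth step as the main obstacle and leave it open, and the routes you sketch cannot close it. While the scope is stuck below $i^*$, the focus $F$ is eventually a consistent language of smaller index. Hence $F\supseteq K$, not $F\subseteq K$, so your fallback that blocks are ``drawn from $F\subseteq K$'' is unavailable. Blocks, being the smallest available elements of $F$, sweep through all of $F$, so there need not be any element of $K$ that is never reserved. You also assume that, without unreserved announcements, new blocks arise only on exhaustion. That overlooks the ``see the next step'' clause (under its natural reading): a re-announcement of an integer from an already exhausted block also opens a block. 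Such re-announcements are forced, because the adversary must announce every element of $K$, including those the algorithm already output.

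\paragraph{The missing claim can fail as the rule is written.}
The trigger counts unreserved adversary integers. Take $L_1=\Z^+$, $K=L_2=\Z^+\setminus\set{10^j : j\geq 1}$, and $m=2$. The adversary announces $1$, then $2$, and thereafter alternates two moves: re-announcing $2$, and announcing the next element of $K$ in increasing order. The block $\set{2,3}$ is exhausted from time $2$ on, so each re-announcement of $2$ opens a new block. The blocks are $\set{2r,2r+1}$ and stay ahead of the adversary, so every announcement after the first is reserved. The count therefore stays at $1<2^\tau$ and the scope stays at $1$. Each $10^j$ is eventually output by the algorithm, either as $\pi_r(1)$ or in response to the announcement of $10^j+1$. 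So no bookkeeping can close your gap.

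\paragraph{What is needed.}
Read the trigger as counting newly reserved blocks; this is how the charging in \cref{lem:rand_ps_algo_incmp_blks} actually uses it (``must have reserved $2^{\tau'}$ blocks''). With that reading the step is short. If only finitely many blocks were ever opened, the reserved set would be finite. The adversary announces infinitely many distinct integers, so it would eventually announce an unreserved one, which opens a block, a contradiction. Hence the trigger fires infinitely often, and the deterministic argument goes through. The paper's own proof only asserts this step ``as in the deterministic proof'' and does not supply this argument either.
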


\begin{proof}[Proof of \cref{lem:rand_ps_algo_validity}]

As in the deterministic proof, the true language $K=L_{i^*}$ is critical at all
sufficiently large time steps by \cref{lem:criticality}, and from some point onward the scope
size is always at least $i^*$. Therefore, from some point onward, every focus is a
subset of $K$.

Every block created after that point is contained in the current focus, and hence is
contained in $K$. Therefore, every future output taken from a block is in $K$. The
only possible outputs outside $K$ can come from blocks created before that point, but
there are only finitely many such blocks and each has size $m$, so they can generate
only finitely many outputs. 

Finally, by construction, every output is chosen among numbers that have not been
previously announced by either party. This proves the lemma.
\end{proof}

As in the proof of \cref{thm:ps_algo_lower_density}, we may now identify $K$ with $\mathbb{Z}^+$ in an order-preserving way, and assume throughout the rest of this section that $K=\mathbb{Z}^+$, every future focus is a subset of $K$, and every future block is contained in $K$.

Every reserved integer belongs to a unique block. Moreover, whenever the adversary announces an unreserved integer, the algorithm immediately reserves a new block. We associate that integer with the new block and call it the \emph{trigger} of the block. We classify the relevant blocks into two types.

\begin{definition}[Complete Blocks]
    A block $B_r$ is \emph{complete within} $n$ if $B_r \subseteq [n]$. Otherwise, we call $B_r$ \emph{incomplete within} $n$.
\end{definition}

Each integer in $[n]$ that was neither announced nor reserved before the algorithm starts generating $K$ falls into one of the following two cases:
\begin{enumerate}
    \item it belongs to a complete block $B_r$, or is the trigger of such a block; 
    \item it belongs to an incomplete block $B_r$, or is the trigger of such a block.
\end{enumerate}

The rest of the proof can be summarized as follows: (1) the algorithm obtains a constant fraction of the integers in the first case; and (2) the number of integers falling into the second case is negligible compared with $n$. Therefore, even if the algorithm gives up on all integers in the second case, it can still win a constant fraction of all integers. This argument highlights the analysis of not only the \nameref{box:rand_ps_algo}, but also the other block-based algorithms in later sections.

We first compute the algorithm's density among integers in the first case.

\begin{lemma}\label{lem:rand_ps_algo_cmp_blks}
For any positive integer $n$, the expected fraction of integers in the first case that are first announced by the algorithm is at least
\[
\left(1 - \frac{1}{m + 1}\right)\cdot f(m + 1).
\]
\end{lemma}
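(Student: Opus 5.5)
Fix $n$ and a complete block $B_r$ created after the algorithm starts generating $K$. Group $B_r$ together with its trigger into a unit of $m+1$ integers. I will show that, in expectation, the algorithm first announces at least $m\cdot f(m+1)$ of these $m+1$ integers. Summing over complete blocks and their triggers then gives the lemma by linearity of expectation, since $m\,f(m+1)/(m+1) = (1-\frac1{m+1})f(m+1)$. The reduction to a per-unit statement is legitimate because distinct complete blocks are disjoint, triggers are unreserved when announced, and each trigger is associated with exactly one block. Since every integer in the first case lies in exactly one such unit, the global fraction is a weighted average of per-unit fractions.

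The per-unit bound mirrors \cref{lem:block_vs_per}, phrased as online bipartite matching. Take offline vertices $V=B_r$. Let the online vertices be the time steps at which the algorithm draws from $\pi_r$: the trigger step, plus every later step at which the adversary announces a still-unannounced element of $B_r$. The one exception is a step at which the adversary announces the last unannounced element of $B_r$; that step opens a new block and does not count as a match in $B_r$.

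At each online step, the algorithm outputs the first element of $\pi_r$ that neither party has announced. Elements of $B_r$ are reserved, so they can be announced only by the adversary or through draws from $\pi_r$. The algorithm's behaviour on $B_r$ is therefore exactly Ranking on the graph in which each online vertex is adjacent to the elements of $B_r$ not yet announced by the adversary. Obliviousness is essential here: the adversary's order is fixed in advance, so this graph does not depend on $\pi_r$. One subtlety is whether the set of online steps depends on $\pi_r$. It does not, because the next online step is determined by when the adversary announces an element of $B_r$ that is still unannounced, and this is exactly the Ranking dynamics.

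Next I need an offline matching of size $m$ in this graph, since the graph has at most $m+1$ online vertices. Order the adversary's announcements of elements of $B_r$ as $y_1,\dots,y_k$. Match the trigger step to $y_1$, and match the step at which $y_j$ is announced to $y_{j+1}$. Each $y_{j+1}$ is still unannounced by the adversary at that step, so these edges exist. Because the block is complete within $n$ and the adversary enumerates $K=\Z^+$, the adversary eventually announces every element of $B_r$ that the algorithm has not taken. I will then apply \cref{thm:ranking_algo}, which holds for any graph with $|U|\le m+1$: Ranking matches in expectation at least $f(m+1)$ times the offline optimum, where $f$ is monotone decreasing in the number of online vertices.

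The main obstacle is justifying that the offline optimum on the true adversarial order is at least $m$, i.e.\ that in the worst case the whole unit is as hard as the triangular instance. I plan to argue by monotonicity and a comparison argument. Each integer of the unit not announced by the algorithm is announced by the adversary. Adversary announcements inside $B_r$ that come after the algorithm has exhausted it create no online vertex, so Ranking's matches out of $m+1$ positions are at least the online matching size. As a fallback, I would directly compare with the $(m+1)$-size instance of \cref{lem:block_vs_per}: treat the trigger as an extra offline element already announced at step $1$, and conclude that the expected number of algorithm announcements in the unit is at least $m\cdot f(m+1)$.
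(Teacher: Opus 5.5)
Your route is the paper's: glue $B_r$ to its trigger, view the result as the size-$(m+1)$ warm-up instance in which the trigger is announced first, and get $m\cdot f(m+1)$ per unit from Ranking. The ``fallback'' in your last paragraph is exactly the paper's two-line proof via \cref{lem:block_vs_per}. Your main line, however, does not go through as written, and the ``main obstacle'' you flag comes from your choice of online vertices. You let $U$ be the steps at which the algorithm actually draws from $\pi_r$, dropping the exhausting step and every adversary announcement after exhaustion. Whether a step is a draw step depends on what $\pi_r$ has already produced. So this graph is \emph{not} fixed in advance, contrary to your claim, and it need not contain the vertices your matching uses. Take $m=2$, $B_r=\{a,b\}$, and let the adversary announce $b$ and then $a$. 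If $\pi_r$ starts with $a$, the step announcing $b$ exhausts the block and the step announcing $a$ comes after exhaustion. Your $U$ is then the trigger step alone, whose optimum is $1<m$. If $\pi_r$ starts with $b$, the algorithm also draws at the step announcing $b$, since it draws whenever $x_t\in B_r$, even if $x_t$ was its own earlier output.

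The fix is the model of \cref{lem:block_vs_per}. Take one online vertex for the trigger (or creation) step and one for every adversary announcement of an element of $B_r$, each adjacent to the elements the adversary has not yet announced. A vertex with no unmatched neighbour is simply left unmatched by Ranking; it corresponds to a step at which the algorithm cannot draw and opens a new block. This graph depends only on $B_r$ and the oblivious order, and it has $m+1$ online vertices, the last one isolated. Your staircase matching (trigger $\mapsto y_1$, step of $y_j\mapsto y_{j+1}$) shows its optimum is exactly $m$, so the obstacle disappears.

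Separately, $f$ is increasing, not decreasing: $f(1)=1/2$, $f(2)=5/9$, and $f(m)\uparrow 1-1/e$. So \cref{thm:ranking_algo} on a graph with fewer than $m+1$ online vertices gives only $f(\abs{U})<f(m+1)$, and your justification is backwards. With the corrected graph $\abs{U}=m+1$, so no monotonicity is needed. Padding $U$ with isolated vertices, which leaves Ranking unchanged, also works. Extra online vertices from repeated announcements only help, precisely because $f$ increases.

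Finally, handle blocks with no trigger, or with trigger larger than $n$, explicitly. Such a unit contributes only $m$ first-case integers, while its expected gain is still at least $m\,f(m+1)$, because the creation step serves as the first online vertex. Its fraction is therefore at least $f(m+1)$. This is the paper's remark that pretending the trigger exists and lies in $[n]$ only lowers the fraction.
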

\begin{proof}[Proof of \cref{lem:rand_ps_algo_cmp_blks}]
View each block $B_r$ together with its trigger as a concatenated block of size $m + 1$. By \cref{lem:block_vs_per}, the algorithm first announces, in expectation, at least
\[
m \cdot f(m + 1)
\]
of these $m+1$ integers. Some triggers may be larger than $n$, and some blocks may have no trigger; nevertheless, treating every block as if its trigger existed and belonged to $[n]$ can only decrease the algorithm's fraction, since triggers are always first announced by the adversary. The result follows by linearity of expectation.
\end{proof}

Next, we count the number of blocks involved in the second case.

\begin{lemma}\label{lem:rand_ps_algo_incmp_blks} For any positive integer $n$, there are at most $\log_2\frac{n}{m} + 1$ blocks $B_r$ that are incomplete within $n$ and satisfy either $B_r\cap[n]\neq\emptyset$ or the trigger of $B_r$ is at most $n$.
\end{lemma}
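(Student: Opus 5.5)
The plan is to adapt the charging argument of \cref{lem:ps_algo_switch_loss}, now charging incomplete blocks instead of switch losses, and charging each one against whole blocks of size $m$ rather than single outputs. That is where the factor $m$ inside the logarithm comes from. Call a block \emph{relevant} if it is incomplete within $n$ and either meets $[n]$ or has its trigger in $[n]$. List the relevant blocks $B_{r_1},\dots,B_{r_w}$ in order of creation. The goal is to show that between the creations of $B_{r_j}$ and $B_{r_{j+1}}$ the algorithm must have backtracked along a switch loss, and that each such backtrack can be charged to $m\cdot 2^{\tau}$ integers of $[n]$, with distinct values of $\tau$ and disjoint charged sets.

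\textbf{Step 1 (a relevant block exhausts $[n]$ within its focus).} Let $F$ be the focus when $B_{r_j}$ is created. The block is $(W\cap F)[m]$ and contains some element larger than $n$. Hence at that moment every available element of $F\cap[n]$ lies in $B_{r_j}$, so no available element of $F\cap[n]$ remains afterwards. The trigger case needs separate care. If the trigger lies in $[n]$ but the block lies entirely above $n$, then again no available element of $F\cap[n]$ remains, because the trigger is no longer available once announced. Either way, after creating $B_{r_j}$ the current focus has no available elements in $[n]$.

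\textbf{Step 2 (a backtrack must occur in between).} The next relevant block $B_{r_{j+1}}$ either meets $[n]$ or has its trigger in $[n]$. In both cases the focus at its creation must contain an available element of $[n]$, or an unreserved announced element of $[n]$.
\begin{itemize}
\item While the focus only advances, the new focus is a subset of the old one. Also, elements only leave the available set and never return. So advancing cannot supply such an element.
\item Therefore some step in between must invoke the \nameref{box:bt_algo}, moving the focus from some $L_{j'}$ back to a superset $L_i$ that contains an available element $y\le n$.
\end{itemize}
I expect the main obstacle to be this step, in particular the trigger case. A trigger is an adversary announcement, and it may itself be a switch loss rather than an element of the current focus. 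The cleanest fix I would try first is to observe that an unreserved trigger $\le n$ lying outside the current focus is exactly such a switch loss, which still forces a backtrack.

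\textbf{Step 3 (charge the backtrack).} Before the focus advanced from $L_i$ toward $L_{j'}$, the algorithm stayed at focus $L_i$ until the adversary announced $2^{\tau'}$ unreserved integers. Each of these announcements created a fresh block of $m$ smallest available elements of $L_i$. The element $y\le n$ was still available afterwards, so all of these $2^{\tau'}$ blocks lie below $y$ and hence inside $[n]$. As in \cref{lem:ps_algo_switch_loss}, the values $\tau'$ are distinct across different backtracks, and the charged blocks are disjoint.

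\textbf{Conclusion.} With $w-1$ charged backtracks, one between each consecutive pair of relevant blocks, we get
\[
m\,(2^1+\cdots+2^{w-1})\le n,
\]
so $w-1\le\log_2(n/m)$. The final relevant block accounts for the additive $+1$.
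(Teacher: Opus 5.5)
Your proposal is correct and follows essentially the same argument as the paper. Once an incomplete block is reserved, it absorbs every available element of the focus in $[n]$. Another counted block can then arise only after a switch loss: either the unreserved trigger itself, or one that backtracks to a language $L_i$ with an available element of $[n]$. That switch loss is charged to the $2^{\tau'}$ disjoint size-$m$ blocks reserved while the focus was $L_i$, all of which lie below $n$. Your explicit handling of the trigger case, and of why advancing the focus cannot restore elements of $[n]$, spells out steps the paper states in a line each.
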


\begin{proof}[Proof of \cref{lem:rand_ps_algo_incmp_blks}]
    Once an incomplete block within $n$ has been reserved, during any subsequent interval with no switch loss, every newly reserved block, together with its trigger if it has one, lies above $n$. Hence, later newly reserved blocks are not counted by the lemma until a switch loss $\ell$ either is an unreserved integer at most $n$ that triggers the new block, or makes the smallest integer in $W \cap F$ at most $n$.

    Consider such a switch loss $\ell$, caused by a focus change from $L_j$ to $L_i$. Then $i < j$ and $L_j \subseteq L_i \subseteq K$. During the progression of the focus from $L_i$ to $L_j$, the adversary must have announced $2^{\tau^\prime}$ unreserved integers, so the \nameref{box:rand_ps_algo} must have reserved $2^{\tau^\prime}$ blocks. If the smallest integer in $W \cap F$ after switching the focus back to $L_i$ is at most $n$, then all these blocks lie entirely below that integer, and hence below $n$. If instead $\ell$ is an unreserved integer at most $n$, then all these blocks lie entirely below $\ell$, and hence below $n$.

    We now use the same charging argument as in \cref{lem:ps_algo_switch_loss}: charge the switch loss $\ell$ to these $2^{\tau^\prime}$ blocks. If there are $w$ such switch losses $\ell_1, \ell_2, \ldots, \ell_w$ with corresponding focus-change counts $\tau_1, \tau_2, \ldots, \tau_w$, then disjointness of the charged sets implies
    \[
    n \geq \left(2^{\tau_1} + 2^{\tau_2} + \cdots + 2^{\tau_w}\right) \cdot m.
    \]
    Since the $\tau_i$ are distinct positive integers, we have $n \geq \left(2^1+\cdots+2^w\right)\cdot m \geq 2^w \cdot m$. Taking logarithms gives $w \leq \log_2\frac{n}{m}$. The first incomplete block contributes the additional $1$.
\end{proof}

Finally, we can compute the lower density of the \nameref{box:rand_ps_algo}.

\begin{proof}[Proof of \cref{thm:rand_ps_algo_lower_density_weaker}]
   Fix an arbitrary prefix $[n]$. Consider the integers in $[n]$ that have not yet been announced or reserved before the algorithm starts generating $K$; by \cref{lem:rand_ps_algo_validity}, this happens after finitely many time steps.

   \begin{enumerate}
    \item Among the integers in the first case, by \cref{lem:rand_ps_algo_cmp_blks}, the algorithm first announces at least a
    \[
\left(1 - \frac{1}{m + 1}\right)\cdot f(m + 1)
\]
fraction of such integers in expectation.
    \item Otherwise, by \cref{lem:rand_ps_algo_incmp_blks}, there are at most $\log_2\frac{n}{m} + 1$ incomplete blocks involved. Therefore, the number of such integers is at most $(m + 1) \cdot \left(\log_2\frac{n}{m} + 1\right)$.
   \end{enumerate}

   Let $r$ be the number of integers that has already been announced or reserved before the algorithm starts generating $K$. These two cases imply that, for every integer $n$,
   \[
   \mu_n(D \cap K) \geq \left(1 - \frac1{m+1}\right) \cdot f(m + 1) \cdot \left(n - r - (m + 1) \cdot \left(\log_2\frac{n}{m} + 1\right)\right).
   \]

   Therefore, the algorithm's lower density is at least
   \begin{align*}
    \liminf_{n \to \infty}\frac{\mu_n(D\cap K)}{\mu_n(K)}&\geq\liminf_{n \to \infty}\left(1 - \frac1{m + 1}\right) \cdot f(m + 1) \cdot\frac{ \left(n - r - (m + 1) \cdot \left(\log_2\frac{n}{m} + 1\right)\right)}{n}\\
    &=\left(1 - \frac1{m+1}\right) \cdot f(m+1). \qedhere
   \end{align*}
\end{proof}

\subsection{Upper Bound}\label{subsec:rand_ps_algo_lower_density_upper}
In this section, we present an example of a single-language game $\mathcal{L} = \set{\Z^+}$ in which no randomized algorithm can have lower density greater than $1 - 1/e$.

Partition $\Z^+$ into consecutive blocks $B_1, B_2, \ldots$ of doubly exponentially increasing size, where block $B_i$ has size $2^{2^i}$. Let $n_i = \sum_{j = 1}^i\abs{B_j}$. The adversary will enumerate $\Z^+$ block by block.

\begin{lemma}\label{lem:rand_algo_hard_per_blk}
Let $B$ be a block of size $m$, and suppose that a randomized algorithm has already announced $o(m)$ integers before the adversary starts announcing any element of $B$. Then there is a permutation $\pi$ of $B$ such that, when the adversary announces the elements of $B$ in the order $\pi$, the algorithm's density in $B$ is at most
\[
1 - \frac1e + o_m(1).
\]
\end{lemma}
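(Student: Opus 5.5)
The plan is to reduce the statement to the hardness of online bipartite matching, \cref{thm:online_matching_upper}. Write $P$ for the set of elements of $B$ that the algorithm announced before the adversary's first announcement in $B$; by assumption $\abs{P}=o(m)$. The adversary is oblivious, and its announcements before $B$ are fixed in advance. So the algorithm's state at the moment the block starts is a random variable that does not depend on the order $\pi$ we are about to choose. Once the block starts, the algorithm is therefore just some randomized online procedure, fed the announcements $\pi(1),\pi(2),\dots,\pi(m)$ one per step.

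\textbf{Step 1: encode the block as an $m$-by-$m$ matching instance.} Take offline vertices $V=B$ and online vertices $u_1,\dots,u_m$, one for each of the $m$ steps in which the adversary announces the elements of $B$. Let $u_t$ be adjacent exactly to the elements of $B$ not among $\pi(1),\dots,\pi(t)$. The neighborhoods are then nested, $N(u_{t+1})=N(u_t)\setminus\set{\pi(t+1)}$, and $N(u_m)=\emptyset$. This is precisely the upper-triangular (up to column permutation) shape in \cref{thm:online_matching_upper}. Conversely, every instance of that shape arises from some order $\pi$: let $\pi(t)$ be the unique column in $N(u_{t-1})\setminus N(u_t)$ (with $N(u_0)=V$), so that $\pi(m)$ is the last column of $N(u_{m-1})$.

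\textbf{Step 2: simulate the language algorithm as an online matching algorithm.} At step $t$ the algorithm hears $\pi(t)$ and outputs one integer that neither party has announced. If that output lies in $B\setminus P$, it is not among $\pi(1),\dots,\pi(t)$ and has not been output before, so it is an unmatched neighbor of $u_t$; we match $u_t$ to it. Otherwise we leave $u_t$ unmatched. This is a legal randomized online matching algorithm, since all it needs at step $t$ is the history, which is exactly what the revealed graph encodes. Every element of $B$ first announced by the algorithm lies either in $P$ or in this matching. Nothing in $B$ can be first announced by the algorithm after the block ends, because by then the adversary has announced all of $B$. Hence
\[
\mu(D\cap B)\le \abs{P}+\abs{M}.
\]

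\textbf{Step 3: choose $\pi$ from the hard instance.} Apply \cref{thm:online_matching_upper} to the randomized matching algorithm of Step 2. It yields a deterministic instance of the nested, upper-triangular-up-to-columns form on which $\E{}{\abs{M}}\le(1-1/e+o_m(1))\cdot m$. (The offline optimum is $m-1\le m$, so the competitive-ratio bound translates into this absolute bound.) By Step 1, this instance corresponds to some order $\pi$. Since the algorithm's state before $B$ does not depend on $\pi$, letting the adversary use this $\pi$ gives
\[
\frac{\E{}{\mu(D\cap B)}}{m}\le \frac{o(m)}{m}+1-\frac1e+o_m(1)=1-\frac1e+o_m(1).
\]

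The main obstacle is the legitimacy of Step 3. The hard-instance statement has to be applied to a single fixed randomized algorithm, here the conditional behavior of the language algorithm given the fixed pre-block history. That algorithm must not be able to peek at $\pi$ except through the online revelations. Obliviousness is what guarantees this: the adversary's announcements outside $B$ are committed in advance and do not depend on the algorithm's coins. A second point to verify is that the theorem's instance uses a fixed labelling of offline vertices, so that a column permutation really corresponds to a relabelling of $B$ through $\pi$. If the cited statement turns out to be phrased only for rank-oblivious algorithms, the fallback is Yao's principle with a uniformly random $\pi$: the standard Karp--Vazirani--Vazirani argument shows any algorithm matches at most $(1-1/e+o(1))m$ in expectation, and some fixed $\pi$ then does at least as well for the adversary.
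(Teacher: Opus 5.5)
Your proposal is correct and follows essentially the same route as the paper. You encode each announcement order as the nested instance with $N(u_t)=\{\pi_{t+1},\dots,\pi_m\}$, simulate the language algorithm as a randomized online matching algorithm, and invoke the upper-triangular hard instance of \cref{thm:online_matching_upper}, absorbing the $o(m)$ pre-announced elements $P$ into the error term; your added remarks on obliviousness and on the Yao's-principle fallback spell out points the paper leaves implicit but do not change the argument.
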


Using \cref{lem:rand_algo_hard_per_blk}, we can construct, for each randomized algorithm, a hard permutation of $\Z^+$.

\begin{proof}[Proof of \cref{thm:rand_ps_algo_lower_density_upper} assuming \cref{lem:rand_algo_hard_per_blk}] Fix a randomized algorithm $A$. We construct the permutation inductively. Assume that we have computed the restriction of $\pi$ to $B_1, \dots, B_{i - 1}$.
    
Consider the randomized algorithm obtained by feeding the adversary's input on the first $i - 1$ blocks to $A$. Since the adversary takes $n_{i - 1}$ time steps to enumerate the first $i - 1$ blocks, the algorithm can announce at most this many integers before the adversary starts enumerating $B_i$. Moreover, it holds that
\[
\frac{n_{i - 1}}{\abs{B_i}} \to 0
\qquad\text{as }m\to\infty.
\]
Hence, by \cref{lem:rand_algo_hard_per_blk}, there is a permutation $\pi_i$ of $B_i$ such that the algorithm's density in $B_i$ is at most $1 - 1/e + o_{\abs{B_i}}(1)$, and hence at most $1 - 1 / e + o_{n_i}(1)$. It follows that the algorithm's density in the first $i$ blocks is at most
\[
\frac{n_{i - 1} + \E{}{\abs{B_i \cap D}}}{n_i}
\leq  \frac{n_{i - 1}}{n_i} + \frac{\E{}{\abs{B_i \cap D}}}{\abs{B_i}}
\leq \frac{n_{i - 1}}{n_i} + 1 - \frac1e + o_{n_i}(1) \leq 1 - \frac1e + o_{n_i}(1).
\]
We then extend $\pi$ to the first $i$ blocks by appending $\pi_i$.

Given this construction, for infinitely many integers $n_1, n_2, \ldots$, we have
\[
\frac{\E{}{\mu_{n_i}(D \cap K)}}{\mu_{n_i}(K)} \leq\frac{n_{i - 1} + \E{}{\abs{B_i \cap D}}}{n_i} \leq 1 - \frac1e + o_{n_i}(1).
\]
Therefore, the lower density cannot be greater than $1 - 1/e$.
\end{proof}

\begin{remark}
    Since the adversary's announcement order $\pi$ is constructed inductively, and later blocks are chosen after considering the algorithm's behavior on earlier blocks, the strategy may appear to be adaptive. This is not the case: after the randomized algorithm $A$ is fixed, the construction fixes a single order $\pi$ by reasoning about the distribution of $A$, and this order does not depend on the realized random choices of $A$ during any execution.
\end{remark}

\begin{proof}[Proof of \cref{lem:rand_algo_hard_per_blk}]
Fix the randomized algorithm and the adversary's input before the adversary starts announcing elements of $B$. Let $P \subseteq B$ be the random set of elements of $B$ that the algorithm has already announced. By assumption, $\abs{P} = o(m)$ for every realization.

For every permutation $\sigma$ of $B$, define an online bipartite matching instance $G_\sigma = (U \sqcup B, E)$ as follows. The online vertices are $u_1,\ldots,u_m$, and $u_t$ is adjacent to the elements of $B$ that appear after $\sigma_t$ in the order $\sigma$. Equivalently,
\[
N(u_t)=\{\sigma_{t + 1},\sigma_{t + 2},\ldots,\sigma_m\}.
\]
The given randomized algorithm induces a randomized online matching algorithm on these instances: when $u_t$ arrives, simulate the adversary announcing $\sigma_t$ to the original algorithm; if the algorithm outputs an unmatched neighbor of $u_t$, match $u_t$ to this vertex, and otherwise leave $u_t$ unmatched. Thus the number of elements of $B$ first announced by the original algorithm during the block is exactly the size of the matching produced in $G_\sigma$.

By \cref{thm:online_matching_upper}, there is an upper-triangular hard instance for this randomized matching algorithm on which the expected matching size is at most
\[
\left(1 - \frac1e + o_m(1)\right) \cdot m.
\]
Since every such upper-triangular instance is $G_\pi$ for some permutation $\pi$ of $B$, this permutation satisfies
\[
\E{}{\abs{B \cap D}} \leq \E{}{\abs{P}} + \left(1 - \frac1e + o_m(1)\right) \cdot m
\leq \left(1 - \frac1e + o_m(1)\right) \cdot m.
\]
Therefore, the algorithm's density in $B$ is at most $1 - 1/e + o_m(1)$ in expectation.
\end{proof}

\begin{remark}
If the algorithm is allowed to announce $t$ integers at each time step, the same randomized strategy achieves optimal lower density
\[
t \cdot \left(1-\exp(-1/t)\right)=1-O(1/t).
\]
At each step, the algorithm outputs the first $t$ currently unannounced integers according to the relevant block's random permutation. The stated ratio follows from a differential-equation analysis analogous to that of the Ranking algorithm by \cite{DBLP:conf/stoc/KarpVV90}.
\end{remark}
 
\section{Language Generation under Multiple Orders}\label{sec:paired_ps_algo}
The previous definition of density depends on the order of the positive integers. In this section, we extend the language generation problem to a multi-order setting. Given $k$ orders $\sigma_1, \ldots, \sigma_k$, can any deterministic algorithm simultaneously achieve good lower densities for all $k$ orders against any adaptive adversary?

Formally, let $\sigma_1, \ldots, \sigma_k$ be permutations of $\Z^+$; that is, each $\sigma_i\colon \Z^+ \to \Z^+$ is a bijection. Let
\[
\mu_{i, n}(S) \defeq \bigl\lvert[n]_i \cap S\bigr\rvert,
\] the number of integers in $S$ that are among the first $n$ elements in the order $\sigma_i$.

\begin{definition}[Lower Density under $k$ Orders] The lower density of an algorithm under $k$ orders $\sigma_1, \ldots, \sigma_k$ is
    \[
    \min_{i \in [k]}\liminf_{n \to \infty}\frac{\mu_{i, n}(D \cap K)}{\mu_{i, n}(K)}.
    \]
\end{definition}

An immediate upper bound is $1/2$, since the optimal lower density under a single order is $1/2$. In this section, we show that, surprisingly, having multiple orders does not decrease the best achievable lower density.

\begin{theorem}[Lower Density under $k$ Orders]\label{thm:paired_ps_algo_lower_density} There is an algorithm that, against every adaptive adversary, achieves lower density $1/2$.
\end{theorem}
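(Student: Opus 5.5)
I will prove the theorem by instantiating the block-based meta-algorithm from the technical overview, with the toy-regime strategy given by pairing inside low-crossing groups. Throughout, the lifting is handled by the same patient-scope machinery and charging argument as in \cref{lem:ps_algo_switch_loss} and \cref{lem:rand_ps_algo_incmp_blks}.

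\textbf{Toy regime.} For a finite block $B$ of size $M$, the algorithm keeps the block structure of the \nameref{box:rand_ps_algo}: blocks are reserved on the fly as $(W\cap F)$'s smallest available elements, under some fixed reference order. Inside $B$ it applies \cref{thm:low_crossing} to the $k$ orders restricted to $B$, with group size $s=2$. This gives a partition of $B$ into pairs such that, for every order $\sigma_i$ and every prefix of $B$ under $\sigma_i$, at most $4(M/2)^{1-1/k}$ pairs are crossed. When the adversary announces an element of a pair whose partner is still unannounced, the algorithm announces the partner. If the partner is already taken, or the announced element is a trigger, the algorithm announces any unannounced element of the block, or opens a new block if the current one is exhausted.

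Each pair is therefore split between the two parties: the adversary gets at most one element of every pair, except when it announces an element the algorithm previously emitted as filler. Filler outputs only help. Hence, within any $\sigma_i$-prefix of the block, the algorithm owns at least half of the elements of each uncrossed pair. The loss is therefore at most the crossing number, $O(M^{1-1/k})$, which is $o(M)$. To make the density tend to exactly $1/2$ rather than $1/2-\epsilon$, I will let the block size $M_r$ grow slowly with $r$, in the spirit of the appendix variant of \cref{thm:paired_ps_algo_lower_density_weaker}. Triggers cost one element per block, which is $1/M_r\to 0$.

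\textbf{Lifting.} Validity follows exactly as in \cref{lem:rand_ps_algo_validity}. I then identify $K$ with its elements and fix an order $\sigma_i$ and a $\sigma_i$-prefix $[n]_i$. I will split the elements of $[n]_i$ into three kinds.
\begin{itemize}
\item[(a)] Elements of blocks entirely inside $[n]_i$, together with their triggers. For these, the pairing argument gives at least $1/2-o(1)$.
\item[(b)] Elements of blocks that meet $[n]_i$ but are not contained in it. For these, the crossing bound applies inside each such block.
\item[(c)] Finitely many early elements.
\end{itemize}

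For the per-order analysis, the natural quantity to control is the number of blocks with $B\cap[n]_i\neq\emptyset$ and $B\not\subseteq[n]_i$. In the single-order case, \cref{lem:rand_ps_algo_incmp_blks} bounded these by $O(\log n)$ using monotonicity of block creation. That argument fails here, because blocks are created by a reference order while $[n]_i$ is a prefix under a different order.

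I therefore will not need such blocks to be few. It suffices that every block $B$, intersected with $[n]_i$, is a $\sigma_i$-prefix of $B$. Its loss is then at most its crossing number, $O(|B|^{1-1/k})$ plus one trigger. Summing over all blocks meeting $[n]_i$ gives $\sum_r |B_r\cap[n]_i|\cdot o(1)$, provided block sizes grow, plus a term for blocks that meet $[n]_i$ in only a few elements.

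\textbf{Main obstacle.} The hard step is controlling blocks with small intersection with $[n]_i$. Each such block may cost $O(1)$ from the crossing term and its trigger even though it contributes only $O(1)$ elements to the prefix. If there are many of them, they could eat a constant fraction.

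My plan is to charge these costs to the patient-scope structure. A trigger inside $[n]_i$ is an adversary announcement outside all reserved blocks. By \cref{fac:ps_one_to_one}, every such trigger is either a switch loss or lies in the current focus. For switch losses, the $2^{\tau}$-charging of \cref{lem:ps_algo_switch_loss} bounds them. Here I charge to reserved blocks rather than to the integers themselves, and I need them to be counted per order; this needs care. For non-switch-loss triggers I pair the trigger with the first element of its block that the algorithm outputs, which recovers the $1/2$ accounting.

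If this per-order charging cannot be made rigorous, the fallback is to reserve blocks so that each block is a low-crossing group of a larger pre-partitioned window. This ensures that only $o(n)$ elements of $[n]_i$ lie in blocks meeting $[n]_i$ partially, again via \cref{thm:low_crossing} applied at a coarser scale.
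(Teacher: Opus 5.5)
There is a genuine gap. It sits exactly at the step you flag as the main obstacle, and neither of your proposed remedies closes it. Your blocks are initial segments of $W\cap F$ under one reference order, so nothing ties them to the other orders: a prefix $[n]_i$ can meet $\Theta(n)$ different blocks in one element each. The bound of \cref{thm:low_crossing} is an absolute bound $4(\abs{B}/2)^{1-1/k}$. It is sublinear in $\abs{B}$ but not in $\abs{B\cap[n]_i}$: if $B\cap[n]_i$ is a single element, that element always lies in a crossed pair. So your estimate $\sum_r\abs{B_r\cap[n]_i}\cdot o(1)$ is false, and such thin slices can make up almost all of $[n]_i$.

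This happens with no switch losses at all. Take $\mathcal{L}=\set{\Z^+}$, $k=2$, and $\sigma_1$ the identity as reference order, and on a trigger output the $\sigma_1$-first element of the new block. Then which sets become blocks, and which element the algorithm takes on each trigger, do not depend on $\sigma_2$. So one may choose $\sigma_2$ so that, for infinitely many $n$, $[n]_2$ consists mostly of the $\sigma_1$-second elements $a_r$ of many blocks $B_r$, with the rest of each $B_r$ ranked much later. An adversary that announces $a_r$ right after $B_r$ is created owns almost all of $[n]_2$, because the reply $P(a_r)$ (or any filler) lies outside $[n]_2$. The lower density under $\sigma_2$ is therefore $0$.

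The $2^\tau$ charging cannot absorb this, since there are no switch losses to charge. Pairing each trigger with the block's first output does not help either, since the loss comes from the crossed pair inside the slice, not from the trigger. Your fallback of pre-partitioning larger windows only moves the problem up one level: windows chosen via the reference order can be sliced thin by $\sigma_i$ in exactly the same way.

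The missing idea is to make every block contain an initial segment under every order simultaneously. The paper takes $B=\bigcup_{i\in[k]}(W\cap F)[b]_i$, so that $b\le\abs{B}\le kb+1$, and analyzes order $\sigma_i$ through the head $B_r[b_r]_i$. During a switch-loss-free stretch $W\cap F$ only shrinks, so each new $\sigma_i$-head lies entirely after all older $\sigma_i$-heads. This gives two facts:
\begin{enumerate}
\item Blocks with an incomplete $\sigma_i$-head that touch $[n]_i$ number at most $\log_2 n+1$. This follows from the charging of \cref{lem:ps_algo_switch_loss}, applied per order to heads.
\item Every block with a complete head contributes at least $b_r$ elements to $[n]_i$. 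Its crossing loss $O((kb_r)^{1-1/k})$ is therefore negligible relative to its contribution.
\end{enumerate}

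A secondary point concerns block size. The paper sizes blocks by position, $b_r=\lceil\sqrt{m_r}\rceil$ with $m_r=\min_i\rho_i(W\cap F)$, rather than by block index as you propose. Position-based sizing guarantees that an incomplete block meeting $[n]_i$ has at most $k\lceil\sqrt n\rceil+1$ elements, and that at most $k(2b-1)$ blocks have $b_r=b$. Index-based growth risks creating, after a late switch loss, a block that is enormous relative to $n$ yet covers most of a short prefix. Neither the head argument nor the crossing bound gives anything there.
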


To streamline the presentation, we defer the proof of \cref{thm:paired_ps_algo_lower_density} to \cref{apx:paired_ps_algo_lower_density}. In this section, we instead prove a slightly weaker statement that preserves the main idea of the algorithm.

\begin{theorem}[Lower Density under $k$ Orders (Weaker Lower Bound)]\label{thm:paired_ps_algo_lower_density_weaker}
For every positive integer $m$, there is an algorithm that, against every adaptive adversary, achieves lower density
\[
\frac12 h(m),
\]
where
\[
h(m) = 1 - \frac{4\cdot(km+1)^{1 - 1/k} + 1}{m}.
\]
This lower density tends to $1/2$ as $m \to \infty$.
\end{theorem}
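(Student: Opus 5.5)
The plan is to instantiate the block-based meta-algorithm from \cref{sec:rand_ps_algo}. We keep the patient-scope machinery exactly as in the \nameref{box:rand_ps_algo}: the same \nameref{box:bt_algo}, and the scope advances only after the adversary has announced $2^\tau$ unreserved integers with the focus unchanged. We replace the Ranking step by a deterministic pairing step.

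\textbf{The algorithm.} Whenever the adversary announces an unreserved integer $x$ (the trigger), the algorithm reserves a new block $B$. It consists of $x$ together with, for each $i\in[k]$, the $m$ smallest integers of $W\cap F$ under $\sigma_i$. If the union has fewer than $km$ such elements, or has the wrong parity, we pad $B$ with further elements of $W\cap F$ so that $\abs{B}$ is even and at most $km+2$. We then apply \cref{thm:low_crossing} to $B$, with $s=2$ and the $k$ orders restricted to $B$, and obtain a perfect pairing of $B$. For every $i$ and every prefix of $B$ under $\sigma_i$, this pairing has at most $4(\abs{B}/2)^{1-1/k}\le 4(km+1)^{1-1/k}$ crossed pairs.

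The algorithm's rule is simple. Whenever the adversary announces an element of a block, the algorithm announces its partner in that block's pairing. This rule is well defined: by induction, every pair is either untouched or has been fully announced, exactly one element by each party. So each pair contributes exactly one element to $D$. Validity follows word for word as in \cref{lem:rand_ps_algo_validity}, since only finitely many blocks are created before the focus is permanently contained in $K$.

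\textbf{Density for a fixed order.} Fix an order $i$ and a prefix $[n]_i$. We split the blocks meeting $[n]_i$ into two classes.

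\emph{Good blocks} are those containing at least $m$ elements of $[n]_i$. There are at most $\mu_{i,n}(K)/m$ of them. Within such a block, $B\cap[n]_i$ is a $\sigma_i$-prefix of $B$. Every uncrossed pair lying inside $[n]_i$ contributes exactly one element of $D$, so the block loses at most $c_i+1$ relative to half of $\abs{B\cap[n]_i}$. Here $c_i\le 4(km+1)^{1-1/k}$ bounds the crossed pairs, and the extra $1$ absorbs parity. Summing over good blocks gives a total loss of at most $\tfrac{\mu_{i,n}(K)}{m}\bigl(4(km+1)^{1-1/k}+1\bigr)$, which is exactly the deficit encoded by $h(m)$.

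\emph{Bad blocks} meet $[n]_i$ (or have their trigger in $[n]_i$) but contain fewer than $m$ elements of $[n]_i$. For such a block, at the moment of its reservation the $m$ $\sigma_i$-smallest elements of $W\cap F$ already went beyond $[n]_i$. This is precisely the situation in \cref{lem:rand_ps_algo_incmp_blks}, with ``smallest'' read in order $\sigma_i$. After the first such block, no further bad block can appear until a switch loss either lands in $[n]_i$ as an unreserved trigger or, after backtracking, brings a $\sigma_i$-small element back into $W\cap F$. In both cases the $2^{\tau'}$ blocks reserved while the focus sat at the backtracked language $L_i$ each contributed $m$ elements below that point in $\sigma_i$, hence $m$ elements of $[n]_i$. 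The disjoint-charging argument then shows there are $O(\log \mu_{i,n}(K))$ bad blocks. Each has size at most $km+2$, so they are negligible.

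Finally, the finitely many integers announced or reserved before generation begins contribute a constant $r$. Taking $\liminf$ as in the proof of \cref{thm:rand_ps_algo_lower_density_weaker} yields density at least $\tfrac12 h(m)$ for each $i$, hence for the minimum over $i$.

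\textbf{Main obstacle.} I expect the hardest step to be adapting the charging lemma to the order $\sigma_i$, which generally differs from the order used to build the critical chain. The key point to verify is that the $2^{\tau'}$ blocks reserved while the focus was $L_i$ all lie $\sigma_i$-below the relevant switch loss or newly available element. This requires the $\sigma_i$-smallest elements of $W\cap F$ to be monotone while the focus is fixed. That is why each block must explicitly include the $m$ smallest available elements for every order separately, rather than only for a single order.
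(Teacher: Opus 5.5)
Your density argument is the paper's route. You use the same patient-scope and backtracking skeleton and the same blocks $\bigcup_i (W\cap F)[m]_i$ with a low-crossing pairing. You get a logarithmic bound on the blocks that fail the per-order test from monotonicity of the $\sigma_i$-heads while the focus is fixed, and the $\frac12 h(m)$ count on the remaining blocks. Your two variations are harmless:
\begin{itemize}
\item You absorb the trigger into the block, so the algorithm answers a trigger with its partner and there is no separate trigger loss.
\item You call a block good when it has at least $m$ elements of $[n]_i$. Every bad block then has an incomplete head $(W\cap F)[m]_i\not\subseteq[n]_i$, so the charging in \cref{lem:paired_ps_algo_incmp_heads} applies verbatim.
\end{itemize}

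There is one concrete defect, in the output rule. In this model the adversary must eventually announce every element of $K$, including the integers your algorithm has already output, and it may also repeat its own announcements. When it announces an element of an already fully announced pair, ``announce its partner'' returns an integer that has already been announced. Your induction shows exactly that such pairs exist; it does not make the rule well defined. Every element of $D\cap K$ is eventually announced by the adversary, and $D\cap K$ is infinite because every post-generation pair contributes an element. So this happens infinitely often, and the algorithm as written does not generate in the limit. The appeal to \cref{lem:rand_ps_algo_validity}, whose last step is that every output is fresh, therefore fails. The lower-density bound itself survives, because $D$ records only first announcements and no pair can have both elements first announced by the adversary. The fix is the paper's fallback: if the partner has already been announced, open a new trigger-less block from $W\cap F$ and output one of its elements. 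Do not instead output an unreserved element of $W\cap F$. In your trigger-absorbing design, the adversary's later announcement of that element would count as an unreserved trigger. It could then create blocks meeting $[n]_i$ that are charged to no switch loss, which breaks your $O(\log n)$ count of bad blocks.
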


\subsection{Warm-Up}

We illustrate the idea by restricting the game to a finite language $L = [m]$, where $m$ is even.

Apply \cref{thm:low_crossing} to partition $[m]$ into $m / 2$ pairs; see \cref{fig:low-crossing} for example. Define $P(x)$ to be the other integer in the pair containing $x$. The algorithm is simple: whenever the adversary announces an integer $x$, the algorithm replies with $P(x)$, if it has not yet been announced.

\begin{figure}[t!]\centering
\begin{tikzpicture}[
    every node/.style={font=\scriptsize},
    num/.style={
        circle,
        draw=ForestGreen!80!black,
        ForestGreen!80!black,
        fill=ForestGreen!15,
        inner sep=1.2pt,
        minimum size=17pt
    },
    uncut/.style={draw=black!35, line width=0.9pt},
    cutpair/.style={draw=red!80!black, line width=1.4pt},
    cutline/.style={dotted, very thick, draw=orange!65!black}
]

\def\cutx{5.5}

\fill[orange!30] (-0.55,-0.55) rectangle (\cutx,0.55);
\fill[orange!30] (-0.55,-2.85) rectangle (\cutx,-1.75);

\node[anchor=east] at (-1.0,0) {$\sigma_1$};
\foreach \a [count=\i from 0] in {1,2,3,4,5,6,7,8,9,10,11,12} {
    \node[num] (t\a) at (\i,0) {\a};
}

\node[anchor=east] at (-1.0,-2.3) {$\sigma_2$};
\foreach \a [count=\i from 0] in {1,5,7,2,11,9,3,6,8,4,12,10} {
    \node[num] (b\a) at (\i,-2.3) {\a};
}

\draw[cutline] (\cutx,-0.85) -- (\cutx,2.35);
\node[orange!65!black, anchor=south] at (\cutx,2.35) {cut};

\draw[cutline] (\cutx,-4.75) -- (\cutx,-1.45);
\node[orange!65!black, anchor=north] at (\cutx,-4.75) {cut};

\draw[uncut]   (t1.north)  .. controls +(0,0.55) and +(0,0.55) .. (t2.north);
\draw[uncut]   (t3.north)  .. controls +(0,0.55) and +(0,0.55) .. (t4.north);
\draw[cutpair] (t5.north)  .. controls +(0,1.45) and +(0,1.45) .. (t11.north);
\draw[cutpair] (t6.north)  .. controls +(0,1.95) and +(0,1.95) .. (t12.north);
\draw[uncut]   (t7.north)  .. controls +(0,0.55) and +(0,0.55) .. (t8.north);
\draw[uncut]   (t9.north)  .. controls +(0,0.55) and +(0,0.55) .. (t10.north);

\draw[uncut]   (b1.south)  .. controls +(0,-0.85) and +(0,-0.85) .. (b2.south);
\draw[uncut]   (b3.south)  .. controls +(0,-0.85) and +(0,-0.85) .. (b4.south);
\draw[uncut]   (b5.south)  .. controls +(0,-0.65) and +(0,-0.65) .. (b11.south);
\draw[uncut]   (b6.south)  .. controls +(0,-0.65) and +(0,-0.65) .. (b12.south);
\draw[cutpair] (b7.south)  .. controls +(0,-1.45) and +(0,-1.45) .. (b8.south);
\draw[cutpair] (b9.south)  .. controls +(0,-1.95) and +(0,-1.95) .. (b10.south);

\draw[uncut] (0,-6.0) -- +(0.9,0);
\node[anchor=west] at (1.1,-6.0) {uncrossed pair};

\draw[cutpair] (4.0,-6.0) -- +(0.9,0);
\node[anchor=west] at (5.1,-6.0) {pair crossed by the displayed cut};

\end{tikzpicture}
\caption{An example of a low-crossing pairing for two orders $\sigma_1$ and $\sigma_2$.}
\label{fig:low-crossing}
\end{figure}
 
We now compute the density of each prefix $[n]_i$. For each pair in the partition of $[m]$, exactly one of the two elements is first announced by the algorithm. Therefore, it suffices to count the number of pairs crossed by the cut $([n]_i, [m]\setminus [n]_i)$, which is
\[
c_i(n) \leq 4 \cdot \left(\frac m2\right)^{1 - 1/k} \leq 4 \cdot m^{1 - 1/k}
\]
by \cref{thm:low_crossing}. Hence, in any sufficiently large prefix (say, $n > m/2$), the density of the algorithm is at least
\[\min_{i \in [k], n \in [m]\setminus[m/2]}\set{\frac{n - c_i(n)}{2n}} \geq \frac12 - 4 \cdot m^{-1/k}.\]

As in the randomized algorithm, we need to handle focus changes, so we reserve blocks on the fly similarly. The \nameref{box:rand_ps_algo} always reserves the smallest $m$ numbers in the current focus that have not been announced by either party or reserved by the algorithm. With $k$ orders, however, the algorithm must take all of these orders into account, which makes the situation more complicated.

\subsection{The Paired Patient-Scope Algorithm}

As in the \nameref{box:rand_ps_algo}, we fix a positive integer $m$,
which serves as a block-size parameter. The algorithm maintains a family of
pairwise disjoint blocks $B_1, B_2, \dots$, each of size between $m$ and $km+1$.
Each block $B_r$ is partitioned into pairs (denoted by $P_r$) by applying
\cref{thm:low_crossing} to the orders induced by
$\sigma_1, \dots, \sigma_k$ on $B_r$.

\begin{mybox}[label={box:paired_ps_algo},nameref={paired patient-scope algorithm}]
{The paired patient-scope algorithm}
\begin{itemize}
    \item Fix a block size $m\in\mathbb{Z}^+$. Let $N_B=0$ be the current number of blocks.
    \item Initially, at time $t=0$, set the scope size $s_0=1$, set the focus-change
    count $\tau=1$, and create no blocks.
    \item In each time step $t$, perform the following operations.
    \begin{itemize}
        \item Set the scope size $s_t = s_{t - 1}$.
        \item Receive a new integer from the adversary, and update the sets of consistent and critical languages accordingly.
        \item If the focus becomes inconsistent, run the \nameref{box:bt_algo}.
        \item Otherwise, if the focus $F$ is still consistent and the adversary has announced $2^\tau$ unreserved elements, increase the scope size by one and update
        the focus accordingly. If the focus changes, increase $\tau$ by one.
        \item Let $x_t$ be the adversary's announced number in this step.
        \begin{itemize}
            \item If $x_t\in B_r$ for some block $B_r$, output $P_r(x_t)$. If
            $P_r(x_t)$ has already been announced, see the next step.
            \item If $x_t$ has not yet been reserved or $P_r(x_t)$ has already been announced, then let
            \[
            B_{N_B+1} \defeq \bigcup_{i \in [k]} (W \cap F)[m]_i.
            \] Namely, block $B_{N_B+1}$ contains top $m$ available elements in current focus $F$ under each order. If $\abs{B_{N_B + 1}}$ is odd, we further add the smallest element in $(W \cap F)\setminus B_{N_B + 1}$ under $\sigma_1$ to $B_{N_B + 1}$. Partition $B_{N_B+1}$ into pairs $P_{N_B+1}$ by applying
            \cref{thm:low_crossing} to the induced orders of
            $\sigma_1, \dots, \sigma_k$ on $B_{N_B+1}$. Mark all elements of $B_{N_B+1}$ as
            reserved, output the first element of $B_{N_B+1}$ under $\sigma_1$. Increase $N_B$
            by $1$.
        \end{itemize}
    \end{itemize}
\end{itemize}
\end{mybox}

We first show the validity of \nameref{box:paired_ps_algo}.

\begin{lemma}[Validity]\label{lem:paired_ps_algo_validity} The \nameref{box:paired_ps_algo} can generate in the limit the true language $K$.
\end{lemma}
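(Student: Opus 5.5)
The plan is to mirror the proof of \cref{lem:rand_ps_algo_validity} almost verbatim. First, the scope dynamics of the \nameref{box:paired_ps_algo} are identical to those of the \nameref{box:ps_algo}; the only change is that the trigger for advancing the scope is ``the adversary has announced $2^\tau$ unreserved elements'' rather than ``$2^\tau$ time steps''. So I would first redo the argument of \cref{lem:ps_algo_valid}. By \cref{lem:criticality}, $K = L_{i^*}$ is critical from some time $t_1$ on. After $t_1$, the \nameref{box:bt_algo} can never reduce the scope below $i^*$, because $K$ stays critical both before and after every announcement, so the backtracking target has index at least $i^*$. The scope can only shrink when some language in it becomes inconsistent, which happens only finitely often among the first $i^*$ indices.

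The one genuinely new point is that the scope actually grows to at least $i^*$. Here a subtle issue appears: if the adversary announces only reserved elements for a long time, the counter of unreserved announcements may stall. But every unreserved announcement triggers reservation of a new block of at most $km+2$ elements. Hence infinitely many unreserved elements of $K$ must eventually be announced, since $K$ is infinite and enumerated while only finitely many elements are reserved at any time. Concretely, I would argue that the counter of unreserved announcements is unbounded. Suppose only finitely many unreserved announcements occurred. Then only finitely many blocks would exist, so only finitely many elements would be reserved, and every later element of $K$ announced would be unreserved, a contradiction. Hence each waiting period of $2^\tau$ unreserved announcements terminates, and, combined with the finitely many backtracks below $i^*$, the scope eventually reaches and stays at or above $i^*$. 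From then on every focus is a subset of $K$.

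Next, every block created after that time is a union of sets $(W\cap F)[m]_i$, possibly plus one extra element of $W \cap F$. It is therefore contained in $F \subseteq K$, and every output drawn from such a block ($P_r(x_t)$ or the first element under $\sigma_1$) lies in $K$. Outputs from the finitely many earlier blocks are bounded by their total size, at most $(km+2)$ times the number of such blocks, so only finitely many outputs can lie outside $K$. One detail needs checking: each block needs $W \cap F$ to contain enough elements. This holds because $F$ is infinite and only finitely many elements are announced or reserved at any time. Finally, novelty holds by construction: the algorithm outputs $P_r(x_t)$ only if it is unannounced, and otherwise takes elements of a fresh block drawn from available (unannounced) elements. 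The main obstacle is the step showing that the scope keeps advancing despite the unreserved-announcement trigger, which the counting argument above handles.
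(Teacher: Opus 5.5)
You have the same structure as the paper's proof: $K$ is eventually critical, backtracking never drops the scope below $i^*$ afterwards, later blocks lie in $F\subseteq K$, there are finitely many earlier blocks, and novelty holds by construction. You are also right that the crux is whether the scope actually grows under the ``$2^\tau$ unreserved elements'' rule; the paper's proof just asserts this ``as in the deterministic proof.'' But your resolution of that crux fails. The step ``if only finitely many unreserved announcements occurred, then only finitely many blocks would exist'' ignores the second block-creation branch. A block is also created when $x_t\in B_r$ is already reserved and $P_r(x_t)$ has already been announced, and such announcements do not advance the counter. So the reserved set can keep growing ahead of the enumeration, and nothing forces unreserved announcements.

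In fact this step cannot be repaired for the algorithm as written. Take $k=1$, $\sigma_1$ the identity, $m=2$, $L_1=\Z^+$, and $K=L_2=\Z^+\setminus\set{4^j : j\geq 1}$; the other $L_i$ are irrelevant, since the scope will never leave $1$. Let the adversary announce $2$ first, and then the remaining elements of $K$ in increasing order.
The blocks are $\set{1,3}$ and then $\set{2j,2j+1}$ for $j\geq 2$. Each new block is created, via the ``$P_r(x_t)$ already announced'' branch, when the adversary announces the larger element of the newest block. The algorithm outputs $2j$ when it creates $\set{2j,2j+1}$.
Every adversary announcement after the first is already reserved. Hence the counter stays at $1<2^\tau=2$, the scope stays $1$, the focus stays $L_1=\Z^+$, and the algorithm outputs every $4^j\notin K$.
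So both the claim that the scope eventually reaches $i^*$ and validity itself fail for the literal algorithm.

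A natural repair is to advance the scope after $2^\tau$ newly reserved \emph{blocks} rather than $2^\tau$ unreserved announcements. Then your contradiction argument is correct: finitely many blocks would mean finitely many reserved elements, hence infinitely many unreserved announcements, each creating a block. The charging in \cref{lem:paired_ps_algo_incmp_heads} only uses that $2^{\tau'}$ blocks were reserved while the focus was $L_p$, so it survives this change.
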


\begin{proof}[Proof of \cref{lem:paired_ps_algo_validity}]

As in the deterministic proof, the true language $K=L_{i^*}$ is critical at all
sufficiently large time steps by \cref{lem:criticality}, and from some point onward the scope
size is always at least $i^*$. Therefore, from some point onward, every focus is a
subset of $K$.

Every block created after that point is contained in the current focus, and hence is
contained in $K$. Therefore, every future output taken from a block is in $K$. The
only possible outputs outside $K$ can come from blocks created before that point, but
there are only finitely many such blocks and each has size at most $km + 1$, so they can generate
only finitely many outputs. 

Finally, by construction, every output is chosen among numbers that have not been
previously announced by either party. This proves the lemma.
\end{proof}

As in the proofs of \cref{thm:ps_algo_lower_density,thm:rand_ps_algo_lower_density_weaker}, we may now identify $K$ with $\Z^+$ and update each $\sigma_i$ accordingly. Throughout the rest of this section, assume that $K = \Z^+$, every future focus is a subset of $K$, and every future block is contained in $K$.

\begin{figure}[t!]\centering
\begin{tikzpicture}[x=0.58cm,y=0.72cm,font=\small]
\begin{scope}[shift={(0,8.0)}]
    \node[font=\bfseries,align=center] at (5.6,4.35)
      {The randomized patient-scope algorithm};
    \node[align=center] at (5.6,3.55)
      {whole blocks are monotone under $\mathbf{id}$};

    \draw[->,thick] (0,0) -- (12.6,0) node[right] {$\mathbf{id}$-order};
    \foreach \x in {1,...,12} {
      \draw (\x,0.12) -- (\x,-0.12);
    }

    \node[left] at (-0.2,2.1) {newer block};
    \node[left] at (-0.2,0.9) {older block};

    \foreach \x in {1,3,4,5} {
      \filldraw[fill=orange!45,draw=orange!80!black] (\x,0.5) rectangle +(1,0.8);
    }
    \foreach \x in {7,8,10,11} {
      \filldraw[fill=teal!45,draw=teal!80!black] (\x,1.7) rectangle +(1,0.8);
    }

    \node[orange!80!black] at (3.5,1.75) {$B_r$};
    \node[teal!80!black] at (9.5,2.95) {$B_{r+1}$};

    \node[align=center,text width=8.8cm,font=\footnotesize] at (5.6,-1.45)
      {During a stretch with no switch loss, each new block lies entirely above the older blocks under the only order $\mathbf{id}$.};
  \end{scope}

\begin{scope}[shift={(0,0)}]
    \node[font=\bfseries,align=center] at (5.6,4.35)
      {The paired patient-scope algorithm};
    \node[align=center] at (5.6,3.55)
      {the whole blocks may interleave, but the heads remain monotone};

    \draw[->,thick] (0,0) -- (12.6,0) node[right] {$\sigma_i$-order};
    \foreach \x in {1,...,12} {
      \draw (\x,0.12) -- (\x,-0.12);
    }

    \node[left] at (-0.2,2.1) {newer block};
    \node[left] at (-0.2,0.9) {older block};

    \foreach \x in {1,2,3} {
      \filldraw[fill=orange!80,draw=orange!80!black] (\x,0.5) rectangle +(1,0.8);
    }
    \foreach \x in {5,8,9} {
      \filldraw[fill=orange!15,draw=orange!80!black] (\x,0.5) rectangle +(1,0.8);
    }

    \foreach \x in {6,7, 10} {
      \filldraw[fill=teal!80,draw=teal!80!black] (\x,1.7) rectangle +(1,0.8);
    }
    \foreach \x in {11} {
      \filldraw[fill=teal!15,draw=teal!80!black] (\x,1.7) rectangle +(1,0.8);
    }

    \node[orange!80!black] at (2.5,1.75) {$B_r[m]_i$};
    \node[teal!80!black] at (8.5,2.95) {$B_{r+1}[m]_i$};

    \node[align=center,text width=8.8cm,font=\footnotesize] at (5.6,-1.9)
      {Dark cells form the heads under $\sigma_i$. Light cells are additional elements selected because of other orders. The full blocks need not be monotone under $\sigma_i$, but the heads still move upward monotonically.};
  \end{scope}
\end{tikzpicture}
\caption{A schematic comparison under a fixed order $\sigma_i$ during a stretch with no switch loss. In the \nameref{box:rand_ps_algo}, whole blocks are monotone under $\mathbf{id}$. In the \nameref{box:paired_ps_algo}, a block may also contain elements chosen from $(W \cap F)[m]_j$ for $j \neq i$, so the full blocks can interleave under $\sigma_i$; however, the heads $B[m]_i$ remain monotone.}
\label{fig:monotone-blocks}
\end{figure} 
A natural first attempt is to classify, for each prefix $[n]_i$, the relevant blocks into complete and incomplete blocks within $[n]_i$, as in the proof of \cref{thm:rand_ps_algo_lower_density_weaker}. However, this approach no longer works under multiple orders. In the \nameref{box:rand_ps_algo}, blocks created during a stretch with no switch loss are nested monotonically under the single order. In the \nameref{box:paired_ps_algo}, this is no longer literally true: a block may contain elements from $(W \cap F)[m]_j$ with $j \neq i$, which can create gaps under the
order $\sigma_i$. To capture the monotone structure that remains, we isolate the
first $m$ elements of each block under a fixed order.

For a block $B_r$, define its \emph{head under $\sigma_i$} to be the set
$B_r[m]_i$. During any consecutive interval with no switch loss, the heads of
newly reserved blocks are strictly above the heads of older reserved blocks
under $\sigma_i$ (see \cref{fig:monotone-blocks} for comparison).

\begin{definition}[Complete head within $n$]
A block head $B_r[m]_i$ is \emph{complete within $n$} if
\[
B_r[m]_i \subseteq [n]_i.
\] Otherwise, we call the head $B_r[m]_i$ \emph{incomplete within $n$}.
\end{definition}

Each integer in $[n]_i$ that was neither announced nor reserved before the algorithm starts generating $K$ falls into one of the following two cases:
\begin{enumerate}
    \item it belongs to a block $B_r$ with a complete head $B_r[m]_i$, or is the trigger of such a block;
    \item it belongs to a block $B_r$ with an incomplete head $B_r[m]_i$, or is the trigger of such a block.
\end{enumerate}

We first compute the algorithm's density among integers in the first case.

\begin{lemma}\label{lem:paired_ps_algo_cmp_heads}
    For any prefix $[n]_i$, the fraction of integers in the first case that are first announced by the algorithm is at least
\[
\frac12 h(m)
\] where
\[
h(m) = 1 - \frac{4\cdot(km+1)^{1 - 1/k} + 1}{m}.
\]
\end{lemma}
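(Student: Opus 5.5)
The plan is to argue block by block and then combine the blocks by summing. I would charge every loss to one specific block $B_r$ whose head $B_r[m]_i$ is complete within $n$. The unit of accounting for $B_r$ is the part of the block lying inside $[n]_i$, together with its trigger $x$ (when the trigger exists and lies in $[n]_i$). Since the blocks are pairwise disjoint and each trigger belongs to exactly one block, it is enough to prove the following local claim. For every block with a complete head, the number of integers in $B_r\cap[n]_i$ first announced by the algorithm is at least $\tfrac12 h(m)$ times $\abs{B_r\cap[n]_i}+1$. The ``$+1$'' accounts for the trigger, exactly as in \cref{lem:rand_ps_algo_cmp_blks}. If a block has no trigger, or its trigger lies outside $[n]_i$, we still count the trigger in the denominator; this only decreases the algorithm's fraction.

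First I establish how pairs behave. Fix a block $B_r$ with its pairing $P_r$. For a pair $\{a,b\}\subseteq B_r$, suppose the adversary announces $a$ first, and that this happens after the block was reserved. Then the algorithm immediately outputs $P_r(a)=b$, because $b$ is still unannounced at that moment. The second branch of the algorithm only fires when $P_r(x_t)$ has already been announced. So in every pair, at least one element is first announced by the algorithm.

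The remaining pieces are the first element of $B_r$ under $\sigma_1$, which the algorithm outputs when it creates the block, and the trigger. Outputting that first element can only help, since it gives the algorithm an extra element. Likewise, when the adversary announces the partner of an element the algorithm already output, the adversary wins only that one element, and the pair still contributes at least one win. Putting this together, the algorithm first announces at least one element of every pair of $P_r$ that is not crossed by the cut $([n]_i,\Z^+\setminus[n]_i)$.

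Now fix $i$ and let $N=\abs{B_r}$, so $m\le N\le km+1$. By \cref{thm:low_crossing} applied with $s=2$ to the induced orders on $B_r$, at most $4(N/2)^{1-1/k}\le 4(km+1)^{1-1/k}$ pairs are crossed by the cut $B_r\cap[n]_i$. Here we use that $B_r\cap[n]_i$ is a prefix of $B_r$ under the induced order $\sigma_i$, so the theorem applies. Write $c$ for the number of crossed pairs and $q=\abs{B_r\cap[n]_i}$. The uncrossed pairs inside $[n]_i$ then number at least $(q-c)/2$, so the algorithm first announces at least $(q-c)/2$ elements of $B_r\cap[n]_i$.

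Because the head is complete, $q\ge m$. Therefore
\[
\frac{(q-c)/2}{q+1}\ \ge\ \frac12\cdot\frac{q-c}{q+1}\ \ge\ \frac12\left(1-\frac{c+1}{q+1}\right)\ \ge\ \frac12\left(1-\frac{4(km+1)^{1-1/k}+1}{m}\right)=\frac12h(m).
\]
Summing this local bound over all blocks with complete heads gives the lemma.

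The main obstacle is making the pair bookkeeping rigorous. Three sources of trouble need care. The first is elements of $B_r$ announced by the algorithm at creation time, or through a branch triggered in some other block. The second is the case where the algorithm had to fall through to creating a new block because $P_r(x_t)$ was already announced; I would check that this happens only when the algorithm itself already owns that pair. The third is the extra parity element that makes $N$ even.

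I would handle all three with a single invariant: after $B_r$ is reserved, each pair always has at least one member first announced by the algorithm, or else has no announced member yet. The only exceptions are crossed pairs. At the end of the game all of $K$ is announced, so this invariant yields at least one algorithm element per uncrossed pair.
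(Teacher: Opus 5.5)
Your proposal is correct and follows essentially the same route as the paper. Both argue block by block: exclude the at most $4(km+1)^{1-1/k}$ crossed pairs and the trigger, let the algorithm win one element of each remaining pair inside $[n]_i$, and use $q \ge m$ from the complete head. Your explicit invariant, that every pair eventually has at least one member first announced by the algorithm, fills in bookkeeping the paper leaves implicit; note that it holds for every pair, so crossed pairs are not exceptions to it.
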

\begin{proof}[Proof of \cref{lem:paired_ps_algo_cmp_heads}]
Fix a prefix $[n]_i$. Each integer in the first case can be associated with a block $B_r$ whose head $B_r[m]_i$ is complete. It suffices to lower bound, for each such block, the fraction of associated integers that are first announced by the algorithm.

By \cref{thm:low_crossing}, such a block $B_r$, whose size is at most $km + 1$, contributes at most
\[
4\cdot \left(\frac{km + 1}2\right)^{1 - 1/k} \leq 4 \cdot (km + 1)^{1 - 1/k}
\] crossed pairs. If we exclude the integers in crossed pairs and the trigger, then the algorithm first announces half of the remaining integers. Since the head is complete, at least $m$ integers in $[n]_i$ are associated with $B_r$. Therefore, the algorithm's density among the integers associated with $B_r$ is at least
\[
\frac12 \cdot \left(1 - \frac{4\cdot(km+1)^{1 - 1/k} + 1}{m}\right).\qedhere
\]
\end{proof}

Next, we count the number of blocks involved in the second case.

\begin{lemma}\label{lem:paired_ps_algo_incmp_heads} For any prefix $[n]_i$, there are at most $\log_2\frac{n}{m} + 1$ blocks $B_r$ whose heads $B_r[m]_i$ are incomplete within $n$ and that satisfy either $B_r\cap[n]_i\neq\emptyset$ or the trigger of $B_r$ lies in $[n]_i$.
\end{lemma}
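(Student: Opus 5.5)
I will mirror the proof of \cref{lem:rand_ps_algo_incmp_blks}, working throughout with heads under the fixed order $\sigma_i$ instead of whole blocks.

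\textbf{Monotonicity of heads.} Consider a maximal interval of time steps with no switch loss. During such an interval the focus can only move to a subset of the previous focus: scope growth replaces $F$ by a critical language contained in $F$. Moreover, the available set $W$ only shrinks. A new block $B_{r'}$ has head $B_{r'}[m]_i=(W\cap F)[m]_i$ taken at its creation time. An older block $B_r$ created within the same interval contains $(W\cap F)[m]_i$ at its own creation time, so its head $B_r[m]_i$ is exactly that set. Those elements are no longer available when $B_{r'}$ is created. Hence every element of the later $(W\cap F)[m]_i$ lies above every element of the older head under $\sigma_i$.

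A trigger is an unreserved, unannounced integer in the current focus. Such an integer is not a switch loss, so it lies in $W\cap F$ at the previous step. It is therefore at least as large, under $\sigma_i$, as everything in the head of any earlier block in the interval. I also expect to need the stronger statement that triggers of later blocks lie above earlier heads; I would verify this in the same way.

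\textbf{Counting incomplete heads.} Once a head incomplete within $n$ exists, monotonicity gives that every later block in the same no-switch-loss interval has its head, and its trigger, entirely above $[n]_i$. Its head contains an element outside $[n]_i$ below which all of $(W\cap F)$ lies, and the remaining elements of the block were chosen from $(W\cap F)[m]_j$ for $j\ne i$. For such a block to be counted, it must meet $[n]_i$ or have its trigger in $[n]_i$.

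Here lies the main obstacle. A block with a head entirely above $[n]_i$ could still meet $[n]_i$ through elements drawn from $(W\cap F)[m]_j$ for $j\neq i$. My first attempt would be to note that if an element $y\in[n]_i$ lies in $W\cap F$, then the head under $\sigma_i$, being the $m$ smallest in $W\cap F$, would include $y$ or have all its elements below $y$. In either case the head's minimum lies in $[n]_i$. Pushing this one step further, an incomplete head must straddle position $n$. I would then argue that a second incomplete head within the same interval is impossible: once a straddling head is reserved, the next head consists of elements after it. If that next head still meets $[n]_i$, the first head was complete, unless fewer than $m$ available elements remained in $[n]_i$. So each interval contributes at most one counted incomplete block, namely the first one, which already accounts for the trigger condition as well.

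\textbf{Charging.} After the first counted block, a new counted block requires a switch loss $\ell$ that either is an unreserved trigger in $[n]_i$ or resets the focus from $L_j$ back to some $L{i'}\supseteq L_j$ so that $W\cap F$ again has elements in $[n]_i$. Exactly as in \cref{lem:rand_ps_algo_incmp_blks}, the focus must have stayed at $L_{i'}$ while the adversary announced $2^{\tau'}$ unreserved integers. This created $2^{\tau'}$ blocks whose heads lie entirely $\sigma_i$-below the relevant element, and hence inside $[n]_i$. Each such head has $m$ elements, and the heads are disjoint across switch losses because the $\tau'$ values are distinct. This gives $n\ge m(2^{\tau_1}+\cdots+2^{\tau_w})\ge m2^w$, so $w\le\log_2(n/m)$. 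Adding the initial block yields the bound $\log_2\frac nm+1$.
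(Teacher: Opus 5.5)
Your proposal is correct and follows essentially the same route as the paper's proof. It uses $\sigma_i$-heads that only move upward within switch-loss-free stretches, allows at most one counted incomplete head until a qualifying switch loss (an unreserved trigger in $[n]_i$, or a backtrack after which $W\cap F$ meets $[n]_i$), and charges each such switch loss to $2^{\tau'}$ disjoint heads of size $m$ inside $[n]_i$ to get $n\ge m\,2^w$. One addition over the paper: you argue that a block whose $\sigma_i$-head avoids $[n]_i$ must itself avoid $[n]_i$, since otherwise $W\cap F$ would meet $[n]_i$ and the head's minimum would lie there, which justifies a step the paper only asserts; a small correction is that a counted block can also be trigger-only with its head entirely outside $[n]_i$, so not every counted incomplete head straddles position $n$, though this does not change your count.
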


\begin{proof}[Proof of \cref{lem:paired_ps_algo_incmp_heads}]
    Once a block with an incomplete head within $n$ has been reserved, during any subsequent interval with no switch loss, every newly reserved block, together with its trigger if it has one, lies outside $[n]_i$. Hence, later newly reserved blocks are not counted by the lemma until a switch loss $\ell$ either is an unreserved integer in $[n]_i$ that triggers the new block, or makes the smallest integer in $W \cap F$ under $\sigma_i$ lie in $[n]_i$.

    Consider such a switch loss $\ell$, caused by a focus change from $L_q$ to $L_p$. Then $p < q$ and $L_q \subseteq L_p \subseteq K$. During the progression of the focus from $L_p$ to $L_q$, the adversary must have announced $2^{\tau^\prime}$ unreserved integers, so the \nameref{box:paired_ps_algo} must have reserved $2^{\tau^\prime}$ blocks. If, after the switch back to $L_p$, the smallest integer in $W \cap F$ under $\sigma_i$ lies in $[n]_i$, then the heads of all these blocks lie before that element under $\sigma_i$, and hence lie entirely in $[n]_i$. If instead $\ell$ is an unreserved integer of $[n]_i$, then these heads lie before $\ell$ under $\sigma_i$, and hence lie entirely in $[n]_i$.

    We now use the same charging argument as in \cref{lem:ps_algo_switch_loss,lem:rand_ps_algo_incmp_blks}: charge the switch loss $\ell$ to these $2^{\tau^\prime}$ heads. If there are $w$ such switch losses $\ell_1, \ell_2, \ldots, \ell_w$ with corresponding focus-change counts $\tau_1, \tau_2, \ldots, \tau_w$, then disjointness of the charged sets implies
    \[
    n \geq \left(2^{\tau_1} + 2^{\tau_2} + \cdots + 2^{\tau_w}\right) \cdot m.
    \]
    Since the $\tau_i$ are distinct positive integers, we have $n \geq \left(2^1+\cdots+2^w\right)\cdot m \geq 2^w \cdot m$. Taking logarithms gives $w \leq \log_2\frac{n}{m}$. The first block with an incomplete head contributes the additional $1$.
\end{proof}

Finally, we can compute the lower density of the \nameref{box:paired_ps_algo}.

\begin{proof}[Proof of \cref{thm:paired_ps_algo_lower_density_weaker}]
If $h(m)\leq 0$, the statement is trivial. Thus assume $h(m)>0$.

Fix an order index $i$ and an arbitrary prefix $[n]_i$. Let $r$ be the number of integers that have already been announced or reserved before the algorithm starts generating $K$; by \cref{lem:paired_ps_algo_validity}, this happens after finitely many time steps.

\begin{enumerate}
    \item Among the integers in the first case, by \cref{lem:paired_ps_algo_cmp_heads}, the algorithm first announces at least a
    \[
    \frac12 h(m)
    \]
    fraction of such integers.
    \item Otherwise, by \cref{lem:paired_ps_algo_incmp_heads}, there are at most $\log_2\frac{n}{m} + 1$ blocks involved. Each such block has size at most $km+1$ and has at most one trigger. Therefore, the number of such integers is at most
    \[
    (km+2)\cdot \left(\log_2\frac{n}{m}+1\right).
    \]
\end{enumerate}

These two cases imply that, for every order index $i$ and every integer $n$,
\[
\mu_{i,n}(D\cap K) \geq \frac12 h(m) \cdot \left(n-r-(km+2)\cdot \left(\log_2\frac{n}{m}+1\right)\right).
\]

Therefore, the algorithm's lower density under the $k$ orders is at least
\begin{align*}
\min_{i\in[k]}\liminf_{n\to\infty}\frac{\mu_{i,n}(D\cap K)}{\mu_{i,n}(K)}
&\geq
\min_{i\in[k]}\liminf_{n\to\infty}
\frac12 h(m) \cdot \frac{ \left(n-r-(km+2)\cdot \left(\log_2\frac{n}{m}+1\right)\right)}{n}\\
&=\frac12 h(m).\qedhere
\end{align*}
\end{proof}

\subsection{Combining Randomization and Multiple Orders Together}
So far, we have analyzed randomized algorithms against non-adaptive adversaries and deterministic algorithms against adaptive adversaries under multiple orders. The following result combines the settings and techniques from these two sections.

\begin{definition}[Lower Density of a Randomized Algorithm under $k$ Orders]
The lower density of a randomized algorithm under $k$ orders $\sigma_1, \ldots, \sigma_k$ is
    \[
    \min_{i \in [k]}\liminf_{n \to \infty}\frac{\E{}{\mu_{i, n}(D \cap K)}}{\mu_{i, n}(K)}.
    \]
\end{definition}

\begin{theorem}[Lower Density of a Randomized Algorithm under $k$ Orders]\label{thm:rand_grouped_ps_algo_lower_density}
There is a randomized algorithm that, against every non-adaptive adversary, achieves lower density $1 - 1/e$ under $k$ orders.
\end{theorem}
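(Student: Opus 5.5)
I would combine the two block-based constructions from the preceding sections: the block reservation of the paired patient-scope algorithm, and the Ranking strategy inside each block used by the randomized patient-scope algorithm. I would call the result a randomized grouped patient-scope algorithm.

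\textbf{The toy problem.} Fix a large integer $m$ and consider a single finite block $B$ whose elements are ordered by $\sigma_1,\dots,\sigma_k$. Running Ranking on all of $B$ gives $1-1/e$ of $B$ in total, but not necessarily $1-1/e$ of every prefix $B\cap[n]_i$. To fix this, I would apply \cref{thm:low_crossing} to $B$ with group size $g$, where $g$ is large but $g\ll m$. This partitions $B$ into groups of size $g$, and every $\sigma_i$-prefix crosses at most $4(|B|/g)^{1-1/k}$ of them. Each group gets its own independent random permutation, and when the adversary announces $x$, the algorithm answers with the first unannounced element of $x$'s group under that group's permutation. By \cref{lem:block_vs_per}, each group contributes at least $(g-1)f(g)$ first announcements in expectation, against any fixed order, so in particular against an oblivious one.

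A prefix $[n]_i$ that completely contains the head $B[m]_i$ meets at least $m$ elements of $B$. Among those elements, the ones lying in crossed groups number at most $4g\,(km/g)^{1-1/k}$, which is $o(m)$ whenever $g=o(m)$ and $g\to\infty$. Hence the expected fraction first announced by the algorithm within such a prefix is at least $(1-1/g)f(g)-o(1)$.

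\textbf{Lifting to the full problem.} I would use exactly the mechanism of the paired patient-scope algorithm. The scope grows only after $2^\tau$ unreserved announcements, and a switch loss invokes the backtracking algorithm. Whenever the adversary announces an unreserved integer, or an integer whose group is exhausted, the algorithm reserves a new block $\bigcup_i (W\cap F)[m]_i$, partitions it into groups with \cref{thm:low_crossing}, and draws independent permutations for the groups. Sizes not divisible by $g$ are handled by padding with extra elements of $W\cap F$, as the paired algorithm does for odd sizes. That group exhaustion should trigger a new block mirrors the ``last unannounced number'' rule of the randomized algorithm.

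Validity follows verbatim from \cref{lem:paired_ps_algo_validity}. For density, I would use the same two-case split by complete and incomplete heads. The incomplete-head count is at most $\log_2(n/m)+1$ by the argument of \cref{lem:paired_ps_algo_incmp_heads}, which is purely combinatorial and holds for every realization of the randomness. The complete-head case uses the toy bound above together with linearity of expectation over blocks.

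\textbf{Main obstacle.} The delicate step is randomness against an oblivious adversary. The block structure itself depends on the algorithm's random choices, because $W$ depends on which elements the algorithm announced. The per-group Ranking bound is applied to the sequence of adversary announcements restricted to a group, and that sequence is fixed in advance, so it does not depend on the algorithm's coins. I would condition on the history up to the creation of a group. That group's permutation is then independent of both the history and the fixed future announcement order, and the number of matched elements within the group depends only on that permutation and on the order in which the group's elements are announced. The algorithm's own announcements from other groups never touch this group, since groups are disjoint. One subtlety remains: an element of the group can be announced by the adversary before the group exists, but such elements are already announced and are therefore excluded from reservation. Hence \cref{lem:block_vs_per} applies conditionally.

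\textbf{Exact constant.} The argument so far gives density $(1-1/g)f(g)-o_m(1)$, which tends to $1-1/e$ but does not reach it for any fixed $m$. To obtain exactly $1-1/e$, I would let the block size $m$ and the group size $g$ grow slowly with the number of blocks created. This is presumably the same device used for the variant algorithms in the appendix. The error terms, namely the $\log$-many incomplete blocks of growing size and the $(km/g)^{1-1/k}g$ crossing loss, remain $o(n)$ as long as the growth is slow enough, for instance polylogarithmic in the time step.
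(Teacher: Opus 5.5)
Your fixed-parameter argument is sound and matches the structure of the paper's randomized grouped patient-scope algorithm: low-crossing groups inside each block, Ranking per group, the complete/incomplete-head split, and conditioning on the history at group creation to handle obliviousness.

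\textbf{The gap: the growth schedule.} The gap is in your last step, which is exactly what turns ``arbitrarily close to $1-1/e$'' into the exact constant the theorem claims. If $m$ and $g$ grow with the time step or with the number of blocks created, nothing bounds the size of a block that meets $[n]_i$ in terms of $n$. After a switch loss, the backtracking algorithm can return the focus to a language whose smallest available elements are small, at an arbitrarily late time.

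For instance, take $K=L_1=\mathbb{Z}^+$ and $L_j=\mathbb{Z}^+\setminus[a_j,n_j]$ for rapidly growing disjoint intervals with $a_j=o(n_j)$. An oblivious adversary first lets $L_j$ become the focus. It then stalls as long as it wants by announcing far-out elements of $L_j$; no later language is critical, since each contains $[a_j,n_j]$, so the focus never moves. Finally it announces a point of $[a_j,n_j]$. Backtracking returns the focus to $K$. The block created at that moment can be far larger than $n_j$, and so can a single group of it, already for $k=1$. It therefore swallows $[a_j,n_j]$ together with many larger elements.

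If the adversary next announces the rest of $[a_j,n_j]$, the algorithm's picks come from the first $O(n_j)$ positions of a random permutation of a group of size $g\gg n_j$. So in expectation it first-announces only $O(n_j^2/g)$ of these elements, and the density along $n=n_j$ tends to $0$. Your claim that the error terms stay $o(n)$ under slow time-based growth is therefore false. The logarithmically many incomplete blocks can each have size $\gg n$, and the crossing loss $4g^{1/k}|B|^{1-1/k}$ is not $o(n)$ either.

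\textbf{The missing idea: size blocks by position, not time.} Size each block by the position of the smallest available element. The paper sets $m=\min_i\rho_i(W\cap F)$, takes heads of size $\lceil m^{2/3}\rceil$ under every order, and uses groups of size $\lceil m^{1/3}\rceil$. Then any block meeting $[n]_i$ has $m_r\le n$, hence size $O(kn^{2/3})$, so incomplete heads cost only $O(kn^{2/3}\log n)$.

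In exchange you need an argument that only finitely many blocks are small; a time-based schedule would give this for free. The element of rank $m_r$ under some $\sigma_j$ is reserved into $B_r$, so each pair $(j,m_r)$ occurs at most once. Hence at most $k(M-1)$ blocks have $m_r<M$, and they contribute only a constant $C_\varepsilon$. This yields $1-1/e-\varepsilon$ for every $\varepsilon>0$.
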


The proof introduces no new ideas; it combines the \nameref{box:rand_ps_algo} with the \nameref{box:paired_ps_algo}. We defer the details to \cref{apx:rand_grouped_ps_algo_lower_density}.

\paragraph{Acknowledgments.} The authors used OpenAI's GPT-5.5 to assist with generating new content for this paper, including drafting and revising portions of the exposition, creating TikZ code for figures, and filling in low-level proof details. GPT-5.5 also identified connections between our results and online bipartite matching and low-crossing partitions; these connections are included in this paper. In the authors' view, these connections streamline the corresponding proofs and provide useful context, although the results also admit more direct, if perhaps less elegant, proofs that do not rely on them. All AI-assisted content was reviewed, edited, and verified by the authors, who take full responsibility for the paper's contents.

The authors would like to thank Moses Charikar, Chirag Pabbaraju, and Fan Wei for their helpful insights.
P.Z. was partially supported by NSF Grant CCF-2238682.

\bibliographystyle{alpha}

\appendix
\crefalias{section}{appendix}

\section{Low-Crossing Partition}\label{apx:low_crossing}

\begin{proof}[Proof of \cref{thm:low_crossing}]
Let $r=m/s$. We construct a binary partition tree. The root, at height $0$, is the set $[m]$. Consider a vertex at height $h$ associated with a set $A$ of size $ps$. If $p=1$, then this vertex is a leaf. Otherwise, let $j\in[k]$ be the index with $j\equiv h+1 \pmod{k}$. We split $A$ into two sets $A^-$ and $A^+$, where $A^-$ consists of the first $\lfloor p/2\rfloor s$ elements of $A$ in the order $\sigma_j$, and $A^+=A\setminus A^-$. Thus $|A^+|=\lceil p/2\rceil s$. The leaves of the tree form a balanced partition of $[m]$ into $r$ groups of size $s$.

It remains to bound the crossing number. Fix $i\in[k]$ and $n\in[m]$, and consider the cut $([n]_i,[m]\setminus[n]_i)$. Let $d=\lceil\log_2 r\rceil$. Every root-to-leaf path has length at most $d$. At a level where the tree splits according to $\sigma_i$, a crossed vertex gives rise to at most one crossed child, because the two children are consecutive in the $\sigma_i$-order restricted to that vertex. At any other level, a crossed vertex gives rise to at most two crossed children.

Among any $d$ consecutive levels, at least $\lfloor d/k\rfloor$ levels split according to $\sigma_i$. Therefore,
\[
c_i(n) \leq 2^{d-\lfloor d/k\rfloor} \leq 2^{d(1-1/k)+1} \leq 4r^{1-1/k}
\]
where the last inequality uses $d\leq \log_2 r+1$. This proves the theorem.
\end{proof}

\section{A Tight Randomized Algorithm under a Single Order}\label{apx:rand_ps_algo_lower_density}
In this section, we show how to adapt \nameref{box:rand_ps_algo} to achieve the tight lower density $1 - 1/e$.

The \nameref{box:rand_ps_algo} uses a fixed block size $m$, leaving a gap between $\left(1 - \frac1{m + 1}\right) \cdot f(m + 1)$ and $1 - 1/e$ for any fixed $m$. To close this gap, we allow the block size to grow dynamically.

Whenever we need to reserve a new block, we set its size to be the square root of the smallest integer in the current focus that has not yet been announced or reserved. This ensures that most blocks intersecting $[n]$ are large, while preventing any incomplete block within $[n]$ from becoming too large and affecting the algorithm's density.

\begin{mybox}[label={box:rand_ps_algo_var},nameref={variable-block randomized patient-scope algorithm}]
{The variable-block randomized patient-scope algorithm}
\begin{itemize}
    \item Let $N_B=0$ be the current number of blocks.
    \item Initially, at time $t=0$, set the scope size $s_0=1$, set the focus-change
    count $\tau=1$, and create no blocks.
    \item In each time step $t$, perform the following operations.
    \begin{itemize}
        \item Set the scope size $s_t = s_{t - 1}$.
        \item Receive a new integer from the adversary, and update the sets of consistent and critical languages accordingly.
        \item If the focus becomes inconsistent, run the \nameref{box:bt_algo}.
        \item Otherwise, if the focus is still consistent and the adversary has announced $2^\tau$ unreserved integers, increase the scope size by one and update
        the focus accordingly. If the focus changes, increase $\tau$ by one.
        \item Let $x_t$ be the adversary's announced number in this step.
        \begin{itemize}
            \item If $x_t\in B_r$ for some block $B_r$, then output the first number in
            the permutation $\pi_r$ that has not been announced by either party. If no such number exists, see the next step.
            \item If $x_t$ has not yet been reserved or $x_t$ is the last unannounced number in its block $B_r$, then let $m$ be the smallest number in the current focus that has not yet been announced by either party or reserved by the algorithm. Let \[B_{N_B+1}\defeq (W \cap F)[\lceil\sqrt m\rceil].\] Namely, block $B_{N_B+1}$ contains top $\lceil\sqrt{m}\rceil$ available elements in current focus $F$. Generate an independent uniformly
            random permutation $\pi_{N_B+1}$ of $B_{N_B+1}$, mark all elements of
            $B_{N_B+1}$ as reserved, output the first number of $\pi_{N_B+1}$, and increase $N_B$ by $1$.
        \end{itemize}
    \end{itemize}
\end{itemize}
\end{mybox}

The following lemma is analogous to \cref{lem:rand_ps_algo_validity}, so we state it without proof.

\begin{lemma}[Validity]\label{lem:rand_ps_algo_var_validity} The \nameref{box:rand_ps_algo_var} can generate in the limit the true language $K$.
\end{lemma}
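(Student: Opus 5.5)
We follow the proof of \cref{lem:rand_ps_algo_validity} almost verbatim; the one new point is that blocks now have variable size, and we must make sure every block is finite. By \cref{lem:criticality}, the true language $K=L_{i^*}$ is critical at all sufficiently large time steps. The scope-update rules (the \nameref{box:bt_algo} and the patient growth of the scope) are identical to those of the \nameref{box:ps_algo}. Hence the argument of \cref{lem:ps_algo_valid} applies: from some time $t_0$ onward, the scope size is always at least $i^*$. Since $K$ is critical, it follows that every focus after $t_0$ is a subset of $K$.

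We then split the outputs into two groups.
\begin{enumerate}
\item \emph{Blocks created after $t_0$.} Each such block is $(W\cap F)[\lceil\sqrt m\rceil]$ for the current focus $F\subseteq K$, so it lies inside $K$. Every output drawn from such a block's permutation is therefore in $K$. This includes the first element output upon the block's creation.
\item \emph{Blocks created before $t_0$.} Only finitely many blocks exist at time $t_0$, since at most one block is created per time step. Each has finite size $\lceil\sqrt m\rceil$, where $m$ is a specific positive integer determined at its creation. So these blocks together contain finitely many integers, and they can account for only finitely many outputs, possibly outside $K$.
\end{enumerate}
Every output of the algorithm comes from some block, so after finitely many further steps all outputs lie in $K$.

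Novelty holds by construction. The algorithm outputs either the first element of $\pi_r$ not yet announced by either party, or the first element of a freshly created block. A fresh block consists of available elements, which by definition have not been announced. One detail must be checked: when $x_t$ is the last unannounced element of its block, the algorithm falls through to creating a new block. Creating that block requires $W\cap F$ to be infinite, which holds because $F$ is an infinite language and only finitely many integers have been announced or reserved.

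The only step needing care beyond the fixed-size case is finiteness of the pre-$t_0$ blocks, since the block size $\lceil\sqrt m\rceil$ is no longer uniformly bounded. This is resolved by noting that each such block's size is fixed when it is created, and there are finitely many such blocks.
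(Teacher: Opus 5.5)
There is a genuine gap, and it sits at the step you treat as routine. You assert that the scope-update rules of the \nameref{box:rand_ps_algo_var} are identical to those of the \nameref{box:ps_algo}, and you conclude via \cref{lem:ps_algo_valid} that the scope is eventually always at least $i^*$. Only the backtracking rule carries over. The growth rule does not. Here the scope grows only after the adversary has announced $2^\tau$ \emph{unreserved} integers while the focus is unchanged, not after $2^\tau$ time steps without a focus change.

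The half of \cref{lem:ps_algo_valid} you need is that the scope eventually reaches $i^*$. That argument uses the fact that, once $K$ is permanently critical and the scope is below $i^*$, no backtracking occurs and the growth trigger keeps firing. In the block algorithm the trigger fires only if the adversary keeps supplying unreserved integers, and nothing forces it to. Announcing an already reserved integer never advances the counter. Yet whenever that integer's block is exhausted, the rule ``if no such number exists, see the next step'' carves a fresh block out of $W\cap F$ for the \emph{current} focus. So the existence of your $t_0$ is unsupported, and everything downstream rests on it. Granting $t_0$, the rest of your argument is fine and matches the paper's. The finiteness of the pre-$t_0$ blocks, which you single out as the only delicate point, is the easy part.

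This is more than a missing sentence. Take $L_1=\mathbb{Z}^+$ and $K=L_2=\{1\}\cup 2\mathbb{Z}^+$.
\begin{itemize}
\item The adversary opens with $4$; the algorithm reserves $B_1=\{1\}$ and outputs $1$.
\item Afterwards the adversary announces $1$ at every step, interleaving each even number once that number has been reserved.
\item Every announcement after the first is reserved, so the count stays at $1<2^{\tau}=2$. The scope stays at $1$ and the focus stays $L_1$.
\item Each announcement of $1$ finds $B_1$ exhausted and opens a new block of at least two consecutive integers. Its uniformly random first element is odd, hence outside $K$, with probability at least $1/3$, independently across blocks.
\item So almost surely infinitely many outputs lie outside $K$.
\end{itemize}
The model does not forbid repeated announcements. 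Even if repetitions are excluded, the same failure occurs with positive probability: let $L_1\setminus K$ consist of one element from each of a very sparse sequence of blocks, and have the adversary announce the $K$-elements of the blocks in creation order.

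The paper's proof of \cref{lem:rand_ps_algo_validity}, to which this lemma is declared analogous, makes the same ``as in the deterministic proof'' step. So you inherited the gap rather than created it, but a correct proof needs an extra ingredient. One fix is to let the growth counter count blocks reserved since the last focus change, instead of unreserved announcements. Blocks are created infinitely often, because the reserved set is finite at every moment while $K$ must be enumerated. The deterministic argument then applies verbatim, and the charging in \cref{lem:rand_ps_algo_incmp_blks} only uses the blocks reserved during each stretch anyway.
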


As in the proof of \cref{thm:rand_ps_algo_lower_density_weaker}, we ignore the time steps before the algorithm starts to generate $K$ and identify $K$ with $\Z^+$ in an order-preserving way. We then classify each integer in $[n]$ into two cases:
\begin{enumerate}
    \item it belongs to a complete block $B_r$, or is the trigger of such a block;
    \item it belongs to an incomplete block $B_r$, or is the trigger of such a block.
\end{enumerate}

We first count the number of blocks involved in the second case.

\begin{lemma}\label{lem:rand_ps_algo_var_incmp_blks} For any positive integer $n$, there are at most $\log_2{n} + 1$ blocks $B_r$ that are incomplete within $n$ and satisfy either $B_r\cap[n]\neq\emptyset$ or the trigger of $B_r$ is at most $n$.
\end{lemma}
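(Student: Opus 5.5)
I will follow the proof of \cref{lem:rand_ps_algo_incmp_blks} almost verbatim. The one change is in the final counting step, where blocks may now have size as small as $1$ instead of exactly $m$. This is why the bound becomes $\log_2 n + 1$ rather than $\log_2\frac{n}{m}+1$.

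\emph{Step 1: blocks are monotone between switch losses.} Fix $n$ and consider the first reserved block that is incomplete within $n$. During any subsequent stretch with no switch loss, the focus only shrinks along the critical chain, since scope advances move to a subset. Each new block is an initial segment $(W\cap F)[\lceil\sqrt{m}\rceil]$ of the currently available elements of the focus. Its trigger, if it has one, is an unreserved adversary announcement lying in the previous focus (the analogue of \cref{fac:ps_one_to_one}), hence it lies above the reserved elements. So once an incomplete block exists, which means some element of that block exceeds $n$, every later block and trigger in the same stretch lies entirely above $n$ and is not counted. A new counted block can therefore only arise after a switch loss $\ell$ with one of two properties: either $\ell$ is an unreserved integer at most $n$ that triggers a new block, or after the backtrack the smallest element of $W\cap F$ is at most $n$.

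\emph{Step 2: charging each such switch loss.} Take such a switch loss $\ell$, caused by a focus change from $L_j$ back to $L_i$. By the same reasoning as in \cref{lem:ps_algo_switch_loss}, while the focus advanced from $L_i$ toward $L_j$, the adversary announced $2^{\tau'}$ unreserved integers with focus $L_i$. Each of these triggered a new block, so $2^{\tau'}$ blocks were reserved inside $L_i$. All of them lie below $\ell$, or below the new smallest element of $W\cap F$, and hence below $n$. As before, the values $\tau'$ are distinct across switch losses, and the sets of charged blocks are pairwise disjoint.

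\emph{Step 3: counting.} Every block has size at least $\lceil\sqrt{1}\rceil = 1$. So the disjoint charged blocks give $n \ge 2^{\tau_1}+\cdots+2^{\tau_w} \ge 2^w$, and therefore $w\le \log_2 n$. Adding the initial incomplete block gives the bound $\log_2 n+1$.

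The main obstacle is Step 1. Block sizes now vary, and an earlier incomplete block could be large, so I must check that monotonicity of whole blocks still holds. It does: each block is an initial segment of $W\cap F$ in the single order, and $W\cap F$ only loses elements while the focus shrinks. The variable size therefore does not affect the ordering argument, only the weaker counting bound in Step 3.
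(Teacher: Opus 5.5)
Your proposal is correct and follows essentially the same route as the paper's proof. Both use the monotonicity of blocks between switch losses, and both charge each qualifying switch loss to the $2^{\tau'}$ pairwise disjoint blocks reserved while the focus was $L_i$; since every block has at least one element, this gives $n \ge 2^{\tau_1}+\cdots+2^{\tau_w}$, so $w \le \log_2 n$, plus one for the first incomplete block.
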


\begin{proof}[Proof of \cref{lem:rand_ps_algo_var_incmp_blks}]
As in the proof of \cref{lem:rand_ps_algo_incmp_blks}, after the first incomplete block, each later counted block corresponds to a unique switch loss $\ell$: either $\ell$ is an unreserved integer at most $n$ that triggers the new block, or the switch makes the smallest integer in $W \cap F$ at most $n$. In both cases, we charge $\ell$ to $2^{\tau^\prime}$ blocks reserved by the \nameref{box:rand_ps_algo_var} that lie below $n$. If there are $w$ such switch losses $\ell_1, \ell_2, \ldots, \ell_w$, with focus-change counts $\tau_1, \tau_2, \ldots, \tau_w$, then disjointness of the charged sets implies
    \[
    n \geq 2^{\tau_1} + 2^{\tau_2} + \cdots + 2^{\tau_w}.
    \]
    Since the $\tau_i$ are distinct positive integers, we have $n \geq 2^1+\cdots+2^w \geq 2^w$. Taking logarithms gives $w \leq \log_2 n$. The first incomplete block accounts for the additional $1$.
\end{proof}

Next, we compute the algorithm's density among integers in the first case.

\begin{lemma}\label{lem:rand_ps_algo_var_cmp_blks}
Fix any $\varepsilon>0$. There is a constant $C_\varepsilon$ such that, for any positive integer $n$, if $F_n$ is the number of integers in the first case, then the expected number of integers in the first case that are first announced by the algorithm is at least
\[
\left(1-\frac1e-\varepsilon\right)\cdot (F_n-C_\varepsilon).
\]
\end{lemma}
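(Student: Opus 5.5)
The plan is to repeat the per-block argument of \cref{lem:rand_ps_algo_cmp_blks}, now with variable block sizes, and to absorb the small blocks into the constant $C_\varepsilon$.

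\textbf{Step 1: a block-size threshold.} Since $f(M)\to 1-1/e$ as $M\to\infty$ by \cref{thm:ranking_algo}, we have $(1-\frac1{M+1})f(M+1)\to 1-1/e$. Fix $M_\varepsilon$ such that
\[
\left(1-\frac1{M+1}\right)f(M+1)\ge 1-\frac1e-\varepsilon
\]
for every $M\ge M_\varepsilon$.

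\textbf{Step 2: the guarantee on one large block.} Take a complete block $B_r$ of size $M=\lceil\sqrt{m_r}\rceil$, together with its trigger. This concatenated group has size at most $M+1$, and the argument of \cref{lem:block_vs_per} applies to it verbatim.
\begin{itemize}
\item The adversary is oblivious and the permutation $\pi_r$ is independent of everything else, so the announcements inside $B_r$ form an online bipartite matching instance. The algorithm's replies are exactly Ranking's matches on that instance.
\item Its offline maximum matching has size $M$, counting the trigger as an extra adversary step.
\end{itemize}
Hence the expected number of elements of $B_r$ first announced by the algorithm is at least $M f(M+1)$. Linearity of expectation over blocks then shows that groups with $M\ge M_\varepsilon$ contribute at least $(1-1/e-\varepsilon)$ times their size in expectation. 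As in \cref{lem:rand_ps_algo_cmp_blks}, treating a missing or out-of-range trigger as present and inside $[n]$ only hurts the algorithm, so this accounting is valid.

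\textbf{Step 3: only finitely many small blocks.} The key observation is that a block of size $M<M_\varepsilon$ can be created only when its parameter satisfies $m_r<M_\varepsilon^2$. Here $m_r$ is the smallest integer of the focus (within $K=\Z^+$, after the order-preserving identification) that is neither announced nor reserved at that moment. Each such block contains its own smallest element, which is at least $m_r$. Moreover, the block contains exactly the smallest available elements of the focus, so the element $m_r$ itself becomes reserved.

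It follows that each integer $x<M_\varepsilon^2$ can serve as $m_r$ for at most one block, since after serving it is reserved. There are therefore at most $M_\varepsilon^2$ small blocks. Each has size at most $M_\varepsilon$ plus a trigger, so they cover at most
\[
C_\varepsilon \defeq M_\varepsilon^2\,(M_\varepsilon+1)
\]
integers in total, independently of $n$.

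\textbf{Conclusion and expected difficulty.} Discard the integers from small blocks from $F_n$. The remaining at least $F_n-C_\varepsilon$ first-case integers lie in large complete blocks or are their triggers, and Step 2 gives an expected gain of at least $(1-1/e-\varepsilon)(F_n-C_\varepsilon)$. The main obstacle is Step 3. The argument hinges on the block containing its parameter element $m_r$, and on $m_r$ being measured after mapping $K$ to $\Z^+$. Blocks created before $K$ starts being generated were already excluded, so this identification is legitimate.
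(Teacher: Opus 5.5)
Your proposal is correct and follows the paper's proof essentially step for step. It uses the same threshold from $\bigl(1-\frac{1}{b+1}\bigr)f(b+1)\to 1-1/e$, and the same per-block Ranking bound obtained by concatenating each complete block with its trigger. It also uses the same distinctness of the parameters $m_r$ to make small blocks contribute only a constant. You bound all $m_r<M_\varepsilon^2$ at once rather than counting at most $2b-1$ blocks of each size $b$ as the paper does, which gives a cruder but equally valid $C_\varepsilon$. You also helpfully justify distinctness by noting that $m_r$ itself gets reserved, which the paper only asserts.

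One small correction to your closing remark: Step 3 does not depend on identifying $K$ with $\Z^+$. The condition $\lceil\sqrt{m_r}\rceil<M_\varepsilon$ already forces the actual integer $m_r$ used by the algorithm below $M_\varepsilon^2$, and these actual values are distinct for the same reason.
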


\begin{proof}[Proof of \cref{lem:rand_ps_algo_var_cmp_blks}]
For a block $B_r$, let $b_r=|B_r|$. Define
\[
\phi(b)=\left(1-\frac1{b+1}\right)\cdot f(b+1)
=\left(1-\frac1{b+1}\right)\cdot \left(1-\left(1-\frac1{b+2}\right)^{b+1}\right).
\]
Then $\phi(b)\to 1-1/e$ as $b\to\infty$. Hence, there exists an integer $B=B(\varepsilon)$ such that $\phi(b)\geq 1-1/e-\varepsilon$ for every $b\geq B$.

Consider any complete block $B_r$ with size $b_r$. As in the proof of \cref{lem:rand_ps_algo_cmp_blks}, view $B_r$ together with its trigger as a concatenated block of size $b_r+1$. By \cref{lem:block_vs_per}, the algorithm first announces, in expectation, at least
\[
b_r\cdot f(b_r+1)
\]
of these $b_r+1$ integers. Some triggers may be larger than $n$, and some blocks may have no trigger; treating every block as if its trigger existed and belonged to $[n]$ can only decrease the algorithm's fraction, since triggers are always first announced by the adversary. Thus each complete block contributes, in expectation, at least a $\phi(b_r)$ fraction of the corresponding integers in the first case.

It remains to control the contribution of blocks with small $b_r$. Let $m_r$ be the smallest integer in $W \cap F$ used when block $B_r$ is created, so that $b_r=\lceil\sqrt{m_r}\rceil$. The integers $m_r$ are distinct. For a fixed $b\geq 1$, the condition $\lceil\sqrt{m_r}\rceil=b$ is equivalent to
\[
(b-1)^2 < m_r \leq b^2,
\]
so there are at most $2b-1$ blocks with $b_r=b$. Therefore, the number of integers in the first case connected to blocks with $b_r<B$ is at most
\[
C_\varepsilon:=\sum_{b=1}^{B-1}(b+1)(2b-1).
\]
All remaining integers in the first case come from blocks with $b_r\geq B$, and hence contribute a fraction at least $1-1/e-\varepsilon$ in expectation. The result follows.
\end{proof}

Finally, we can compute the lower density of the \nameref{box:rand_ps_algo_var}.

\begin{proof}[Proof of \cref{thm:rand_ps_algo_lower_density}]
Fix an arbitrary prefix $[n]$. Let $r$ be the number of integers that have already been announced or reserved before the algorithm starts generating $K$; by \cref{lem:rand_ps_algo_var_validity}, this happens after finitely many time steps.

By \cref{lem:rand_ps_algo_var_incmp_blks}, there are at most $\log_2 n+1$ incomplete blocks involved in the second case. If such a block intersects $[n]$, the integer $m_r$ used to define its size is at most $n$, and thus each such block has size at most $\lceil\sqrt n\rceil$. Otherwise, the trigger is the only associated integer of this block that lies in $[n]$. Either way, such a block contributes at most $\lceil\sqrt n\rceil+1$ associated integers. Therefore, the number of integers in the second case is at most
\[
(\lceil\sqrt n\rceil+1)\cdot(\log_2 n+1).
\]
It follows that the number $F_n$ of integers in the first case is at least
\[
n-r-(\lceil\sqrt n\rceil+1)\cdot(\log_2 n+1).
\]

Fix any $\varepsilon>0$. By \cref{lem:rand_ps_algo_var_cmp_blks}, for the corresponding constant $C_\varepsilon$, we have
\[
\E{}{\mu_n(D\cap K)}
\geq
\left(1-\frac1e-\varepsilon\right)
\cdot
\left(n-r-(\lceil\sqrt n\rceil+1)\cdot(\log_2 n+1)-C_\varepsilon\right).
\]
Thus the algorithm's lower density is at least
\begin{align*}
\liminf_{n\to\infty}\frac{\E{}{\mu_n(D\cap K)}}{\mu_n(K)}
&\geq
\liminf_{n\to\infty}
\left(1-\frac1e-\varepsilon\right)
\cdot
\frac{n-r-(\lceil\sqrt n\rceil+1)\cdot(\log_2 n+1)-C_\varepsilon}{n}\\
&=1-\frac1e-\varepsilon.
\end{align*}
Since $\varepsilon>0$ was arbitrary, the lower density is at least $1-1/e$.
\end{proof}

\section{A Tight Deterministic Algorithm under Multiple Orders}\label{apx:paired_ps_algo_lower_density}
In this section, we adapt the \nameref{box:paired_ps_algo} to achieve the tight lower density $1/2$. The idea is the same as in the \nameref{box:rand_ps_algo_var}: choose the block size dynamically, according to the smallest available rank in the current focus.

For a nonempty set $S$ of integers and an order index $i\in[k]$, let
\[
\rho_i(S) \defeq \min\set{n : S\cap[n]_i\neq\emptyset},
\]
so that $\rho_i(S)$ is the smallest rank of an element of $S$ under order $\sigma_i$.

\begin{mybox}[label={box:paired_ps_algo_var},nameref={variable-block paired patient-scope algorithm}]
{The variable-block paired patient-scope algorithm}
\begin{itemize}
    \item Let $N_B=0$ be the current number of blocks.
    \item Initially, at time $t=0$, set the scope size $s_0=1$, set the focus-change
    count $\tau=1$, and create no blocks.
    \item In each time step $t$, perform the following operations.
    \begin{itemize}
        \item Set the scope size $s_t = s_{t - 1}$.
        \item Receive a new integer from the adversary, and update the sets of consistent and critical languages accordingly.
        \item If the focus becomes inconsistent, run the \nameref{box:bt_algo}.
	        \item Otherwise, if the focus $F$ is still consistent and the adversary has announced $2^\tau$ unreserved elements, increase the scope size by one and update
	        the focus accordingly. If the focus changes, increase $\tau$ by one.
	        \item Let $x_t$ be the adversary's announced number in this step.
	        \begin{itemize}
	            \item If $x_t\in B_r$ for some block $B_r$, output $P_r(x_t)$. If
	            $P_r(x_t)$ has already been announced, see the next step.
	            \item If $x_t$ has not yet been reserved or $P_r(x_t)$ has already been announced, then set
	            \[
	            m \defeq \min_{i\in[k]}\rho_i(W\cap F).
	            \]
	            Define
	            \[
	            B_{N_B+1} \defeq \bigcup_{i \in [k]} (W \cap F)[\lceil\sqrt{m}\rceil]_i.
	            \] Namely, block $B_{N_B+1}$ contains top $\lceil\sqrt{m}\rceil$ available elements in the current focus $F$ under each order. If $\abs{B_{N_B + 1}}$ is odd, we further add the smallest element in $(W \cap F)\setminus B_{N_B + 1}$ under $\sigma_1$ to $B_{N_B + 1}$. Partition $B_{N_B+1}$ into pairs $P_{N_B+1}$ by applying
	            \cref{thm:low_crossing} to the induced orders of
	            $\sigma_1, \dots, \sigma_k$ on $B_{N_B+1}$. Mark all elements of $B_{N_B+1}$ as
	            reserved, output the first element of $B_{N_B+1}$ under $\sigma_1$. Increase $N_B$
	            by $1$.
        \end{itemize}
    \end{itemize}
\end{itemize}
\end{mybox}

The following lemma is analogous to \cref{lem:paired_ps_algo_validity}, so we state it without proof.

\begin{lemma}[Validity]\label{lem:paired_ps_algo_var_validity} The \nameref{box:paired_ps_algo_var} can generate in the limit the true language $K$.
\end{lemma}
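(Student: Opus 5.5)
We follow the proof of \cref{lem:paired_ps_algo_validity} essentially verbatim; the only new point is that blocks now have variable size, which needs a short finiteness check.

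\emph{Outputs are fresh.} By construction, every output is either $P_r(x_t)$ when it is still unannounced, or the first element under $\sigma_1$ of a freshly reserved block. A fresh block consists only of available elements of $W\cap F$. Hence every output is an integer not previously announced by either party.

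\emph{Eventually the focus lies in $K$.} The scope-update rule and the \nameref{box:bt_algo} are identical to those of the \nameref{box:ps_algo}. The only change is that the scope advances once the adversary has announced $2^\tau$ unreserved elements, rather than after $2^\tau$ steps without a focus change. We must check that this threshold is still reached while the focus stays consistent. Each unreserved announcement triggers a new block, so if the adversary announced only finitely many unreserved elements from some point on, all of $K$ would eventually be covered by finitely many blocks plus finitely many integers. This is impossible, because each block is finite and $K$ is infinite. Therefore the argument of \cref{lem:ps_algo_valid}, via \cref{lem:criticality}, shows the following. There is a time $t_0$ after which $K=L_{i^*}$ is permanently critical and the scope size is always at least $i^*$. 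From then on every focus $F$ is a subset of $K$, every block created after $t_0$ lies in $W\cap F\subseteq K$, and so every output drawn from such a block is in $K$.

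\emph{Old blocks contribute only finitely many outputs.} Only finitely many blocks $B_1,\dots,B_{N}$ exist at time $t_0$. Each one is finite: its size is at most $k\lceil\sqrt{m}\rceil+1$ for the value of $m$ used when it was created. An output taken from an old block $B_r$ is $P_r(x_t)$ with $x_t\in B_r$, so it is an element of $B_r$ that was previously unannounced. Each element can be output at most once, so old blocks account for at most $\sum_{r\le N}\abs{B_r}<\infty$ outputs. The other kind of output after $t_0$ is the first element of a block created after $t_0$, and that lies in $K$.

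Putting these together, all but finitely many outputs are fresh elements of $K$, which is exactly generation in the limit.

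The main point to verify is the second step: that the modified scope-advance condition still lets the scope eventually reach $i^*$, and that backtracking cannot push it below $i^*$ once $K$ is permanently critical. This is inherited from \cref{lem:ps_algo_valid}, together with the observation above that unreserved announcements occur infinitely often.
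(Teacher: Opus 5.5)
Your outer structure (outputs are fresh; eventually every focus and every new block lies in $K$; only finitely many finite old blocks remain) matches the paper. The paper states this lemma without proof, by analogy with \cref{lem:paired_ps_algo_validity}, which in turn just asserts the scope claim ``as in the deterministic proof.'' You were right to flag the scope-advance step, but your justification of it is wrong. You claim that finitely many unreserved announcements would leave $K$ covered by finitely many blocks. That assumes blocks are created only by unreserved announcements. In this algorithm a block is also created whenever $x_t\in B_r$ and $P_r(x_t)$ has already been announced. Every block creation outputs the $\sigma_1$-first element $e$ of the new block. Its partner $P(e)$ is then reserved and unannounced, and announcing it creates yet another block. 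So the adversary can produce infinitely many blocks, and eventually reserve everything in the focus, without ever announcing an unreserved element.

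No better argument can fix this step, because it is false for the advance rule as written. Let $L_1=\Z^+$. The adversary announces some $x_1$, and from then on always announces $P_r(e_r)$, where $e_r$ is the element output when the newest block $B_r$ was created. Each such announcement is reserved and its partner has already been announced, so it creates a new block. The count of unreserved announcements stays at $1<2^{\tau}=2$, and $L_1$ never becomes inconsistent. Hence the scope stays at $1$, $\tau$ never changes, and the focus is $L_1$ forever. The run therefore depends only on $L_1$ and the orders. This lets the adversary fix $K$ afterwards: take $K=\set{x_1}\cup\set{P_r(e_r) : r\geq 1}$, which is infinite, place it at index $2$, and choose the other languages arbitrarily. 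Since blocks are disjoint and $e_r\neq P_r(e_r)$, every output $e_r$ lies outside $K$, so $K$ is never generated.

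The lemma needs a change to the algorithm, for instance advancing the scope after $2^\tau$ \emph{block creations} with the focus unchanged. With that rule your finiteness argument becomes correct. If only finitely many blocks were ever created, only finitely many integers would be reserved. The adversary would then have to announce infinitely many unreserved elements of $K$, each of which creates a block. The charging argument behind \cref{lem:paired_ps_algo_var_incmp_heads} only uses that $2^{\tau'}$ blocks were reserved while the focus was fixed, so it survives this change.
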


As in the proof of \cref{thm:paired_ps_algo_lower_density_weaker}, we may ignore the time steps before the algorithm starts generating $K$, identify $K$ with $\Z^+$, and update each $\sigma_i$ accordingly.

For a block $B_r$, let $m_r$ be the value of $m$ when $B_r$ is created, and let $b_r=\lceil\sqrt{m_r}\rceil$. Define its \emph{head under $\sigma_i$} to be the set $B_r[b_r]_i$. During any consecutive interval with no switch loss, the heads of newly reserved blocks are strictly above the heads of older reserved blocks under $\sigma_i$.

Fix a prefix $[n]_i$. A head $B_r[b_r]_i$ is \emph{complete within $n$} if $B_r[b_r]_i\subseteq[n]_i$; otherwise it is \emph{incomplete within $n$}. Each integer in $[n]_i$ that was neither announced nor reserved before the algorithm starts generating $K$ falls into one of the following two cases:
\begin{enumerate}
    \item it belongs to a block $B_r$ with a complete head $B_r[b_r]_i$, or is the trigger of such a block;
    \item it belongs to a block $B_r$ with an incomplete head $B_r[b_r]_i$, or is the trigger of such a block.
\end{enumerate}

We first count the number of blocks involved in the second case.

\begin{lemma}\label{lem:paired_ps_algo_var_incmp_heads} For any prefix $[n]_i$, there are at most $\log_2 n + 1$ blocks $B_r$ whose heads $B_r[b_r]_i$ are incomplete within $n$ and that satisfy either $B_r\cap[n]_i\neq\emptyset$ or the trigger of $B_r$ lies in $[n]_i$.
\end{lemma}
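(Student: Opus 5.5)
We follow the proof of \cref{lem:paired_ps_algo_incmp_heads} (and its variable-block analogue \cref{lem:rand_ps_algo_var_incmp_blks}). The approach is a charging argument: after the first counted block, each further counted block is created only after a switch loss, and each such switch loss can be charged to $2^{\tau'}$ blocks whose heads lie entirely inside $[n]_i$. We then sum these charged heads against $n$.

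\emph{Step 1 (monotonicity).} Fix $i$ and $n$, and consider the first reserved block whose head $B_r[b_r]_i$ is incomplete within $n$. During any subsequent interval with no switch loss, the focus only shrinks or stays fixed, and reserved elements are removed from $W$. Hence the smallest $\sigma_i$-rank of $W\cap F$ is nondecreasing. Each new head consists of the first $b_{r'}$ elements of $W\cap F$ under $\sigma_i$, so it lies strictly above all earlier heads under $\sigma_i$.

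Once one head extends beyond $[n]_i$, the $(b_r{+}1)$-st smallest element of $W\cap F$ under $\sigma_i$ at that time is outside $[n]_i$. So at the next creation, all of $W\cap F$ lies outside $[n]_i$ except for at most the leftover elements, and those leftovers were all absorbed into the earlier head. This is the step I expect to need the most care. Here the block size varies, so $b_{r'}$ could exceed $b_r$. Even so, a later block reserved in the same stretch consists of top elements of a set $W\cap F$ whose $\sigma_i$-minimum already lies outside $[n]_i$, since the elements in $[n]_i$ were reserved by the incomplete head. Therefore the block meets $[n]_i$ only through other-order components or its trigger.

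To make this airtight, I would argue directly that the only way a later block in the same stretch is counted is through a trigger in $[n]_i$ or through elements of $W\cap F$ in $[n]_i$. A trigger in $[n]_i$ is an unreserved adversary announcement, which either lies in the focus, contradicting the fact that $W\cap F$ has no elements in $[n]_i$, or is a switch loss. So, as in \cref{lem:paired_ps_algo_incmp_heads}, a new counted block requires a switch loss $\ell$ that either is an unreserved integer in $[n]_i$ triggering the block, or makes $\rho_i(W\cap F)\le n$ after backtracking.

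\emph{Step 2 (charging).} Let such a switch loss move the focus from $L_q$ back to $L_p$. While the focus advanced away from $L_p$, the adversary announced $2^{\tau'}$ unreserved integers, so $2^{\tau'}$ blocks were reserved with focus $L_p$. Their heads under $\sigma_i$ are initial segments of subsets of $W\cap L_p$ at the times they were reserved. Hence they lie $\sigma_i$-below the current minimum of $W\cap F$ (or below $\ell$, which was available in $L_p$), and therefore inside $[n]_i$. Each head has at least one element; in the variable case we need only that. Distinctness of the $\tau$ values across switch losses and disjointness of blocks give $n\ge 2^{\tau_1}+\cdots+2^{\tau_w}\ge 2^w$, so $w\le\log_2 n$. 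Adding the first incomplete head gives the bound $\log_2 n+1$.
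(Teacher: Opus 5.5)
Your proposal is correct and follows essentially the same route as the paper. Both arguments use monotonicity of heads within a switch-loss-free stretch, charge each switch loss that reopens $[n]_i$ to $2^{\tau'}$ nonempty heads lying in $[n]_i$, sum to get $w\le\log_2 n$, and add one for the first incomplete head. Your hedge that a later block might meet $[n]_i$ ``through other-order components'' is unnecessary, and the worry about varying block sizes goes away with it. Every element of a later block, in any order's component, is drawn from $W\cap F$, and $W\cap F$ has no elements in $[n]_i$ once the incomplete head has been reserved.
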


\begin{proof}[Proof of \cref{lem:paired_ps_algo_var_incmp_heads}]
As in the proof of \cref{lem:paired_ps_algo_incmp_heads}, after the first block with an incomplete head, each later counted block corresponds to a unique switch loss $\ell$: either $\ell$ is an unreserved element of $[n]_i$ that triggers the new block, or the switch makes the first element of $W\cap F$ under $\sigma_i$ lie in $[n]_i$. In both cases, we charge $\ell$ to $2^{\tau^\prime}$ heads reserved by the \nameref{box:paired_ps_algo_var} that lie in $[n]_i$.

If there are $w$ such switch losses $\ell_1,\ell_2,\ldots,\ell_w$ with corresponding focus-change counts $\tau_1,\tau_2,\ldots,\tau_w$, then disjointness of the charged sets implies
\[
n \geq 2^{\tau_1}+2^{\tau_2}+\cdots+2^{\tau_w}.
\]
Since the $\tau_i$ are distinct positive integers, we have $n\geq 2^1+\cdots+2^w\geq 2^w$. Taking logarithms gives $w\leq \log_2 n$. The first block with an incomplete head accounts for the additional $1$.
\end{proof}

Next, we compute the algorithm's density among integers in the first case.

\begin{lemma}\label{lem:paired_ps_algo_var_cmp_heads}
Fix any $\varepsilon>0$. There is a constant $C_\varepsilon$ such that, for any prefix $[n]_i$, if $F_{i,n}$ is the number of integers in the first case, then the number of integers in the first case that are first announced by the algorithm is at least
\[
\left(\frac12-\varepsilon\right)\cdot (F_{i,n}-C_\varepsilon).
\]
\end{lemma}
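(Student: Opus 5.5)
We follow the proof of \cref{lem:rand_ps_algo_var_cmp_blks} and replace the Ranking guarantee by the pairing guarantee of \cref{lem:paired_ps_algo_cmp_heads}. The plan is to prove a per-block bound, controlled by the head size $b_r$, for every block $B_r$ whose head $B_r[b_r]_i$ is complete within $n$. We then show that blocks with small $b_r$ contribute only a bounded number of first-case integers, and sum over blocks.

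\textbf{Per-block bound.} Fix a block $B_r$ with complete head under $\sigma_i$. By construction, $|B_r|\le k b_r+1$. By \cref{thm:low_crossing} applied to the induced orders on $B_r$, the cut $([n]_i,\Z^+\setminus[n]_i)$ restricted to $B_r$ crosses at most $4(kb_r+1)^{1-1/k}$ pairs. An uncrossed pair of $P_r$ is either entirely inside $[n]_i$ or entirely outside it.

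Each pair entirely inside $[n]_i$ contributes at least one integer first announced by the algorithm. When the adversary announces one element of a pair whose partner is still unannounced, the algorithm announces the partner. A pair could also have both elements announced by the algorithm, which only helps. The one exception is the first element of the block, which the algorithm outputs directly. Its partner is then announced by the adversary with no reply inside the pair, yet that pair is still half won.

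The associated integers of $B_r$ inside $[n]_i$ consist of:
\begin{itemize}
\item the integers in fully-inside pairs, of which at least half are won;
\item at most $2\cdot 4(kb_r+1)^{1-1/k}$ integers from crossed pairs;
\item at most one trigger.
\end{itemize}
The complete head guarantees at least $b_r$ associated integers. Hence the winning fraction among the associated integers is at least
\[
\phi(b_r)\defeq\frac12\Bigl(1-\frac{4(kb_r+1)^{1-1/k}+1}{b_r}\Bigr),
\]
exactly as in \cref{lem:paired_ps_algo_cmp_heads}, with $m$ replaced by $b_r$. Since the numerator is $O(b_r^{1-1/k})$, we have $\phi(b)\to\frac12$ as $b\to\infty$. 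So there is $B=B(\varepsilon)$ with $\phi(b)\ge\frac12-\varepsilon$ for all $b\ge B$.

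\textbf{Small blocks.} The step needing care is bounding the total number of first-case integers coming from blocks with $b_r<B$. As in \cref{lem:rand_ps_algo_var_cmp_blks}, we would use that the values $m_r$ determine $b_r=\lceil\sqrt{m_r}\rceil$, so $b_r<B$ forces $m_r\le (B-1)^2$.

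The obstacle is that the $m_r$ need not be distinct in the multi-order setting. The quantity $m_r=\min_j\rho_j(W\cap F)$ is a rank in some order $\sigma_j$. However, the element realizing rank $m_r$ in order $j$ lies in $W\cap F$ and is reserved into $B_r$, since it is the top element under $\sigma_j$. So each block consumes the element $\sigma_{j}^{-1}(m_r)$ for some $j\in[k]$. Consequently, at most $k$ blocks share a given value $m_r$: each pair $(j,m_r)$ names a single element, and elements are reserved at most once.

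Therefore the number of blocks with $b_r<B$ is at most $k(B-1)^2$. Each such block has at most $k(B-1)+2$ associated integers, counting the possible parity element and the trigger. We set
\[
C_\varepsilon\defeq k(B-1)^2\bigl(k(B-1)+2\bigr),
\]
which depends only on $\varepsilon$ and $k$.

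\textbf{Conclusion.} The first-case integers not attached to small blocks number at least $F_{i,n}-C_\varepsilon$. Each comes from a complete-head block with $b_r\ge B$, which wins at least a $(\frac12-\varepsilon)$ fraction of its associated integers in $[n]_i$. Summing these per-block bounds over the disjoint blocks gives the claimed $(\frac12-\varepsilon)(F_{i,n}-C_\varepsilon)$.
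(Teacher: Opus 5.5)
Your proposal is correct and is essentially the paper's proof. It uses the same per-block bound $\phi(b_r)$, coming from the low-crossing pairing of a block with a complete head, and the same control of small blocks via the injection $B_r\mapsto(j,m_r)$; your count $k(B-1)^2$ coincides exactly with the paper's $\sum_{b=1}^{B-1}k(2b-1)$.

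One minor slip: a crossed pair has exactly one element in $[n]_i$, so crossed pairs cost at most $4(kb_r+1)^{1-1/k}$ integers, not twice that. The smaller count is the one your formula for $\phi(b_r)$ actually requires, although the factor of $2$ would not change the limit $1/2$.
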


\begin{proof}[Proof of \cref{lem:paired_ps_algo_var_cmp_heads}]
For $b\geq 1$, define
\[
\phi(b)=\frac12\cdot\left(1-\frac{4\cdot(kb+1)^{1-1/k}+1}{b}\right).
\]
Then $\phi(b)\to 1/2$ as $b\to\infty$. Hence, there exists an integer $B=B(\varepsilon)$ such that $\phi(b)\geq 1/2-\varepsilon$ for every $b\geq B$.

Consider any block $B_r$ with a complete head $B_r[b_r]_i$. Since the head is complete, at least $b_r$ integers in $[n]_i$ are associated with $B_r$. The block has size at most $kb_r+1$, so by \cref{thm:low_crossing} it contributes at most
\[
4\cdot\left(\frac{kb_r+1}{2}\right)^{1-1/k}\leq 4\cdot(kb_r+1)^{1-1/k}
\]
crossed pairs. If we exclude the integers in crossed pairs and the trigger, then the algorithm first announces half of the remaining integers. Therefore, the algorithm's density among the integers associated with $B_r$ is at least $\phi(b_r)$.

It remains to control the contribution of blocks with small $b_r$. At the time $B_r$ is created, some order $j\in[k]$ satisfies $m_r=\rho_j(W\cap F)$, and the corresponding first available element under $\sigma_j$ is reserved in $B_r$. For each pair $(j,m_r)$, this can happen at most once. Thus, for a fixed $b\geq 1$, there are at most $k(2b-1)$ blocks with $b_r=b$. Each such block contributes at most $kb+2$ integers in the first case, including its trigger. Therefore, the number of integers in the first case connected to blocks with $b_r<B$ is at most
\[
C_\varepsilon:=\sum_{b=1}^{B-1}(kb+2)\cdot k(2b-1).
\]
All remaining integers in the first case come from blocks with $b_r\geq B$, and hence contribute a fraction at least $1/2-\varepsilon$. The result follows.
\end{proof}

Finally, we can compute the lower density of the \nameref{box:paired_ps_algo_var}.

\begin{proof}[Proof of \cref{thm:paired_ps_algo_lower_density}]
Fix an order index $i$ and an arbitrary prefix $[n]_i$. Let $r$ be the number of integers that have already been announced or reserved before the algorithm starts generating $K$; by \cref{lem:paired_ps_algo_var_validity}, this happens after finitely many time steps.

By \cref{lem:paired_ps_algo_var_incmp_heads}, there are at most $\log_2 n+1$ blocks involved in the second case. For each such block, the integers it contributes to the second case consist of at most one trigger, plus the elements of $B_r\cap[n]_i$. If $B_r\cap[n]_i\neq\emptyset$, then $m_r\leq n$, so $b_r\leq\lceil\sqrt n\rceil$ and $\abs{B_r}\leq k\lceil\sqrt n\rceil+1$. Otherwise, the block contributes only its trigger. Therefore, the number of integers in the second case is at most
\[
(k\lceil\sqrt n\rceil+2)\cdot(\log_2 n+1).
\]
It follows that the number $F_{i,n}$ of integers in the first case is at least
\[
n-r-(k\lceil\sqrt n\rceil+2)\cdot(\log_2 n+1).
\]

Fix any $\varepsilon>0$. By \cref{lem:paired_ps_algo_var_cmp_heads}, for the corresponding constant $C_\varepsilon$, we have
\[
\mu_{i,n}(D\cap K)
\geq
\left(\frac12-\varepsilon\right)
\cdot
\left(n-r-(k\lceil\sqrt n\rceil+2)\cdot(\log_2 n+1)-C_\varepsilon\right).
\]
Thus the algorithm's lower density under the $k$ orders is at least
\begin{align*}
\min_{i\in[k]}\liminf_{n\to\infty}\frac{\mu_{i,n}(D\cap K)}{\mu_{i,n}(K)}
&\geq
\min_{i\in[k]}\liminf_{n\to\infty}
\left(\frac12-\varepsilon\right)
\cdot
\frac{n-r-(k\lceil\sqrt n\rceil+2)\cdot(\log_2 n+1)-C_\varepsilon}{n}\\
&=\frac12-\varepsilon.
\end{align*}
Since $\varepsilon>0$ was arbitrary, the lower density is at least $1/2$.
\end{proof}

\section{A Tight Randomized Algorithm under Multiple Orders}\label{apx:rand_grouped_ps_algo_lower_density}

The algorithm maintains a family of pairwise disjoint blocks $B_1, B_2, \dots$. Each block can be partitioned into pairwise disjoint groups $G_{r, 1}, G_{r, 2}, \dots$. Each group $G_{r, r^\prime}$ is equipped with an independent uniformly random permutation $\pi_{r, r^\prime}$.

\begin{mybox}[label={box:rand_grouped_ps_algo},nameref={randomized grouped patient-scope algorithm}]
{The randomized grouped patient-scope algorithm}
\begin{itemize}
    \item Let $N_B=0$ be the current number of blocks.
    \item Initially, at time $t=0$, set the scope size $s_0=1$, set the focus-change
    count $\tau=1$, and create no blocks.
    \item In each time step $t$, perform the following operations.
    \begin{itemize}
        \item Set the scope size $s_t = s_{t - 1}$.
        \item Receive a new integer from the adversary, and update the sets of consistent and critical languages accordingly.
        \item If the focus becomes inconsistent, run the \nameref{box:bt_algo}.
        \item Otherwise, if the focus $F$ is still consistent and the adversary has announced $2^\tau$ unreserved elements, increase the scope size by one and update
        the focus accordingly. If the focus changes, increase $\tau$ by one.
        \item Let $x_t$ be the adversary's announced number in this step.
        \begin{itemize}
            \item If $x_t\in G_{r, r^\prime}$ for some group $G_{r, r^\prime}$, then output the first number in the permutation $\pi_{r, r^\prime}$ that has not been announced by either party. If no such number exists, see the next step.
            \item If $x_t$ has not yet been reserved or $x_t$ is the last unannounced number in its group $G_{r, r^\prime}$, then let
            \[
            m \defeq \min_{i \in [k]}\rho_i(W \cap F)
            \]
            Define
            \[
            B_{N_B+1} \defeq \bigcup_{i \in [k]} (W \cap F)\left[\lceil m^{2/3}\rceil\right]_i.
            \] Namely, block $B_{N_B+1}$ contains top $\lceil m^{2/3}\rceil$ available elements under each order. We further repeatedly add the smallest element in $(W \cap F)\setminus B_{N_B + 1}$ under $\sigma_1$ to $B_{N_B + 1}$ until $\abs{B_{N_B + 1}}$ is divisible by $\lceil m^{1/3}\rceil$. Construct a balanced partition $P_{N_B+1}$ of $B_{N_B + 1}$ into groups of size $\lceil m^{1/3}\rceil$ by applying
            \cref{thm:low_crossing} to the induced orders of
            $\sigma_1, \dots, \sigma_k$ on $B_{N_B+1}$. For each group $G_{N_B + 1, r^\prime}$ in $B_{N_B + 1}$, generate an independent uniformly random permutation $\pi_{N_B + 1, r^\prime}$. Mark all elements of $B_{N_B+1}$ as
            reserved, output the first element of $\pi_{N_B+1, 1}$. Increase $N_B$ by $1$.
        \end{itemize}
    \end{itemize}
\end{itemize}
\end{mybox}

The following lemma is analogous to the validity lemmas above, so we state it without proof.

\begin{lemma}[Validity]\label{lem:rand_grouped_ps_algo_validity}
    The \nameref{box:rand_grouped_ps_algo} can generate in the limit the true language $K$.
\end{lemma}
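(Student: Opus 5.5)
I follow the template of \cref{lem:rand_ps_algo_validity} and \cref{lem:paired_ps_algo_validity}. There are three parts. First, the scope eventually contains the true language and every focus lies inside $K$. Second, only finitely many blocks are created before that point, and each is finite. Third, every output is fresh by construction.

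\textbf{Step 1: the focus eventually lies inside $K$.} By \cref{lem:criticality}, $K=L_{i^*}$ is critical at all sufficiently large time steps. The scope update rule of the \nameref{box:rand_grouped_ps_algo} is literally the same as in the \nameref{box:rand_ps_algo}: backtracking on inconsistency, and patient growth after $2^\tau$ unreserved announcements. So the argument of \cref{lem:ps_algo_valid} carries over unchanged. The scope size can only decrease when some language in the scope becomes inconsistent. Once $K$ is permanently critical, a backtrack from a scope of size at least $i^*$ stops at an index at least $i^*$. The scope therefore eventually stays at least $i^*$, and from some time $t_0$ on every focus $F$ is critical with index at least $i^*$, so $F\subseteq K$.

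One point needs separate checking: the scope must actually keep growing past $i^*$ if it is below it. This requires the adversary to announce $2^\tau$ unreserved integers infinitely often while no switch loss occurs. If the adversary stopped announcing unreserved integers, then from some point it would only announce elements of existing blocks. But every announcement of an unreserved element, and every exhaustion of a group, creates a new block. The adversary must also enumerate all of $K$, while only finitely many blocks exist at any time, so fresh unreserved elements of $K$ are announced infinitely often. This is the step I expect to require the most care, since the wording of the algorithm ties growth to unreserved announcements. I would mirror the (implicit) reasoning already used in \cref{lem:rand_ps_algo_validity}, citing it as analogous.

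\textbf{Step 2: finitely many bad blocks, each finite.} After $t_0$, each new block $B_{N_B+1}$ is built from $(W\cap F)$ together with extra padding elements, which are also drawn from $(W\cap F)$. Hence the block is contained in $F\subseteq K$, and so are all its groups. Outputs are always taken from the permutation of a group, or are the first element of a newly created block. Therefore, after $t_0$, the only outputs possibly outside $K$ come from groups created before $t_0$. There are finitely many such blocks. Each is finite: its size is at most $k\lceil m^{2/3}\rceil+\lceil m^{1/3}\rceil$ for the finite $m$ at its creation. Since each output announces a distinct element, these blocks yield only finitely many invalid outputs.

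\textbf{Step 3: freshness.} Every output is either the first not-yet-announced element of some $\pi_{r,r'}$, or an element of a freshly built block drawn from $W$, so it has not been announced by either party. Combining the three steps, there exists $t_1$ after which every output lies in $K$ and is new. This is exactly generation in the limit.
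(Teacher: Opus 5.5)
\textbf{The gap: scope growth can stall.} Your three steps follow the same template as the paper, which states this lemma without proof by analogy with \cref{lem:rand_ps_algo_validity,lem:paired_ps_algo_validity}. Steps~2 and~3 are fine. So is the part of Step~1 showing that the scope never drops below $i^*$ once $K$ is permanently critical. The problem is the point you flagged yourself: your argument that unreserved integers are announced infinitely often does not go through. That only finitely many blocks exist \emph{at any given time} does not bound the reserved set over the whole run. As you note, an announcement that finds its group exhausted creates a new block with no unreserved announcement. So the reserved region can sweep upward through the entire focus and reserve every element of $K$ before the adversary announces it. The paper's analogy skips this point too. In \cref{lem:ps_algo_valid} growth is driven by elapsed time steps, which cannot stall. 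Here it is driven by unreserved announcements, which the adversary controls.

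\textbf{The conclusion is false for the algorithm as written.} Take $k=1$, $\sigma_1$ the identity, $L_1=\Z^+$, and $K=L_2=\set{1}\cup\set{2,4,6,\ldots}$; the other $L_j$ are irrelevant. The adversary announces $1$ at time~$1$, its only unreserved announcement ever, and the algorithm reserves $\set{2,3}$. Afterwards the adversary announces $2$ at every step, except that each even $e\geq 4$ is inserted after at least $e$ further announcements of $2$. From time~$3$ on, $\set{2,3}$ is exhausted. So each announcement of $2$ falls through (``If no such number exists, see the next step'') and creates a block made of the next integers above the reserved region. Hence each $e$ is already reserved when announced, and this oblivious adversary enumerates $K$. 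Only one unreserved element is ever announced and $\tau=1$, so the growth condition never fires. $L_1$ never becomes inconsistent, so the focus is $L_1$ forever. With the $k$-d-tree partition of \cref{apx:low_crossing}, every group is a run of at least two consecutive integers. So the output at each block creation is an odd integer greater than $1$, hence outside $K$, with conditional probability at least $1/3$, and invalid outputs occur infinitely often almost surely.

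\textbf{A possible repair.} Trigger scope growth by the number of blocks created since the last focus change. Then your finiteness argument becomes correct. If only finitely many blocks were ever created, the reserved set would be finite, and enumerating the infinite $K$ would force an unreserved announcement, which creates a block. The charging in \cref{lem:rand_grouped_ps_algo_incmp_heads} is unaffected, since it only uses that $2^{\tau'}$ blocks were reserved while the focus was $L_p$.
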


As in the proof of \cref{thm:paired_ps_algo_lower_density}, we may ignore the time steps before the algorithm starts generating $K$, identify $K$ with $\Z^+$, and update each $\sigma_i$ accordingly.

For a block $B_r$, let $m_r$ be the value of $m$ when $B_r$ is created, and define
\[
b_r=\lceil m_r^{2/3}\rceil
\qquad\text{and}\qquad
g_r=\lceil m_r^{1/3}\rceil.
\]
Thus $g_r$ is the size of every group in the partition of $B_r$. Define the \emph{head of $B_r$ under $\sigma_i$} to be the set $B_r[b_r]_i$. During any consecutive interval with no switch loss, the heads of newly reserved blocks are strictly above the heads of older reserved blocks under $\sigma_i$.

Fix a prefix $[n]_i$. A head $B_r[b_r]_i$ is \emph{complete within $n$} if $B_r[b_r]_i\subseteq[n]_i$; otherwise it is \emph{incomplete within $n$}. Each integer in $[n]_i$ that was neither announced nor reserved before the algorithm starts generating $K$ falls into one of the following two cases:
\begin{enumerate}
    \item it belongs to a block $B_r$ with a complete head $B_r[b_r]_i$, or is the trigger of such a block;
    \item it belongs to a block $B_r$ with an incomplete head $B_r[b_r]_i$, or is the trigger of such a block.
\end{enumerate}

We first count the number of blocks involved in the second case.

\begin{lemma}\label{lem:rand_grouped_ps_algo_incmp_heads} For any prefix $[n]_i$, there are at most $\log_2 n + 1$ blocks $B_r$ whose heads $B_r[b_r]_i$ are incomplete within $n$ and that satisfy either $B_r\cap[n]_i\neq\emptyset$ or the trigger of $B_r$ lies in $[n]_i$.
\end{lemma}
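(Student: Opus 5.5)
The plan is to repeat the charging argument of \cref{lem:paired_ps_algo_var_incmp_heads} (itself a copy of \cref{lem:paired_ps_algo_incmp_heads}). The only structural difference in the \nameref{box:rand_grouped_ps_algo} is that heads are now $B_r[b_r]_i$ with $b_r=\lceil m_r^{2/3}\rceil$, and blocks are split into groups rather than pairs. The grouping does not affect when blocks are created or which elements are reserved, so it plays no role in the count.

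\textbf{Step 1: monotonicity of heads.} During any interval with no switch loss, the focus only shrinks (the scope grows), so $W\cap F$ only loses elements. Hence each new head $(W\cap F)[b_r]_i$ starts at or after the previous $\sigma_i$-minimum of $W\cap F$. Every element of an older head was removed from $W$ when that block was reserved. It follows that newly reserved heads lie strictly above older ones under $\sigma_i$.

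Now suppose a block with an incomplete head within $n$ has been reserved. Its head then contains an element outside $[n]_i$. Every later head created in the same switch-free stretch lies beyond that element, so it and the whole block's counted part avoid $[n]_i$. The only way such a block's part could enter $[n]_i$ is via its trigger, but the trigger is the minimum-or-later available element, hence also beyond. I expect this to be the main obstacle: the non-head elements (added from other orders, or as padding under $\sigma_1$) may still lie in $[n]_i$. To handle it, I will argue as the paper does that the lemma counts only blocks meeting $[n]_i$ or whose trigger lies in $[n]_i$. Within such a stretch, the smallest $\sigma_i$-element of $W\cap F$ is already outside $[n]_i$, so any element of $W\cap F$ chosen later, for whatever order, is also outside $[n]_i$.

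\textbf{Step 2: charging.} Consequently, after the first counted block with an incomplete head, each further counted block requires a fresh switch loss $\ell$. This $\ell$ either is an unreserved element of $[n]_i$ that triggers the block, or it causes, via backtracking to $L_p$, the $\sigma_i$-first element of $W\cap F$ to lie in $[n]_i$. In either case, during the earlier progression from $L_p$ to $L_q$ the adversary announced $2^{\tau'}$ unreserved elements, so $2^{\tau'}$ blocks were reserved with focus inside $L_p$. Their heads precede under $\sigma_i$ either $\ell$ or the new first available element, and hence lie in $[n]_i$. Charge $\ell$ to these $2^{\tau'}$ heads. The heads are disjoint, nonempty, and in $[n]_i$, and the $\tau'$ values are distinct across switch losses because $\tau$ increments on every focus change.

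\textbf{Step 3: counting.} Steps 1 and 2 give $n\ge\sum_j 2^{\tau_j}\ge 2^w$, so $w\le\log_2 n$. We may use only the bound of one element per head rather than $b_r$, since the head sizes vary. Adding the initial incomplete block yields at most $\log_2 n+1$ blocks, as claimed.
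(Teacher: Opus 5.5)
Your proposal is correct and follows the paper's proof essentially step for step. The shared steps are: heads are monotone under $\sigma_i$ within switch-free stretches; each counted block after the first requires a fresh switch loss $\ell$; and that loss is charged to $2^{\tau'}$ disjoint, nonempty heads lying in $[n]_i$, which gives $n\ge 2^w$ plus one for the first incomplete head. Your explicit check that the non-head elements, padding, and triggers of later blocks also avoid $[n]_i$ fills in a detail the paper leaves implicit. One wording point: say that the charged blocks are those reserved while the focus was exactly $L_p$, not ``inside $L_p$'', because a block reserved under a strictly smaller focus need not have a head preceding $\ell$ or the new first available element.
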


\begin{proof}[Proof of \cref{lem:rand_grouped_ps_algo_incmp_heads}]
As in the proof of \cref{lem:paired_ps_algo_var_incmp_heads}, after the first block with an incomplete head, each later counted block corresponds to a unique switch loss $\ell$: either $\ell$ is an unreserved element of $[n]_i$ that triggers the new block, or the switch makes the first element of $W\cap F$ under $\sigma_i$ lie in $[n]_i$. In both cases, we charge $\ell$ to $2^{\tau^\prime}$ heads reserved by the \nameref{box:rand_grouped_ps_algo} that lie in $[n]_i$.

If there are $w$ such switch losses $\ell_1,\ell_2,\ldots,\ell_w$ with corresponding focus-change counts $\tau_1,\tau_2,\ldots,\tau_w$, then disjointness of the charged sets implies
\[
n \geq 2^{\tau_1}+2^{\tau_2}+\cdots+2^{\tau_w}.
\]
Since the $\tau_i$ are distinct positive integers, we have $n\geq 2^1+\cdots+2^w\geq 2^w$. Taking logarithms gives $w\leq \log_2 n$. The first block with an incomplete head accounts for the additional $1$.
\end{proof}

Next, we compute the algorithm's density among integers in the first case.

\begin{lemma}\label{lem:rand_grouped_ps_algo_cmp_heads}
Fix any $\varepsilon>0$. There is a constant $C_\varepsilon$ such that, for any prefix $[n]_i$, if $F_{i,n}$ is the number of integers in the first case, then the expected number of integers in the first case that are first announced by the algorithm is at least
\[
\left(1-\frac1e-\varepsilon\right)\cdot (F_{i,n}-C_\varepsilon).
\]
\end{lemma}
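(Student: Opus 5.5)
We follow the template of \cref{lem:rand_ps_algo_var_cmp_blks} and \cref{lem:paired_ps_algo_var_cmp_heads}. We do the accounting block by block, and inside each block group by group. Fix $i$ and $n$, and take a block $B_r$ whose head $B_r[b_r]_i$ is complete within $n$. Its associated integers in $[n]_i$ are the elements of $B_r\cap[n]_i$ together with possibly its trigger. Because the head is complete, $\abs{B_r\cap[n]_i}\geq b_r$. On the other hand, $\abs{B_r}\leq k b_r+g_r$.

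The groups of $B_r$ come from \cref{thm:low_crossing} with group size $g_r$ and $\abs{B_r}/g_r\leq k b_r/g_r+1 = O(k m_r^{1/3})$ groups. So the cut $([n]_i,\Z^+\setminus[n]_i)$ crosses at most $4\,(O(km_r^{1/3}))^{1-1/k}$ groups. These crossed groups contain at most $O(k)\, m_r^{(1-1/k)/3}\cdot m_r^{1/3}$ integers, which is $o(m_r^{2/3}) = o(b_r)$ since $(2-1/k)/3<2/3$.

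Every uncrossed group $G_{r,r'}$ that meets $[n]_i$ lies entirely inside $[n]_i$. Inside such a group, the algorithm runs exactly the Ranking-style strategy of the warm-up: on each announcement in the group it outputs the first unannounced element of $\pi_{r,r'}$. A triggering announcement, either of the block or of a subsequent new block when the group is exhausted, we treat as in \cref{lem:rand_ps_algo_cmp_blks} as an extra adversarial element attached to the group. The adversary is oblivious and the permutations are independent, so \cref{lem:block_vs_per} gives, for each complete group, an expected fraction at least $\phi(g_r)=\bigl(1-\frac1{g_r+1}\bigr)f(g_r+1)$, counting one extra adversarial element per group. The main obstacle is this attribution. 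The lemma's matching argument needs the adversary's order on the group to be fixed in advance, and it needs that the only loss beyond Ranking is one "overflow" announcement per group. I would argue that each group spawns at most one new block when it is exhausted. That new block's trigger lies in the group, so it is counted among the group's $g_r$ elements and absorbed by the $(1-\frac1{g_r+1})$ factor. The block's own trigger costs one further element per block, negligible against $b_r$.

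Summing over the block, the expected number of first announcements by the algorithm among the associated integers is at least
\[
\phi(g_r)\bigl(\abs{B_r\cap[n]_i}-o(b_r)\bigr)-1 \;\geq\; \bigl(\phi(g_r)-o_{m_r}(1)\bigr)\cdot(\text{associated integers}),
\]
using $\abs{B_r\cap[n]_i}\geq b_r$. Since $\phi(g)\to 1-1/e$, there is $M=M(\varepsilon)$ such that every block with $m_r\geq M$ achieves fraction at least $1-1/e-\varepsilon$. For the finitely many small values, we argue as in \cref{lem:paired_ps_algo_var_cmp_heads}. Each block is created with $m_r=\rho_j(W\cap F)$ for some $j$, and the first available element under $\sigma_j$ is reserved, so each pair $(j,m_r)$ occurs at most once. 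Hence at most $kM$ blocks have $m_r<M$, each with at most $k\lceil M^{2/3}\rceil+\lceil M^{1/3}\rceil+1$ associated integers. Setting $C_\varepsilon$ to their total and applying linearity of expectation yields the lemma.
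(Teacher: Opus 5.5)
Your proposal follows the paper's proof: accounting block by block, using \cref{thm:low_crossing} so that crossed groups cost only $o(b_r)$ elements, applying \cref{lem:block_vs_per} to each group lying inside $[n]_i$, paying at most one trigger per block (negligible against $b_r$), and using the uniqueness of the pairs $(j,m_r)$ to bound the blocks with $m_r<M$ and define $C_\varepsilon$.

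One step is misstated, though the conclusion still holds. The concatenated-block bound $\bigl(1-\frac{1}{g_r+1}\bigr)f(g_r+1)$ applies only to the group whose permutation is used when $B_r$ is created. Only there does the algorithm answer an announcement made outside the group (the trigger) with an element of the group. For every other group, the algorithm outputs nothing in it until the adversary has announced one of the group's own elements. So \cref{lem:block_vs_per} applied to the group itself gives only $(g_r-1)f(g_r)$ expected wins, i.e.\ fraction $\bigl(1-\frac{1}{g_r}\bigr)f(g_r)$. This is strictly smaller than the bound you state, so your bound does not follow from the lemma for these groups; this is why the paper treats the first group separately from the rest.

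Relatedly, the exhausting announcement you single out as the main obstacle needs no separate attribution. In the matching instance of \cref{lem:block_vs_per}, applied to a group of size $g_r$, it is the last online vertex, which has no neighbors. Its cost is therefore already inside the factor $1-\frac{1}{g_r}$. In the paper's convention, a block created by exhaustion has no trigger at all.

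The fix is simple. $\bigl(1-\frac{1}{g}\bigr)f(g)\to 1-\frac1e$, and it is also a valid lower bound for the first group, with or without a trigger. Using it for every fully contained group leaves the rest of your argument unchanged, including the choice of $M(\varepsilon)$ and $C_\varepsilon$.
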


\begin{proof}[Proof of \cref{lem:rand_grouped_ps_algo_cmp_heads}]
For a positive integer $m$, let
\[
b(m)=\lceil m^{2/3}\rceil
\qquad\text{and}\qquad
g(m)=\lceil m^{1/3}\rceil,
\]
and define
\[
\phi(m)=
\left(1 - \frac1{g(m)}\right) \cdot f(g(m))\cdot
\left(
1-\frac{4g(m)\cdot\left(\frac{kb(m)+g(m)}{g(m)}\right)^{1-1/k}}{b(m)}
\right).
\]
Since $\left(1 - \frac1{g(m)}\right)\cdot f(g(m))\to 1-1/e$ as $m\to\infty$, and since
\[
\frac{g(m)\cdot\left(\frac{kb(m)+g(m)}{g(m)}\right)^{1-1/k}}{b(m)}
\to 0
\qquad\text{as }m\to\infty,
\]
we have $\phi(m)\to 1-1/e$. Hence, there exists an integer $M=M(\varepsilon)$ such that $\phi(m)\geq 1-1/e-\varepsilon$ for every $m\geq M$.

Consider any block $B_r$ with a complete head $B_r[b_r]_i$. The block has size at most $kb_r+g_r$, and each group has size $g_r$. Thus, by \cref{thm:low_crossing}, the number of groups crossed by the cut $(B_r\cap[n]_i,B_r\setminus[n]_i)$ is at most
\[
4\cdot\left(\frac{kb_r+g_r}{g_r}\right)^{1-1/k}.
\]
Exclude the integers in these crossed groups. Every remaining associated integer either lies in a group that is fully contained in $[n]_i$ or is the trigger of $B_r$.

For any such group $G$, \cref{lem:block_vs_per} implies that the expected number of elements of $G$ first announced by the algorithm is at least $\left(1 - \frac1{\abs{G}}\right)\cdot f(\abs{G})$. Indeed, for the group whose permutation is used when $B_r$ is created, we apply \cref{lem:block_vs_per} to $G$ together with the trigger; for every other group, we apply it directly to $G$.

Therefore, the expected fraction of associated integers of $B_r$ that are first announced by the algorithm is at least
\[
\left(1 - \frac1{g_r}\right) \cdot f(g_r) \cdot \left(1 - \frac{4g_r\cdot\left(\frac{kb_r+g_r}{g_r}\right)^{1-1/k}}{b_r}\right) = \phi(m_r).
\]

It remains to control the contribution of blocks with $m_r<M$. At the time $B_r$ is created, some order $j\in[k]$ satisfies $m_r=\rho_j(W\cap F)$, and the corresponding first available element under $\sigma_j$ is reserved in $B_r$. For each pair $(j,m_r)$, this can happen at most once. Thus, there are at most $k(M-1)$ blocks with $m_r<M$. Each such block contributes at most
\[
k\lceil M^{2/3}\rceil+\lceil M^{1/3}\rceil+1
\]
integers in the first case, including its trigger. Hence the number of integers in the first case connected to blocks with $m_r<M$ is at most
\[
C_\varepsilon:=k(M-1)\cdot\left(k\lceil M^{2/3}\rceil+\lceil M^{1/3}\rceil+1\right).
\]
All remaining integers in the first case come from blocks with $m_r\geq M$, and hence contribute a fraction at least $1-1/e-\varepsilon$ in expectation. The result follows.
\end{proof}

Finally, we can compute the lower density of the \nameref{box:rand_grouped_ps_algo}.

\begin{proof}[Proof of \cref{thm:rand_grouped_ps_algo_lower_density}]
Fix an order index $i$ and an arbitrary prefix $[n]_i$. Let $r$ be the number of integers that have already been announced or reserved before the algorithm starts generating $K$; by \cref{lem:rand_grouped_ps_algo_validity}, this happens after finitely many time steps.

By \cref{lem:rand_grouped_ps_algo_incmp_heads}, there are at most $\log_2 n+1$ blocks involved in the second case. For each such block, the integers it contributes to the second case consist of at most one trigger, plus the elements of $B_r\cap[n]_i$. If $B_r\cap[n]_i\neq\emptyset$, then $m_r\leq n$, so $b_r\leq\lceil n^{2/3}\rceil$, $g_r\leq\lceil n^{1/3}\rceil$, and
\[
\abs{B_r}\leq k\lceil n^{2/3}\rceil+\lceil n^{1/3}\rceil.
\]
Otherwise, the block contributes only its trigger. Therefore, the number of integers in the second case is at most
\[
\left(k\lceil n^{2/3}\rceil+\lceil n^{1/3}\rceil+1\right)\cdot(\log_2 n+1).
\]
It follows that the number $F_{i,n}$ of integers in the first case is at least
\[
n-r-\left(k\lceil n^{2/3}\rceil+\lceil n^{1/3}\rceil+1\right)\cdot(\log_2 n+1).
\]

Fix any $\varepsilon>0$. By \cref{lem:rand_grouped_ps_algo_cmp_heads}, for the corresponding constant $C_\varepsilon$, we have
\[
\E{}{\mu_{i,n}(D\cap K)}
\geq
\left(1-\frac1e-\varepsilon\right)
\cdot
\left(
n-r-\left(k\lceil n^{2/3}\rceil+\lceil n^{1/3}\rceil+1\right)\cdot(\log_2 n+1)-C_\varepsilon
\right).
\]
Thus, the algorithm's lower density under the $k$ orders is at least
\begin{align*}
&\min_{i\in[k]}\liminf_{n\to\infty}\frac{\E{}{\mu_{i,n}(D\cap K)}}{\mu_{i,n}(K)}\\
\geq{}&{}
\min_{i\in[k]}\liminf_{n\to\infty}
\left(1-\frac1e-\varepsilon\right)
\cdot
\frac{
n-r-\left(k\lceil n^{2/3}\rceil+\lceil n^{1/3}\rceil+1\right)\cdot(\log_2 n+1)-C_\varepsilon
}{n}\\
={}&{}1-\frac1e-\varepsilon.
\end{align*}
Since $\varepsilon>0$ was arbitrary, the lower density is at least $1-1/e$.
\end{proof}
 
\end{document}